\documentclass[12pt]{article}
\usepackage{geometry}
\usepackage{amsgen,amsthm,amsmath,amstext,amsbsy,amsopn,amssymb,latexsym,appendix}
\usepackage{algorithm}
\usepackage{algpseudocode}
\usepackage[usenames]{color}
\usepackage{array}
\usepackage{multirow}
\usepackage{multicol}
\usepackage{booktabs} 
\usepackage{diagbox}
\usepackage{graphicx}
\usepackage{hyperref}
\usepackage{listings}
\usepackage{subcaption}
\usepackage{float}
\usepackage{indentfirst}
\usepackage[authoryear,round]{natbib}
\usepackage{setspace}

\definecolor{purple}{rgb}{0.84, 0.17, 0.89}
\definecolor{red2}{rgb}{0.7, 0, 0.1}
\definecolor{blue2}{rgb}{0.1,0.1,0.65}
\usepackage{color}

\def\uk{\supk}
\def\Lkh{\widehat{L}_k}
\def\summ{\sum_{k=1}^m}

\def\estatt{e_{\operatorname{stat},\Theta}}
\def\estatw{e_{\operatorname{stat},w}}

\def\etat{\eta_{\Theta}}
\def\etaw{\eta_w}
\def\QQ{\mathbb{Q}}
\def\rank{\operatorname{rank}}
\def\Tsl{\Theta^*_{\lambda}}
\def\wsl{w^*_{\lambda}}

\def\supk{^{\scriptscriptstyle (k)}}
\def\spacingset#1{\renewcommand{\baselinestretch}%
{#1}\small\normalsize} \spacingset{1}

\title{Prior-Informed Generalizable Low-Rank Representation Learning via Regularized Adversarial Reweighting}
\author{Keyao Zhan$^{}$, Junwei Lu$^{}$\footnotemark[1]\ , Tianxi Cai$^{}$\footnotemark[1]\\[1ex]
$^{}$Department of Biostatistics, Harvard University\\
}
\date{\today}

\usepackage{my_tex}

\begin{document}

\maketitle

\footnotetext[1]{: These authors jointly supervised this work.}
\begin{abstract}

Modern learning problems increasingly leverage data from heterogeneous sources to improve robustness and generalizability under distribution shift, often with high-dimensional parameters admitting low-rank structure. Existing group distributionally robust optimization (group-DRO) approaches optimize against adverse source mixtures, but extending this principle to low-rank matrix estimation with a broad class of losses leads to a challenging nonconvex-concave problem for which methods and theory remain limited. We propose \textbf{P}rior-\textbf{I}nformed generalizable \textbf{L}ow-rank representation \textbf{L}earning via \textbf{A}dversarial \textbf{R}eweighting (\textbf{PILLAR}), which smooths the group-DRO objective by regularizing adversarial source weights toward a user-specified reference distribution while learning a shared low-rank representation. By tempering extreme adversarial reweighting, PILLAR improves estimation stability and reduces the potential conservativeness of group-DRO. Using Burer-Monteiro factorization, we optimize the nonconvex-concave objective by bifactored gradient descent with mirror ascent. We establish linear convergence to a statistical neighborhood of the population saddle point and derive an error bound that separates low-rank estimation from adversarial reweighting. We further show that PILLAR improves on vanilla group DRO when the latter's excess conservativeness dominates source heterogeneity and prior misspecification. Simulations and an electronic health record study of suicide-attempt risk demonstrate improved worst-source and transported predictive accuracy.

\vspace{.5em}

\noindent \textbf{Keywords:} distributionally robust optimization; low-rank matrix estimation; multi-source data integration; nonconvex-concave optimization; representation learning. 
\end{abstract}

\section{Introduction}
\label{sec:intro}


Modern statistical learning problems are increasingly characterized by two structural features: data are collected from multiple heterogeneous sources, and the parameter of interest is high-dimensional yet often intrinsically low-dimensional. In multi-center clinical studies, multi-institutional electronic health records, and distributed learning systems, the training data arise from several source distributions, while the distribution encountered at deployment is often unknown and differs from every observed source. At the same time, the signal of interest frequently admits a low-rank matrix representation, as in multi-task regression, matrix denoising and completion, latent factor models, and representation learning.

Standard empirical risk minimization (ERM) implicitly pools information across all observed sources equally, and this strategy can be fragile: a predictor that performs well on the pooled sample may rely heavily on dominant sources and generalize poorly to underrepresented or shifted ones. Distributionally robust optimization (DRO) addresses this issue by optimizing performance over a collection of plausible test distributions rather than under a single empirical distribution \citep{rahimian2022frameworks,kuhn2025distributionally}.
When the data come explicitly from multiple sources, group DRO builds the uncertainty set as an unknown mixture of the observed source distributions, and the learner minimizes the worst-case risk over the admissible simplex weights assigned to those sources \citep{hashimoto2018fairness,sagawa2019distributionally,mohri2019agnostic,guo2024statistical}. Such a formulation is especially natural in biomedical and federated settings, where the future test population may have a different source composition from the training sample. However, existing group DRO methods often achieve robustness at the cost of being overly conservative for future target populations of interest, because they guard against the entire simplex of possible source mixtures , which can be unnecessarily large when the sources are heterogeneous yet some prior information about the target population of interest is available. 

In parallel, low-rank modeling has become a fundamental approach for high-dimensional statistical estimation. Low-rank structure captures the idea that many high-dimensional objects are governed by a small number of latent factors, shared features, or common directions of variation. Convex approaches based on nuclear-norm and spectral regularization provide a statistically principled relaxation of rank constraints, and nonconvex factorized formulations further improve scalability by parameterizing the matrix directly through low-dimensional factors \citep{argyriou2008convex,koltchinskii2011nuclear,negahban2012restricted,tu2016low,ma2018implicit,chi2019nonconvex}. Despite these advances, the bulk of the low-rank literature is built around pooled or source-averaged objectives. It therefore does not directly answer the question of how to learn a low-rank representation whose performance is stable and generalizable across populations. Existing group DRO theory is largely developed outside the low-rank setting. Combining adversarial source reweighting with low-rank matrix estimation for a broad class of losses leads to a challenging nonconvex--concave problem, for which both optimization methods and statistical theory remain limited. 

In this paper, we propose \emph{Prior-Informed generalizable Low-rank representation Learning via Adversarial Reweighting} (PILLAR). Given heterogeneous multi-source data, PILLAR models the future test distribution as an uncertain source mixture and smooths the group DRO objective by regularizing the adversarial mixture weights toward a user-specified reference distribution while learning a shared low rank representation. The regularization can be viewed as imposing a prior on the source weights, with the tuning parameter controlling how strongly information from this prior is borrowed. By tempering extreme adversarial reweighting, PILLAR improves estimation stability while reducing the potential over-conservativeness of traditional maximin-type group DRO.  This adversarial reweighting connects naturally to the group DRO literature \citep{sagawa2019distributionally}, while the prior-informed regularization also builds a bridge between group DRO and $f$-divergence DRO \citep{duchi2021statistics}.

PILLAR solves the induced nonconvex--concave problem by alternating gradient descent on the low-rank factors and mirror ascent on the adversarial weights. The primal update exploits the factorized matrix structure with a balancing regularization, and the dual update follows the geometry of the simplex induced by the prior-guided penalty. 
We establish that the proposed algorithm converges linearly to a statistical neighborhood of the population saddle point and achieves a sharp Frobenius-norm estimation rate that explicitly separates the low-rank estimation error from the additional cost of adversarial source reweighting. We further show that PILLAR improves upon vanilla group DRO when the excess conservativeness of guarding against the full source simplex dominates the effects of source heterogeneity and prior mis-specification. To our knowledge, these results provide the first rigorous optimization and statistical theory for prior-informed generalizable low-rank representation learning.

\subsection{Our contributions}

Our contributions are fourfold. First, we introduce a general framework for \emph{group-robust low-rank representation learning}. PILLAR learns the low-rank representation itself under uncertainty about the target source composition, seeking latent factors that are stable across plausible deployment mixtures and hence generalizable under future distribution shift. This perspective is especially relevant in multi-source biomedical, federated, and multi-task settings, where the scientific target is often a shared low-dimensional structure whose generalizability across populations matters as much as its estimation accuracy. The proposed framework accommodates a wide range of low-rank problems and extends group-robust learning beyond the low-dimensional parametric models and task-specific losses most commonly analyzed in the existing DRO literature. PILLAR solves a nonconvex--concave estimation problem that couples low-rank structure with adversarial source reweighting, a setting not covered by existing group DRO or low-rank theory.

Second, we introduce a \emph{prior-informed uncertainty set} for source reweighting, establish its theoretical advantages and clarify its connection to $f$-divergence DRO. Rather than allowing the adversary to range freely over the simplex, as in group DRO, PILLAR regularizes source weights toward a reference mixture that encodes prior knowledge about plausible deployment populations, with the regularization level controlling how strongly that prior knowledge is trusted. This yields a principled interpolation between prior-weighted estimation and fully adversarial group reweighting. By tempering extreme adversarial weights, the regularization improves stability and can reduce the excess conservativeness of vanilla group DRO. We formally characterize when this improvement occurs, balancing the conservativeness of vanilla group DRO against source heterogeneity and possible prior mis-specification. We further establish a precise connection between prior-informed group DRO and $f$-divergence DRO, showing that source-level adversarial reweighting can be viewed as a finite-mixture analogue of divergence-based distributional robustness and clarifying how the choice of divergence and regularization jointly determines the effective uncertainty set.

Third, we develop a rigorous optimization analysis for the proposed nonconvex--concave formulation. The interaction between the low-rank factor space and the adversarial weight simplex creates an entangled and delicate geometry. To handle this structure, we propose a bifactored gradient descent--mirror ascent algorithm (BGDMA), which alternates between gradient updates of the low-rank factors and mirror-ascent updates of the source weights. Under regularity conditions and a suitable initialization, we prove that BGDMA converges linearly to a statistical neighborhood of the population saddle point. Existing nonconvex low-rank optimization results apply only to the minimization of pooled objectives, and minimax problems that combine a nonconvex low-rank parameterization with a simplex-constrained adversary have not previously been studied.

Fourth, we establish a sharp and interpretable statistical error rate for nonconvex-concave estimation. Our rate makes explicit how robustness depends on source imbalance, the strength of the prior-guided regularization, and the number of sources. The bound separates the intrinsic low-rank estimation error from the additional error introduced by adversarial source reweighting. In particular, the number of sources enters only through a logarithmic term, which is substantially sharper than the polynomial or linear dependence that often appears in uniform worst-group analyses. Moreover, we make the connection to \(f\)-divergence DRO precise: when the sources degenerate to i.i.d.\ observations, the PILLAR criterion coincides with the Lagrangian form of empirical $f$-divergence DRO, and we prove that the penalty scaling $\lambda_n\asymp\sqrt n$ recovers exactly the statistically calibrated regime of that literature with an $f$-divergence ambiguity radius of order $1/n$. To our knowledge, this precise rate correspondence between prior-informed group DRO and $f$-divergence DRO has not been established previously, even for low-dimensional parametric models.

Comprehensive simulations and an application to suicide-attempt risk modeling with electronic health records (EHR) data demonstrate the practical benefit of PILLAR under source heterogeneity and distribution shift compared to existing methods, including improved worst-source and transported predictive accuracy.

\subsection{Related work}

\noindent\textbf{Low-rank estimation.}
Low-rank regularization is a central tool for controlling model complexity in matrix-valued estimation problems. Convex approaches based on the nuclear norm and related spectral penalties provide statistically principled relaxations of rank constraints, with sharp guarantees in many settings\citep{argyriou2007spectral,argyriou2008convex,koltchinskii2011nuclear,negahban2012restricted}. These methods yield globally solvable convex programs but can be computationally expensive in large-scale settings. To improve scalability, a large literature has developed nonconvex factorized formulations for matrix sensing, matrix completion, reduced-rank regression, blind deconvolution, and related problems \citep{burer2003nonlinear,chi2019nonconvex,ma2018implicit}. 
Several works also show that balancing penalties on the factors are essential for obtaining a stable geometry and a clean translation from factor error to matrix error \citep{tu2016low,yu2020simultaneous}. These methods, however, are largely developed for pooled or source-averaged objectives. Our setting instead couples the low-rank factors with adversarial source-weights, leading to a nonconvex--concave problem that is not covered by existing optimization theory.

\noindent\textbf{Distributionally robust optimization.}
DRO provides a principled framework for learning under uncertainty in the data-generating distribution \citep{rahimian2022frameworks,kuhn2025distributionally}. Two especially influential ambiguity-set constructions are \(f\)-divergence balls and Wasserstein balls \citep{duchi2019variance,duchi2021statistics,mohajerin2018data,kuhn2019wasserstein}. In the finite-sample statistical literature, \(f\)-divergence DRO has been linked to variance regularization and to uniform control over latent subpopulations \citep{duchi2019variance,duchi2021learning}. In parallel, group DRO has emerged as a practical way to improve worst-group performance and to mitigate the failure of empirical risk minimization on minority or shifted subpopulations \citep{hashimoto2018fairness,sagawa2019distributionally}. Recent work has  developed optimization guarantees for group DRO \citep{zhang2023stochastic}, but primarily for convex parameter spaces and optimization complexity, not for  structured high-dimensional estimation with a low-rank matrix parameter. Related work also incorporated prior information or heterogeneous source structure into  DRO formulations \citep{rychener2024wasserstein,cao2026robust}, but does not study  adversarial reweighting together with low-rank representation learning. PILLAR addresses this gap and futher gives a direct connection between prior-informed group DRO and $f$-divergence DRO.

\noindent\textbf{Multi-source data integration and federated learning.}
The idea that the test distribution may be an unknown mixture of source or client distributions also appears in data integration and federated learning. Agnostic federated learning, for example, optimizes performance for any target distribution formed by a mixture of client distributions \citep{mohri2019agnostic}. Federated multi-task learning further exploits task relatedness across clients, and several recommender-system papers adopt matrix-factorization structures in distributed settings \citep{smith2017federated,du2021federated,duan2023adaptive}. These approaches recognize both heterogeneity and structure sharing, but they do not combine a prior-informed group-DRO objective with low-rank matrix estimation or  provide corresponding local linear-convergence and sharp matrix-estimation guarantees.

\subsection{Organization of the paper}

The rest of the paper is organized as follows. Section~\ref{sec:problem_formulation} introduces the general problem setup, defines the PILLAR model, and then develops PILLAR estimation and presents the BGDMA algorithm. Rigorous theoretical guarantees for the PILLAR estimator are given in Section~\ref{sec:theory}. Sections~\ref{sec:sim} and~\ref{sec:real} investigate the numerical performance of PILLAR through comprehensive simulation studies and a real-world application. The proofs of the main theorems, together with the key technical tools on which they rely, are given in the supplement.

\section{PILLAR Methodology}
\label{sec:problem_formulation}
\subsection{Multi-source low-rank learning}
We study a multi-source learning problem with data come from $m$ related but heterogeneous source distributions. For source $k\in[m]:=\{1,\dots,m\}$, we observe i.i.d. samples $\{X_i\supk\}_{i=1}^{n_k}$ from $\PP\supk$, where $X_i\supk$ denote the data for subject $i$ in source $k$ and may include covariates and responses in supervised learning or structured measurements in matrix problems. Sources may correspond to different institutions, populations, locations, or time periods.

Our parameter of interest is a matrix $\Theta\in\RR^{d_1\times d_2}$, representing, for example, a coefficient matrix in multi-task regression, a signal matrix in matrix recovery, or a shared representation across related learning problems. We allow $d_1$ and $d_2$ to be large, and assume that $\Theta$ has low intrisic rank belonging to 
\begin{equation}
    \Xi=\Xi(r):=\left\{\Theta\in\RR^{d_1\times d_2}\,\middle|\,\operatorname{rank}(\Theta)\le r\right\},
\end{equation}
For each source $k$, let $\ell_k(\Theta,x):\RR^{d_1 \times d_2} \times\mathcal X\to\RR$ be the loss of $\Theta$ on an observation $x$ from source $k$. The loss function may be common across groups or source-specific, depending on the application. We define the corresponding population and empirical risks by
\begin{equation}
    L_k(\Theta):=\EE_{X\sim \PP\supk}\bigl[\ell_k(\Theta,X)\bigr], \qquad
    \widehat L_k(\Theta):=\frac{1}{n_k}\sum_{i=1}^{n_k}\ell_k(\Theta,X_i\supk).
\end{equation}
We also write
\( L(\Theta):=\bigl(L_1(\Theta),\dots,L_m(\Theta)\bigr)\in\RR^m,
    \widehat L(\Theta):=\bigl(\widehat L_1(\Theta),\dots,\widehat L_m(\Theta)\bigr)\in\RR^m.
\)
This formulation
covers a broad class of supervised, unsupervised, and matrix recovery problems with heterogeneous source-specific risks and a shared low-rank parameter.

\begin{remark} 
PILLAR is not restricted to a particular loss. For example, exponential-family losses yield reduced-rank generalized linear models, while trace-regression losses cover matrix regression and binary, count, or Gaussian matrix completion. 
Thus, the framework extends beyond the squared-error or cross-entropy loss \citep{mohri2019agnostic,sagawa2019distributionally,guo2024statistical} to a broad class of multi-source low-rank learning problems.
\end{remark}

\subsection{Prior-informed adversarial reweighting}\label{sec:method:prior}
Because the deployment population may not have the same source composition from the training data, we model its distribution as an unknown mixture of the observed source distributions. Let $w\in \Delta_m$ denote the mixture weights, where
$$\textstyle 
\Delta_m:=\left\{w\in\RR^m: \min_{k \in [m]} w_k\ge 0,\sum_{k=1}^m w_k=1\right\} .
$$
For a fixed $\Theta$, $\sum_{k=1}^m w_k L_k(\Theta)$ is the risk under the mixture distribution $\sum_{k=1}^m w_k \PP\supk$. Standard group DRO protects against the worst mixture of the entire simplex, but this can be overly conservative when extreme mixtures are unlikely or prior information about the deployment population is available. We therefore regularize the adversarial weights toward a reference mixture. Let $D:\Delta_m\to\RR$ be a convex regularizer which encodes prior information, $\lambda>0$ a tuning parameter, and define the regularized worst-mixture risk
\begin{equation}
\label{eq:Phi_lambda}
    \Phi_\lambda(\Theta)
    :=\max_{w\in\Delta_m}
    \Bigl\{\summ w_k L_k(\Theta)-\lambda D(w)\Bigr\}.
\end{equation}

\begin{definition}[PILLAR target and PILLAR saddle point]
\label{def:pillar_target}
The \emph{PILLAR target} at regularization level $\lambda$ is the minimizer of $\Phi_\lambda$ over the low-rank class, and the \emph{PILLAR weight} is the mixture that attains the inner maximum at that minimizer:
\begin{equation}
\label{eq:PILLAR_def}
    \Tsl := \argmin{\Theta\in\Xi(r)}\;\Phi_\lambda(\Theta),
    \qquad
    \wsl := \argmax{w\in\Delta_m}\Bigl\{\summ w_k L_k(\Tsl)-\lambda D(w)\Bigr\}.
\end{equation}
We call the pair $(\Tsl,\wsl)$ the \emph{PILLAR saddle point}.
\end{definition}

The inner maximizer in \eqref{eq:PILLAR_def} is unique whenever $D$ is strictly convex on $\Delta_m$, which the KL divergence \eqref{eq:KL_example} is and which Assumption~\ref{ass:w_regularizer_mirror} imposes in general. Both components of the saddle point depend on $\lambda$, and much of what follows concerns how they move with it, so we carry the subscript whenever that dependence is at issue. When $\lambda$ is held fixed, as in the theory of Section~\ref{sec:theory} and its proofs, we drop it and write $(\Theta^*,w^*)$ for the same pair. The low-rank constraint gives this robust objective a concrete structural interpretation: $\Tsl\in \Xi(r)$ forces the model to operate through at most $r$ latent directions shared across sources. In supervised settings, these directions represent common predictive factors, while in matrix recovery and representation-learning they capture shared low-dimensional structure. Combined with adversarial weighting, PILLAR favor directions that remain useful across heterogeneous source mixtures while downweighting source-specific effects that do not transport well.



The regularizer $D(w)$ determines how departures from the reference mixture are penalized. A natural choice is the Kullback–Leibler (KL) divergence from a reference weight $w^{\mathrm{ref}}\in\operatorname{int}(\Delta_m)$ that encodes the prior,
\begin{equation}
        D(w)=D_{\mathrm{KL}}(w\|w^{\mathrm{ref}})
    =\sum_{k=1}^m w_k\log\!\left(\frac{w_k}{w_k^{\mathrm{ref}}}\right)
    \label{eq:KL_example}
\end{equation}
measures the discrepancy between a candidate test mixture $w$ and a reference $w^{\mathrm{ref}}$. The tuning parameter $\lambda$ governs a robustness--efficiency tradeoff. As $\lambda\to 0$,  \eqref{eq:PILLAR_def} approaches group DRO. As $\lambda\to\infty$, the adversarial weights are forced toward $w^\mathrm{ref}$, recovering a prior-weighted pooling. Intermediate values interpolate between these two regimes, allowing robustness to plausible source shifts without guarding equally against all mixtures. 

\begin{figure}[htbp]
    \centering
    \includegraphics[width=0.95\linewidth]{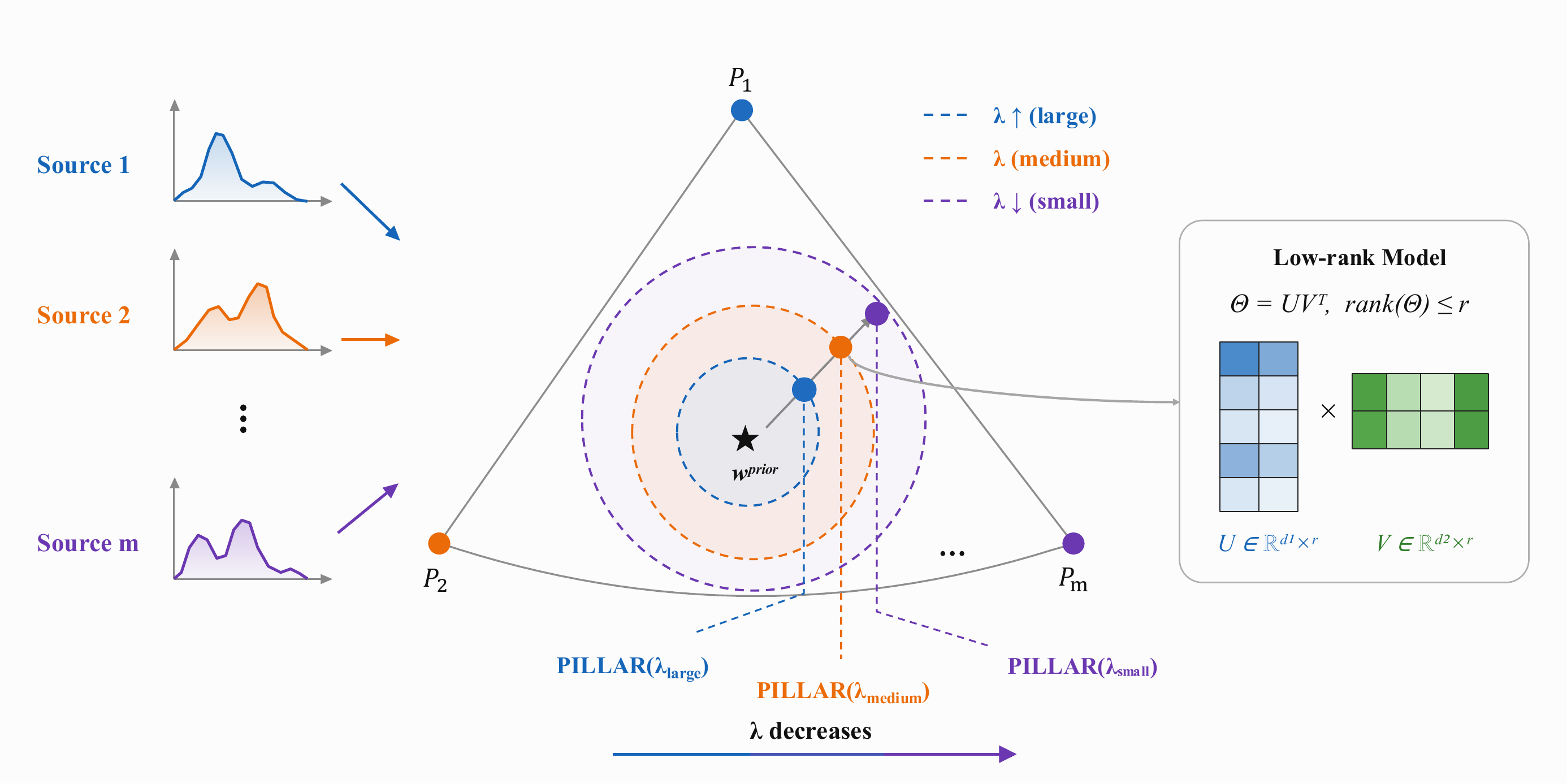}
    \caption{Schematic illustration of the PILLAR method. We collect data from $m$ heterogeneous sources, and estimate a shared low-rank parameterization $\Tsl$, with $\mathrm{rank}\leq r$. PILLAR estimates source weights  over the simplex of source distributions. The tuning parameter $\lambda$ controls regularization toward the prespecified prior weight $w^{\mathrm{prior}}$: a larger $\lambda$ yields stronger shrinkage toward the prior, whereas a smaller $\lambda$ permits greater data-driven departure toward the simplex boundary.}
    \label{fig:pillar_illustration}
\end{figure}


PILLAR also connects group DRO with $f$-divergence. When $D$ is an $f$-divergence between $w$ and $w^{\mathrm{ref}}$, the penalty defines a source-level analogue of an $f$-divergence uncertainty set. In the special case where each ``source" consists of one i.i.d observation, PILLAR reduces to the the Lagrangian form of empirical $f$-divergence DRO. We show in Section~\ref{sec:theory} this connection at the level of both the criterion and the statistical rate.

\begin{remark}
The reference mixture $w^{\mathrm{ref}}$ represents prior information about the likely deployment population. Natural defaults include uniform $w^{\mathrm{ref}}=(1/m)\mathbf 1$ and the empirical source proportions. When scientific knowledge or  an external target population are available, $w^{\mathrm{ref}}$ can instead reflect a particular population or be estimated using calibration or transportability methods \citep{westreich2017transportability,zhan2024domain,xiong2026guided}.  PILLAR then protects against residual uncertainty around this reference mixture.
\end{remark}



\subsection{Factorized Estimation and BGDMA Algorithm}
\label{sec:proposed_method}



We next estimate the PILLAR target from the data. The empirical problem combines a nonconvex low rank constraint with simplex-constrained adversarial weights. We handle these two structures using the Burer--Monteiro factorization \citep{burer2003nonlinear,burer2005local} for the low-rank parameter and miror ascent for the source weights. We write $\Theta = UV^\top$ with $U\in\RR^{d_1\times r}$ and $V\in\RR^{d_2\times r}$. This parametrization automatically enforces $\mbox{rank}(\Theta) \le r$ and reduces the optimization dimension from $d_1d_2$ to $(d_1+d_2)r$. The factoraization is not unique up to a common orthogonal matrix $O\in\mathbb O(r)$ but does not affect the estimated matrix $UV^\top = (UO)(VO)^\top$. 

Following the nonconvex low-rank literature \citep{tu2016low,park2018finding,zhu2021global}, we use the balancing penalty $$p(U,V):=\frac{1}{4}\|U^\top U - V^\top V\|_F^2.$$
This term vanishes when $U^\top U = V^\top V$, in which case $U$ and $V$ have matched singular values. It removes scale imbalance between the two factors and helps keep the concatenated factor $[U;V]\in\RR^{(d_1+d_2)\times r}$ well conditioned, which in turn allows the optimization error in $(U,V)$ to be translated into estimation error for the matrix product $\Theta=UV^\top$.

Replacing the population risks by their empirical counterparts, and writing $\widehat L_k(U,V):=\widehat L_k(UV^\top)$, we estimate the PILLAR target through the empirical saddle-point problem
\begin{equation}
\label{eq:empirical_minmax}
\min_{U,V}\;\max_{w\in\Delta_m}\widehat f(U,V,w) := \sum_{k=1}^m w_k \widehat L_k(U,V) + p(U,V) - \lambda D(w).
\end{equation}
The minimization over $(U,V)$ enforces the low-rank structure through the factorization $\Theta=UV^\top$, while the maximization over $w$ searches for adverse source mixtures subject to regularization toward the prior. The two sets of variables have different geometries: $(U,V)$ are unconstrained Euclidean variables, whereas $w$ live on the probability simplex. We therefore combine low-rank nonconvex optimization with a simplex-respecting adversarial update, leading to the BGDMA algorithm.

Given the current iterate $(U^t,V^t,w^t)$, we update the factors by
\begin{align}
\label{eq:UV_update}
    U^{t+1}
    &= U^t - \eta_\Theta \nabla_U \widehat f(U^t,V^t,w^t), \quad V^{t+1}
= V^t - \eta_\Theta \nabla_V \widehat f(U^t,V^t,w^t),
\end{align}
where $\eta_\Theta>0$ is the step size. These updates jointly adapt the shared low-rank representation to the current adversarial weighting of the source-specific losses. If desired, additional structure such as norm or sparsity constraints can be incorporated through projection or a proximal modification \citep{yu2020simultaneous}. For $w$, we use mirror ascent to preserve the simplex geometry. Given a mirror map $\phi:\Delta_m\to\RR$, we define the associated Bregman divergence
\[
    B_\phi(w,\widetilde w)
    := \phi(w)-\phi(\widetilde w)-\langle \nabla\phi(\widetilde w),\, w-\widetilde w\rangle.
\]
The weight update is
\begin{equation}
\label{eq:mirror_ascent_general}
    w^{t+1}
    \in \argmax{w\in\Delta_m}
    \left\{
        \left\langle \nabla_w \widehat f(U^t,V^t,w^t),\, w-w^t \right\rangle
        - \frac{1}{\eta_w} B_\phi(w,w^t)
    \right\},
\end{equation}
with \(
    \nabla_w \widehat f(U^t,V^t,w^t)
    = \widehat L(U^t,V^t) - \lambda \nabla D(w^t),
\), where $\eta_w>0$ is the mirror-ascent step size. The weight update combines two forces: it increases the weights of groups with larger current empirical loss, while simultaneously shrinking the adversary back toward the prior mixture through the regularization term $\lambda D(w)$.

In this general form, the factor updates \eqref{eq:UV_update} apply to any differentiable loss, while the mirror update \eqref{eq:mirror_ascent_general} accommodates any convex $D$ with a compatible mirror map $\phi$. Thus, BGDMA is not tied to a specific loss or prior. A particularly natural and computationally convenient choice is KL regularization. Take the negative entropy mirror map,
$$
    \phi(w)=\sum_{k=1}^m w_k \log w_k, \quad 
    \mbox{for which } B_\phi(w,\widetilde w)=D_{\mathrm{KL}}(w\|\widetilde w).
$$
If we further take 
\(
    D(w)=D_{\mathrm{KL}}(w\|w^{\mathrm{ref}})
,\)
the mirror-ascent step \eqref{eq:mirror_ascent_general} reduces to the classical \textit{exponentiated gradient ascent} update:
\begin{equation}
\label{eq:w_update_expgrad_shrinkage}
    w_k^{t+1}
    \propto
    (w_k^t)^{1-\eta_w\lambda}
    (w_k^{\mathrm{ref}})^{\eta_w\lambda}
    \exp\!\bigl\{\eta_w \widehat L_k(U^t,V^t)\bigr\},
    \qquad k\in[m],
\end{equation}
followed by normalization so that $\sum_{k=1}^m w_k^{t+1}=1$. This form makes the role of the penalty especially transparent: groups with larger current loss receive more mass through the exponential factor, while the prior mixture $w^{\mathrm{ref}}$ pulls the weights back toward a scientifically plausible baseline. When $\lambda=0$, \eqref{eq:w_update_expgrad_shrinkage} reduces to the classical exponentiated-gradient ascent for group DRO; as $\lambda$ increases, the prior has greater influence on the adversarial weights. Although the KL case admits especially simple updates, the algorithm and the theory are developed for the broader class of strongly convex penalties and mirror maps.

The bifactored gradient descent--mirror ascent algorithm is inherently local, so a good initialization $(\Theta^0,w^0)$ is important. A general approach is to first obtain a convex low-rank estimate
\begin{equation}
\label{eq:convex_init}
    \Theta^0 \in \argmin{\Theta\in\RR^{d_1\times d_2}}
    \left\{
        \sum_{k=1}^m \bar w_k \widehat L_k(\Theta) + \operatorname{pen}(\Theta)
    \right\},
\end{equation}
where $w^0=\bar w$ is a fixed reference weight such as the uniform mixture, and $\operatorname{pen}(\Theta)$ is a convex low-rank regularizer such as the nuclear norm. This provides a stable warm start before switching to the more scalable nonconvex factoraized optimization. In practice, \eqref{eq:convex_init} need only be solved approximately, for example using a small number of proximal gradient iterations. For the KL-regularized PILLAR objective, the  inner maximization over $w$ admits the closed form
\begin{equation}
\label{eq:kl_profile_identity}
    \max_{w\in\Delta_m}
    \left\{
        \sum_{k=1}^m w_k \Lkh(\Theta)
        -
        \lambda
        \sum_{k=1}^m w_k\log\frac{w_k}{w_k^{\mathrm{ref}}}
    \right\}
    =
    \lambda
    \log\left(
        \sum_{k=1}^m w_k^{\mathrm{ref}} e^{\Lkh(\Theta)/\lambda}
    \right).
\end{equation}
Thus, after profiling out $w$, the robust multi-source objective becomes a smooth log-sum-exp aggregation of the source-specific losses. This leads to a convex nuclear-norm initializer
\begin{equation} \label{eq:softmax_init} \Theta^0 \in \argmin{\Theta\in\RR^{d_1\times d_2}} \left\{ \lambda
    \log\left(
        \sum_{k=1}^m w_k^{\mathrm{ref}} e^{\Lkh(\Theta)/\lambda}
    \right) + \tau_n\|\Theta\|_* \right\}, \end{equation}
and then initializing the factors by the balanced SVD of $\Theta^0$. Unlike a pooled initializer with fixed source weights, this construction already reflects the prior-informed adversarial objective that BGDMA will subsequently optimize. The initial adversarial weight is chosen to be the corresponding softmax weight
\begin{equation}
\label{eq:softmax_weight_init}
    w_k^0
    =
    \frac{
        w_k^{\mathrm{ref}}
        \exp\left\{\widehat L_k(\Theta^0)/\lambda\right\}
    }{
        \sum_{j=1}^m
        w_j^{\mathrm{ref}}
        \exp\left\{\widehat L_j(\Theta^0)/\lambda\right\}
    },
    \qquad k\in[m],
\end{equation} 
so sources with larger initial losses receive greater weight, while $w^{\mathrm{ref}}$ continues to anchor the adversary.  Given $\Theta^0$, we compute the rank-$r$ truncated SVD $\Theta^0 = \tilde U \tilde\Sigma
\tilde V^\top$ and set $U^0 = \tilde U \tilde\Sigma^{1/2}$ and $V^0 = \tilde V \tilde\Sigma^{1/2}$ for the bifactored gradient descent step. We show in Section~\ref{subsec:init} that, under suitable conditions, this initialization $\Theta^0$ lies within a constant distance of $\Tsl$.

\begin{algorithm}
\caption{Bifactored Gradient Descent Mirror Ascent (BGDMA)}
\label{alg:BGDMA}
\begin{algorithmic}[1]
\Require Empirical losses $\{\widehat L_k(\Theta)\}_{k=1}^m$, rank $r$, regularization
parameter $\lambda>0$, step sizes $\eta_\Theta,\eta_w>0$, number of iterations
$T$, initialization $\Theta^0,w^0$.
\State Compute a rank-$r$ SVD $\Theta^0 = \tilde U \tilde\Sigma \tilde V^\top$
and set $U^0 = \tilde U \tilde\Sigma^{1/2}$, $V^0 = \tilde V \tilde\Sigma^{1/2}$.
\For{$t=0,1,\dots,T-1$}
  \State Compute gradients $\nabla_U \widehat f(U^t,V^t,w^t)$ and $\nabla_V \widehat f(U^t,V^t,w^t)$.
  \State Gradient descent step on $(U,V)$:
  \Statex
  \begin{equation*}
    U^{t+1} = U^t - \eta_\Theta\, \nabla_U \widehat f(U^t,V^t,w^t),\qquad
    V^{t+1} = V^t - \eta_\Theta\, \nabla_V \widehat f(U^t,V^t,w^t).
  \end{equation*}
  \State Compute $\nabla_w \widehat f(U^t,V^t,w^t) = \widehat L(U^t,V^t) - \lambda \nabla D(w^t)$.
  \State Mirror ascent step on $w$:
  \Statex
  \begin{equation*}
    w^{t+1} \in \argmax{w\in\Delta_m}
    \left\{
        \bigl\langle \nabla_w \widehat f(U^t,V^t,w^t),\, w - w^t \bigr\rangle
        - \frac{1}{\eta_w} B_{\phi}(w,w^t)
    \right\}.
  \end{equation*}
  \State (e.g., exponentiated gradient update~\eqref{eq:w_update_expgrad_shrinkage} when $D$ and $\phi$ are entropy-based).
\EndFor
\State \textbf{Output:} $\widehat\Theta^T := U^T (V^T)^\top,\ \widehat w^T$.
\end{algorithmic}
\end{algorithm}

The complete BGDMA algorithm is summarized in Algorithm~\ref{alg:BGDMA}. Each iteration requires computing the source specific loss gradients, updating the low-rank factors, and performing an $O(m)$ mirror-ascent update for the source weights. When $r$ is moderate and the per-sample loss is smooth, BGDMA operates on only $(d_1+d_2)r$ factor parameters, plus an $O((d_1+d_2)r^2)$ cost from the balancing penalty. 

This factoraization formulation can be substantially more scalable than convex nuclear-norm approaches, which optimize directly over the full $d_1\times d_2$ matrix and typically require sigular-value thresholding or partial SVDs at every step. Such updates cost at least $O(d_1d_2r)$ for a rank-r truncated SVD and require $O(d_1d_2)$ storage. Thus, when $r\ll \min(d_1,d_2)$, the BGDMA reduces both the optimization dimension and the need for repeated large-scale SVD, at the cost of replacing global convexity with a local nonconvex optimization problem.

\section{Theoretical justification}
\label{sec:theory}

This section establishes the main theoretical properties for PILLAR. We first state the regularity conditions, then derive the convergence and statistical error bound for BGDMA. We next formalize the connection to the $f$-divergence DRO, verify the proposed initalization, and quantify when prior informed robustness improves upon vanilla group DRO. 

The regularization level $\lambda$ is held fixed here and in the corresponding proofs, so, following the convention of Definition~\ref{def:pillar_target}, we abbreviate the PILLAR saddle point $(\Tsl,\wsl)$ to $(\Theta^*,w^*)$ and write $\kappa^*=\sigma_{\max}(\Theta^*)/\sigma_{\min}(\Theta^*)$ for its condition number. For a real random variable $X$, define the Orlicz-2 norm as $\|X\|_{\psi_2}:=\inf \{t>0:\ \EE\exp(X^2/t^2)\le 2 \}.$ For a random matrix $G\in\RR^{d_1\times d_2}$, define
\(
\|G\|_{\psi_2,\mathrm{F}} := \|\mathrm{vec}(G)\|_{\psi_2}
:= \sup_{\|A\|_F=1}\ \|\langle G,A\rangle\|_{\psi_2}.
\) We use relint$(\Delta_m)$ to denote the relative interior of the simplex in $\RR^m$.

\subsection{Regularity conditions}
\label{subsec:regularity}

We impose regularity assumptions on curvature and smoothness of the losses, a concentration condition on the sampling noise, a population-level well-specification condition, and a local initialization condition.

\begin{assumption}
\label{ass:rsc_rss}
For each $k \in [m]$, the gradient of the empirical loss $\Lkh (\Theta)$ is locally $\beta_L$-bounded; that is, there is a radius $\varrho>0$ such that $\|\nabla_{\Theta}\widehat{L}_k(\Theta)\|_F \leqslant \beta_L $ for all $\|\Theta-\Theta^*\|_F \leq \varrho$. Moreover, defining the set of approximately low-rank matrices
\begin{equation}
\label{eq:ratio_cone}
\mathcal A:=\bigl\{\Delta\in\RR^{d_1\times d_2}:\ \|\Delta\|_*\le 4\sqrt{2r}\,\|\Delta\|_F\bigr\},
\end{equation}
there exist constants $0<\mu < \beta_G$ such that each empirical loss $\Lkh(\Theta)$ satisfies, for all $\Theta_1,\Theta_2$ with $\Theta_2-\Theta_1\in\mathcal A$,
\begin{equation}
\label{eq:rsc_rss}
    \frac{\mu}{2}\|\Theta_2-\Theta_1\|_F^2
    \;\le\;
    \Lkh(\Theta_2) - \Lkh(\Theta_1)
    - \bigl\langle \nabla \Lkh(\Theta_1), \Theta_2 - \Theta_1 \bigr\rangle
    \;\le\;
    \frac{\beta_G}{2}\|\Theta_2-\Theta_1\|_F^2.
\end{equation}
\end{assumption}

Assumption~\ref{ass:rsc_rss} is a restricted strong convexity and smoothness condition tailored to the low-rank structure: the two-sided bound \eqref{eq:rsc_rss} is required only along differences in $\mathcal A$, the set of approximately low-rank directions familiar from nuclear-norm analyses \citep{negahban2012restricted}. This set contains the low-rank differences needed for the BGDMA analysis, while the larger cone $\mathcal A$ is required only for analyzing the convex initializer in Section~\ref{subsec:init}. 
In standard models (e.g., generalized linear models with sub-Gaussian covariates) these restricted conditions hold with high probability within a bounded parameter region. We state them for the empirical losses for  convenience; they can also be verified with high probability by combining population restricted curvature with an appropriate uniform concentration bound.

\begin{assumption}
\label{ass:w_regularizer_mirror} There exists a mirror map $\phi (\cdot)$, such that the regularizer $D(\cdot)$ is 1-convex and $\beta_D$-smooth relative to $\phi(\cdot)$, i.e. 
\begin{equation}
    B_\phi(u,w) \leq D(u)-D(w) - \langle \nabla D(w),u-w\rangle \leq \beta_DB_{\phi}(u,w), \quad \forall u,w \in \Delta_m, \label{eq:assum:relative}
\end{equation}
and $\phi(\cdot)$ is 1-strongly convex with respect to $\|\cdot \|_1$, i.e. $B_{\phi}(u,w) \geq \frac12\|u-w\|_1^2, \quad  \forall u,w \in \Delta_m.$
\end{assumption}

Assumption~\ref{ass:w_regularizer_mirror} imposes relative convexity and smoothness conditions on the regularizer $D(\cdot)$, as is standard in the mirror descent literature \citep{bauschke2017descent,lu2018relatively}. Taking  the mirror map $\phi$ and KL regularizer satisfies the condition with $\beta_D=1$. For convenience we write $\beta = \max\{\beta_G,\beta_L,\beta_D\}$.

\begin{assumption}
\label{ass:loss_psi2}
For each group $k\in[m]$ and sample $i$, the centered loss variable
\(
e_i\supk := \ell_k(\Theta^*,X_i\supk) - L_k(\Theta^*)
\) satisfies the uniform Orlicz-2 bound
\(
\|e_i\supk\|_{\psi_2}\le K_L
\) for all $i,k$. The centered gradient matrix
\(
G_i\supk := \nabla_\Theta \ell_k(\Theta^*,X_i\supk) - \nabla_\Theta L_k(\Theta^*)\in\RR^{d_1\times d_2}
\) satisfies the uniform Orlicz-2 bound
\(
\|G_i\supk\|_{\psi_2,\mathrm{F}} \le K_G
\) for all $i,k.$
\end{assumption}


\begin{assumption}[Population first-order condition]
\label{ass:pop_foc}
The PILLAR population saddle point $(\Theta^*,w^*)$ satisfies
\(
\summ w_k^*\,\nabla_\Theta L_k(\Theta^*)=0 .
\)
\end{assumption}

Assumptions~\ref{ass:loss_psi2} is a standard sub-Gaussian tail condition ensuring that the empirical saddle-point problem concentrates around its population counterpart; it holds, for example, for bounded or Lipschitz losses with sub-Gaussian observations. Assumption~\ref{ass:pop_foc} is a well-specification condition requiring the rank constraint to be inactive at the PILLAR target;
without it the rates below would acquire an additional approximation-error term.



\begin{assumption}
\label{ass:initialization}
The initialization $(\Theta^0,w^0)$ satisfies
\begin{equation}
\label{eq:init_assum}
\| \Theta^0  - \Theta^*\|_F^2 \leq  I_\Theta \sigma_{\min}^2(\Theta^*) \min\{1,\sigma_{\min}(\Theta^*)\} ,\qquad  \lambda B_{\phi}(w^*,w^0) \leq I_w \sigma_{\min}^2(\Theta^*),
\end{equation}
where $I_\Theta$ and $I_w$ are sufficiently small constants depending only on the regularity parameters of Assumptions~\ref{ass:rsc_rss}--\ref{ass:w_regularizer_mirror}; explicit choices are given in \eqref{eq:appen:ITheta_Iw} in the appendix.
\end{assumption}

Assumption~\ref{ass:initialization} is a local initialization requirement for the nonconvex factorized algorithm rather than an additional modeling assumption. Section~\ref{subsec:init} shows that it is met by the soft-max initializer of Section~\ref{sec:proposed_method}.

\subsection{Convergence and estimation error}

We now establish the main convergence and statistical guarantee for BGDMA. The result shows that the coupled factor and weight updates retain local linear convergence and quantifies the statistical error after optimization error becomes negligible.

\begin{theorem}
\label{thm:PILLAR_convergence}
Suppose Assumptions~\ref{ass:rsc_rss}--\ref{ass:initialization} hold, and let $(\Theta^*,w^*)$ be the PILLAR population saddle point associated with~\eqref{eq:PILLAR_def}. Let $\widehat{\Theta}^T$ and $\widehat w^T$ be the step-$T$ iterates of BGDMA in Algorithm~\ref{alg:BGDMA}. Let $\kappa^*=\sigma_{\max}(\Theta^*)/\sigma_{\min}(\Theta^*)$ and $n_{\min}=\min_{k\in[m]} n_k$, and define the effective sample size $n_{\mathrm{eff}}=\left(\sum_{k=1}^m {(w_k^*)^2}/{n_k}\right)^{-1}$. 

Fix $\delta\in(0,1)$. Suppose that $\lambda\ge c_1\sqrt{\kappa^*}$, that the step sizes satisfy $\eta_\Theta\le c_2/(\beta\sigma_{\max}(\Theta^*))$ and $\eta_w=\,1/\lambda\beta$, where $c_1,c_2>0$ are constants that depend only on $(\mu,\beta)$, and that the sample sizes $n_{\mathrm{eff}}$ and $n_{\min}$ are sufficiently large. Then there exist constants $C_0,C_1>0,\gamma\in(0,1)$ such that, with probability at least $1-\delta$, the recovered low-rank matrix
$\widehat\Theta^T$ satisfies
\begin{equation}
\label{eq:theta_convergence}
    \|\widehat\Theta^T - \Theta^* \|_F \le
C_0(1-\gamma)^T
+
C_1
\sqrt{
\frac{r(d_1+d_2)+\log(1/\delta)}{n_{\mathrm{eff}}}
+
\frac{\log(m/\delta)}{\lambda^2 n_{\min}}
}.
\end{equation}
Consequently, if the algorithm is run for $T\ge C\log(1/\varepsilon_n)$ steps, where
\begin{equation}
    \varepsilon_n^2
:=
\frac{r(d_1+d_2)+\log(1/\delta)}{n_{\mathrm{eff}}}
+
\frac{\log(m/\delta)}{\lambda^2 n_{\min}},
\label{eq:eps_n}
\end{equation}
then $\|\widehat\Theta^T - \Theta^* \|_F \leq C_2 \varepsilon_n.$ 
\end{theorem}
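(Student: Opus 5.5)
The plan is to run a joint Lyapunov (potential-function) argument for the coupled iterates and compare them against the \emph{population} saddle point $(\Theta^*,w^*)$; the statistical error then enters as a perturbation at each step. Write $F^t=[U^t;V^t]$ and $F^*=[U^*;V^*]$ for the balanced factorization of $\Theta^*$, so that $U^*=A\Sigma^{1/2}$ and $V^*=B\Sigma^{1/2}$. Let
\[
\mathrm{dist}^2(F^t,F^*)=\min_{O\in\mathbb O(r)}\|F^t-F^*O\|_F^2 .
\]
The potential is
\[
\Psi_t=\mathrm{dist}^2(F^t,F^*)+c\,\eta_\Theta\lambda\,\eta_w^{-1}B_\phi(w^*,w^t).
\]
The weight $c\,\eta_\Theta\lambda/\eta_w$ is chosen so that the cross terms coupling the two blocks cancel. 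With $\eta_w=1/(\lambda\beta)$ this weight is of order $\eta_\Theta\lambda^2\beta$. I will show by induction that
\[
\Psi_{t+1}\le(1-\gamma)\Psi_t+C\eta_\Theta\,\varepsilon_n^2 .
\]
The induction also keeps $F^t$ in the basin $\mathrm{dist}(F^t,F^*)\lesssim\sigma_{\min}^{1/2}(\Theta^*)$, where $\|\Theta^t-\Theta^*\|_F\le\varrho$ so that Assumption~\ref{ass:rsc_rss} applies. The base case is Assumption~\ref{ass:initialization}, translated from $\|\Theta^0-\Theta^*\|_F$ to factor distance by the standard balanced-SVD perturbation bound of \citet{tu2016low}. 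Finally, $\|\widehat\Theta^T-\Theta^*\|_F\lesssim\sqrt{\sigma_{\max}(\Theta^*)}\,\mathrm{dist}(F^T,F^*)$, and unrolling the recursion gives \eqref{eq:theta_convergence}.

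\textbf{Primal step.} Expand $\|F^{t+1}-F^*O_t\|_F^2$. Write the gradient as $\sum_k w^t_k\nabla\widehat L_k(\Theta^t)$ composed with the factor Jacobian, plus the balancing term. Then split
\[
\sum_k w_k^t\nabla\widehat L_k(\Theta^t)
=\underbrace{\sum_k w_k^*\bigl(\nabla\widehat L_k(\Theta^t)-\nabla\widehat L_k(\Theta^*)\bigr)}_{\text{(i)}}
+\underbrace{\sum_k w_k^*\nabla\widehat L_k(\Theta^*)}_{\text{(ii)}}
+\underbrace{\sum_k(w^t_k-w^*_k)\nabla\widehat L_k(\Theta^t)}_{\text{(iii)}} .
\]
\begin{itemize}
\item Term (i) gives the usual regularity condition $\langle\nabla,F-F^*O\rangle\gtrsim\mu\sigma_{\min}\mathrm{dist}^2+\|\nabla\|^2/\beta$. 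This uses the RSC/RSS in Assumption~\ref{ass:rsc_rss}, restricted to rank-$2r$ differences, which lie in $\mathcal A$, together with the balancing penalty, exactly as in \citet{tu2016low}.
\item Term (ii) is pure noise. By Assumption~\ref{ass:pop_foc} it equals $\sum_k w_k^*\frac1{n_k}\sum_iG_i\supk$. An $\epsilon$-net over rank-$2r$ unit-Frobenius directions, combined with Assumption~\ref{ass:loss_psi2}, bounds it by $K_G\sqrt{(r(d_1+d_2)+\log(1/\delta))/n_{\mathrm{eff}}}$; the variance proxy of this weighted average is exactly $1/n_{\mathrm{eff}}$.
\item Term (iii) is the coupling term. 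It is bounded by $\beta_L\|w^t-w^*\|_1\,\|F^t-F^*O\|$ and must be absorbed by the dual step.
\end{itemize}

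\textbf{Dual step.} Use the standard three-point mirror-ascent inequality with the relative convexity and smoothness of Assumption~\ref{ass:w_regularizer_mirror}. With $\eta_w=1/(\lambda\beta)$ this yields
\[
B_\phi(w^*,w^{t+1})\le\Bigl(1-\tfrac1\beta\Bigr)B_\phi(w^*,w^t)+\eta_w\bigl\langle\widehat L(\Theta^t)-\widehat L(\Theta^*_\lambda),\,w^{t+1}-w^*\bigr\rangle+\eta_w\bigl\langle\widehat L(\Theta^*)-L(\Theta^*),\,w^{t+1}-w^*\bigr\rangle ,
\]
where the optimality of $w^*$ for the population objective has been used. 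The last inner product is at most $\|\widehat L(\Theta^*)-L(\Theta^*)\|_\infty\|w^{t+1}-w^*\|_1$. A union bound over the $m$ sources, using $K_L$, bounds the sup-norm by $\sqrt{\log(m/\delta)/n_{\min}}$. Young's inequality against $\lambda B_\phi\ge\frac\lambda2\|\cdot\|_1^2$ then gives the $\log(m/\delta)/(\lambda^2 n_{\min})$ term; after rescaling this is the second summand of $\varepsilon_n^2$. The middle term is bounded using $|\widehat L_k(\Theta^t)-\widehat L_k(\Theta^*)|\le\beta_L\|\Theta^t-\Theta^*\|_F$, and it also contains a first-order piece $\langle\nabla\widehat L_k(\Theta^*),\Theta^t-\Theta^*\rangle$ that matches term (iii) with the opposite sign.

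\textbf{Main obstacle.} The key difficulty is closing the coupling: the cross terms between term (iii) and the dual middle term must be controlled by the two contraction terms $\mu\sigma_{\min}\mathrm{dist}^2$ and $\lambda\|w-w^*\|_1^2$. After Young's inequality, the residual cross terms are of size $\beta_L^2\sigma_{\max}\,\mathrm{dist}^2/\lambda$ relative to $\mu\sigma_{\min}\mathrm{dist}^2$. Absorbing them therefore needs $\lambda\gtrsim\beta^2\kappa^*/\mu$ in the crudest form. I expect the exact first-order cancellation, together with the choice of potential weight, to improve this to the stated requirement $\lambda\ge c_1\sqrt{\kappa^*}$. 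The plan is to pick the weight $c$ so that the linear cross terms cancel exactly, leaving only second-order remainders of size $\beta_G\|\Theta^t-\Theta^*\|_F^2\|w-w^*\|_1$, which are small inside the basin.

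The remaining requirements are then routine:
\begin{itemize}
\item the noise terms must stay below the basin radius, which is ensured by taking $n_{\mathrm{eff}}$ and $n_{\min}$ large;
\item the step size $\eta_\Theta\le c_2/(\beta\sigma_{\max})$ controls the quadratic gradient term;
\item the contraction rate is $\gamma\asymp\min\{\eta_\Theta\mu\sigma_{\min},\,1/\beta\}$.
\end{itemize}
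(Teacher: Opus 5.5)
Your architecture matches the paper's proof. It has a joint potential $\mathrm{dist}^2(F^t,F^*)+K\,B_\phi(w^*,w^t)$ and a primal one-step bound whose noise is handled by a rank-$2r$ net with variance proxy $1/n_{\mathrm{eff}}$. It has the three-point mirror bound, with a union bound on $\|\widehat L(\Theta^*)-L(\Theta^*)\|_\infty$ and Young against $\frac{\lambda}{2}\|w^{t+1}-w^*\|_1^2$. Finally, an induction keeps the iterates in the basin.

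The gap is exactly the step you flag as the main obstacle. Your accounting certifies only $\lambda\gtrsim\beta^2\kappa^*/\mu$, and the passage to the hypothesis $\lambda\ge c_1\sqrt{\kappa^*}$ is left to an expected ``exact first-order cancellation'' that is never carried out. That step is not routine, for three reasons.
\begin{enumerate}
\item Cancelling $-2\eta_\Theta\sum_k(w^t_k-w^*_k)\langle\nabla\widehat L_k(\Theta^*),\Theta^t-\Theta^*\rangle$ against the dual term forces the weight $K=2\eta_\Theta/\eta_w$, i.e.\ $c=2/\lambda$. That is not a constant $c$ giving a weight of order $\eta_\Theta\lambda^2\beta$.
\item Your mirror inequality pairs the loss increment with $w^{t+1}-w^*$, not with $w^t-w^*$, so the two cross terms do not line up.
\item The leftovers are first order in the primal error times $\|w^t-w^*\|_1$: for example $\beta_L\|w^t-w^*\|_1\,\mathrm{dist}^2$, from the $\Delta_U\Delta_V^\top$ part of $\Delta_UV^\top+U\Delta_V^\top=(\Theta-\Theta^*)+\Delta_U\Delta_V^\top$, and $\beta_G\|\Theta^t-\Theta^*\|_F^2\|w^t-w^*\|_1$. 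They are not small merely because you are inside the basin, since a dual weight proportional to $\eta_\Theta$ does not force $\|w^t-w^*\|_1$ to be small. They must be split again and absorbed, which brings back conditions on $\lambda$.
\end{enumerate}

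No cancellation is needed. The extra $\sqrt{\kappa^*}$ in your crude count comes from bounding the coupling by $\beta_L\sqrt{\sigma_{\max}}\|w-w^*\|_1\,\mathrm{dist}$ and absorbing it into the factor-space contraction $\mu\sigma_{\min}\mathrm{dist}^2$.

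Instead, keep the matrix-space curvature of term (i), $\langle\sum_k w_k(\nabla\widehat L_k(\Theta)-\nabla\widehat L_k(\Theta^*)),\Theta-\Theta^*\rangle\ge\mu\|\Theta-\Theta^*\|_F^2$, as an explicit positive term. Do not pass immediately to the factor-space regularity inequality.
\begin{itemize}
\item The coupling is at most $\beta_L\|w-w^*\|_1(\|\Theta-\Theta^*\|_F+\mathrm{dist}^2/2)$.
\item Young against $\frac{\mu}{8}\|\Theta-\Theta^*\|_F^2$ and against $\frac{\mu}{16}\mathrm{dist}^4\le\frac{\mu}{16}I_0^2\,\mathrm{dist}^2$ (with $I_0$ the basin radius) costs only $O(\beta_L^2\|w-w^*\|_1^2/\mu)$, with no $\sigma_{\max}$.
\item Only the remaining part of $\mu\|\Theta-\Theta^*\|_F^2$ is converted into $\sigma_{\min}\mathrm{dist}^2$, via the balancing penalty.
\item Your dual bound already gives the cross term $\frac{\eta_w\beta_L^2}{\lambda}\|\Theta-\Theta^*\|_F^2\le\frac{6\eta_w\beta_L^2\sigma_{\max}}{\lambda}\mathrm{dist}^2$.
\end{itemize}

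Now take $K\asymp\eta_\Theta\beta_L^2/(\mu\lambda\eta_w)$; the paper uses $56\eta_\Theta\beta_L^2/(\mu\lambda\eta_w)$, which with $\eta_w=1/(\lambda\beta)$ does not grow with $\lambda$. Half of the dual contraction $K\lambda\eta_w B_\phi(w^*,w^t)$ then absorbs the primal's $O(\eta_\Theta\beta_L^2/\mu)\|w-w^*\|_1^2$. The dual cross term enters the potential as $O(\eta_\Theta\beta_L^4\sigma_{\max}/(\mu\lambda^2))\,\mathrm{dist}^2$, which is dominated by $c\,\eta_\Theta\mu\sigma_{\min}\mathrm{dist}^2$ exactly when $\lambda\gtrsim(\beta_L^2/\mu)\sqrt{\kappa^*}$.

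The square root appears because only the product of the two cross coefficients is compared with the product of the two contraction rates; the step sizes cancel in that comparison. This weight also closes the base case directly from the form $\lambda B_\phi(w^*,w^0)\le I_w\sigma_{\min}^2$ of Assumption~\ref{ass:initialization}. Your weight of order $\eta_\Theta\lambda^2\beta$ would instead make the initial dual contribution grow with $\lambda$.
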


Theorem~\ref{thm:PILLAR_convergence} separates the error of BGDMA into a geometrically decaying optimization component and a statistical term. After $O(\log(1/\varepsilon_n))$ iterations, the latter dominates and has two components as shown in \eqref{eq:eps_n}. The first is the usual low-rank estimation error, governed by the effective sample size $n_{\mathrm{eff}}$. When $n_k\asymp n$ and $w^*$ is diffuse, yielding the pooled low-rank rate;  when an adversary concentrates on one source with $n_{\mathrm{eff}}\asymp n$, yielding the hardest-source rate. The rate thus adapts from full pooling to worst-source learning, with the number of sources entering only through $\log m$, sharper than the polynomial dependence in existing group DRO literature \citep{guo2025statistical,wang2026distributionally}.

The second term in~\eqref{eq:theta_convergence} is the additional cost of learning the adversarial weights. It decreases with $\lambda$, because stronger regularization stablizes the inner maximization, while $\lambda$ simultaneously moves the population target from worst-mixture robustness toward the reference mixture. The bound therefore makes the robustness--stability tradeoff induced by the PILLAR penalty explicit.

The proof, given in Appendix~\ref{appen:proof_thm1}, tracks a joint Lyapunov potential in $(U,V)$ and $w$: the two recursions are coupled, since the current weights drive the low-rank gradient while the current low-rank estimate drives the mirror step, and the condition $\lambda\ge c_1\sqrt{\kappa^*}$ lets both errors be absorbed into a single contraction. All constants, together with the precise sample-size requirement behind ``sufficiently large,'' are made explicit in Theorem~\ref{thm:appen:deterministic} and condition~\eqref{eq:appen:samplesize}.

\begin{remark}[On the condition $\lambda\ge c_1\sqrt{\kappa^*}$]
\label{rem:lambda_condition}
This lower bound is an artifact of the one-potential proof rather than a property of the estimator: it makes the feedback between the primal and dual recursions a fixed fraction of the primal contraction, and we do not expect it to be necessary. It is in any case uninformative about small $\lambda$, including the unregularized limit $\lambda\to0$ in which the target degenerates to the group DRO solution; empirically, BGDMA converges stably over the whole grid down to $\lambda=0$ (Section~\ref{sec:sim}).
\end{remark}

After establishing Theorem \ref{thm:PILLAR_convergence}, the formal connection to $f$-divergence DRO described in Section~\ref{sec:problem_formulation} can now be made exact, at the level of the criterion, the rate, and the ambiguity radius. When every source consists of a single observation drawn from a common distribution, the source simplex becomes the probability simplex over the empirical sample, and the following holds.

\subsection{Connection to $f$-divergence DRO}

We next formulalize the connection to $f$-divergence DRO. When each source consists of a single observation from a common distribution, PILLAR recovers $f$-divergence DRO at both the objective and  statistical scaling levels.

\begin{corollary}[Recovering $f$-divergence DRO]
\label{cor:fdiv}
Let $D(w)=D_f(w\|w^{\mathrm{ref}})$ for a convex $f$ with $f(1)=0$, such that Assumption~\ref{ass:w_regularizer_mirror} holds with constants free of $n$. Suppose the $m=n$ sources each contain a single observation ($n_k=1$) drawn independently from a common distribution $\PP$, with a common loss $\ell$ and $w^{\mathrm{ref}}=n^{-1}\mathbf 1$, and write $\widehat\PP_n$ for the empirical distribution. Then:
\begin{enumerate}
\item[(i)] the empirical PILLAR criterion is the Lagrangian form of empirical $f$-divergence DRO,
\begin{equation}
\label{eq:kl_dro_lagrangian}
\min_{\Theta\in\Xi(r)}\ \max_{w\in\Delta_n}\Bigl\{\sum_{k=1}^n w_k\Lkh(\Theta)-\lambda D(w)\Bigr\}
\;=\;
\min_{\Theta\in\Xi(r)}\ \sup_{\QQ\ll\widehat\PP_n}\Bigl\{\EE_{\QQ}\,\ell(\Theta,X)-\lambda\,D_f\bigl(\QQ\,\|\,\widehat\PP_n\bigr)\Bigr\},
\end{equation}
and, for every $\lambda>0$, the population saddle point is $(\Theta^*,n^{-1}\mathbf 1)$, where $\Theta^*$ minimizes $\EE_\PP\,\ell(\Theta,X)$ over $\Xi(r)$, with $n_{\mathrm{eff}}=n$;
\item[(ii)] if the assumptions of Theorem~\ref{thm:PILLAR_convergence} hold with $\lambda=\lambda_n:=\lambda_0\sqrt n$ for a fixed $\lambda_0>0$, then for all $n\ge(c_1\sqrt{\kappa^*}/\lambda_0)^2$ and $T\ge C\log n$,
\[
\bigl\|\widehat\Theta^T-\Theta^*\bigr\|_F
\;=\;O_{\PP}\!\left(\sqrt{\frac{r(d_1+d_2)+\log n}{n}}\,\right);
\]
\item[(iii)] in the setting of \textup{(ii)}, the realized ambiguity radius
$\widehat\rho_n:=D_f(\widehat w_n\,\|\,w^{\mathrm{ref}})$, where $\widehat w_n$ maximizes $\sum_{k=1}^n w_k\Lkh(\widehat\Theta^T)-\lambda_nD(w)$ over $\Delta_n$, satisfies
\[
\widehat\rho_n\;=\;O_{\PP}\!\left(\frac{\log n}{n}\right).
\]
\end{enumerate}
\end{corollary}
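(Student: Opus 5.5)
Part (i) is an identification argument, and the plan is to establish it first. With $n_k=1$, source $k$ has empirical risk $\Lkh(\Theta)=\ell(\Theta,X_k)$. Every $\QQ\ll\widehat\PP_n$ is supported on $\{X_1,\dots,X_n\}$, so it can be written $\QQ=\sum_k w_k\delta_{X_k}$ with $w\in\Delta_n$. If some $X_k$ coincide, I would aggregate their weights; the $f$-divergence only decreases under this merging, so the supremum is unchanged. Under this identification,
\[
\EE_\QQ\ell(\Theta,X)=\sum_k w_k\Lkh(\Theta),\qquad D_f(\QQ\|\widehat\PP_n)=\sum_k n^{-1}f(nw_k)=D_f(w\|n^{-1}\mathbf 1),
\]
which gives \eqref{eq:kl_dro_lagrangian} pointwise in $\Theta$.

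For the population saddle point, all $L_k(\Theta)=\EE_\PP\ell(\Theta,X)=:L(\Theta)$ coincide. Hence $\sum_k w_kL_k-\lambda D(w)=L(\Theta)-\lambda D(w)$ for every $w$. Since $D\ge 0$ with equality exactly at $w^{\mathrm{ref}}$ (strict convexity together with $f(1)=0$), the inner maximizer is $n^{-1}\mathbf 1$ and $\Phi_\lambda=L$. So $\Theta^*=\argmin{\Xi(r)}L$, and $n_{\mathrm{eff}}=(\sum_k n^{-2}\cdot 1)^{-1}=n$.

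For (ii), I would apply Theorem~\ref{thm:PILLAR_convergence} directly with $\lambda_n=\lambda_0\sqrt n$. The condition $\lambda_n\ge c_1\sqrt{\kappa^*}$ is exactly $n\ge(c_1\sqrt{\kappa^*}/\lambda_0)^2$. The constants $c_1,c_2,C_0,C_1,\gamma$ depend only on $(\mu,\beta)$, and $\beta_D$ is free of $n$ by hypothesis, so they do not drift with $n$. Also $\kappa^*$ and $\sigma_{\min}(\Theta^*)$ are fixed, since $\Theta^*$ does not depend on $n$ or $\lambda$. Substituting $n_{\mathrm{eff}}=n$, $n_{\min}=1$ and $m=n$ gives
\[
\varepsilon_n^2=\frac{r(d_1+d_2)+\log(1/\delta)}{n}+\frac{\log(n/\delta)}{\lambda_0^2n}.
\]
Taking $T\ge C\log n\gtrsim\log(1/\varepsilon_n)$ kills the optimization term. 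Fixing $\delta$ then yields the $O_\PP$ statement. The point to stress is that $\lambda^2 n_{\min}=\lambda_0^2 n$ is precisely what makes the weight-learning term of order $1/n$.

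For (iii), I would use strong concavity of the inner problem. Write $a=\widehat L(\widehat\Theta^T)\in\RR^n$. By Assumption~\ref{ass:w_regularizer_mirror}, $\lambda_nD$ is $\lambda_n$-strongly convex relative to $\phi$. Comparing the objective at $\widehat w_n$ and at $w^{\mathrm{ref}}$, and using $D(w^{\mathrm{ref}})=0$, gives
\[
\lambda_nD(\widehat w_n)\le\langle a-\bar a\mathbf 1,\widehat w_n-w^{\mathrm{ref}}\rangle,
\]
where the centering by $\bar a=n^{-1}\sum_k a_k$ is free because $\widehat w_n-w^{\mathrm{ref}}$ sums to zero. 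The first-order optimality conditions and relative convexity then sharpen this to $\lambda_nD(\widehat w_n)+\lambda_nB_\phi(w^{\mathrm{ref}},\widehat w_n)\le\langle a-\bar a\mathbf 1,\widehat w_n-w^{\mathrm{ref}}\rangle$. Bounding the right side by $\|a-\bar a\mathbf 1\|_\infty\|\widehat w_n-w^{\mathrm{ref}}\|_1$ and using $B_\phi\ge\tfrac12\|\cdot\|_1^2$ gives
\[
D(\widehat w_n)\lesssim\frac{\|a-\bar a\mathbf 1\|_\infty^2}{\lambda_n^2}.
\]
It then remains to show $\|a-\bar a\mathbf 1\|_\infty=O_\PP(\sqrt{\log n})$. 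I would split $a_k-\bar a$ as follows. At $\Theta^*$, the centered losses $e_k$ are sub-Gaussian by Assumption~\ref{ass:loss_psi2}, so a union bound gives $\max_k|e_k|=O_\PP(K_L\sqrt{\log n})$. For the perturbation, local gradient boundedness (Assumption~\ref{ass:rsc_rss}) gives $|\ell(\widehat\Theta^T,X_k)-\ell(\Theta^*,X_k)|\le\beta_L\|\widehat\Theta^T-\Theta^*\|_F$, which is $o_\PP(1)$ by (ii) and lies inside the radius $\varrho$. Combining, $\widehat\rho_n\lesssim\log n/(\lambda_0^2n)$.

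The main obstacle is making the $\ell_1$-strong-convexity bound on $\|\widehat w_n-w^{\mathrm{ref}}\|_1$ deliver $D(\widehat w_n)$ itself rather than only a Bregman distance. If the first-order argument above is loose, I would fall back on relative smoothness: $D(\widehat w_n)\le\beta_DB_\phi(\widehat w_n,w^{\mathrm{ref}})$, combined with the same $\ell_\infty$–$\ell_1$ duality bound on the Bregman term. One must also check that the deterministic $\beta_D$ does not hide an $n$-dependence for the uniform reference, which is exactly what the hypothesis "constants free of $n$" provides.
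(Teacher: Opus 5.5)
Your proposal is correct and follows essentially the same route as the paper. Part (i) uses the same index-level identification and the same inner maximizer $w^{\mathrm{ref}}$. Part (ii) applies Theorem~\ref{thm:PILLAR_convergence} directly with $n_{\mathrm{eff}}=n$, $n_{\min}=1$ and $\lambda_n^2 n_{\min}=\lambda_0^2 n$. Part (iii) compares the inner objective at $\widehat w_n$ and at $w^{\mathrm{ref}}$, derives a quadratic bound $\widehat\rho_n\lesssim \widehat H^2/\lambda_n^2$, and shows $\widehat H=O_{\PP}(\sqrt{\log n})$.

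The one variation is in (iii). You get the quadratic bound in a single step from relative strong concavity tested at $u=w^{\mathrm{ref}}$. That step is valid as written, so the ``obstacle'' you flag is already resolved. The paper instead bounds $\|\widehat w_n-w^{\mathrm{ref}}\|_1\le \widehat H/(2\lambda_n)$ through the stability Lemma~\ref{lem:appen:inner_stability}.
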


Corollary~\ref{cor:fdiv} strengthens the correspondence in Section~\ref{sec:problem_formulation} from an objective-level analogy to a rate-level equivalence. In constrained $f$-divergence DRO, the statistically calibrated ambiguity radius is of order $1/n$ \citep{duchi2019variance,duchi2021statistics}. The corollary shows that the penalized PILLAR formulation reaches the corresponding regime when $\lambda_n\asymp\sqrt n$. The estimator attains the parametric $\sqrt n$-rate, up to the low-rank complexity $r(d_1+d_2)$ and a logarithmic factor, while the fitted adversary realizes precisely the calibrated radius, $\widehat\rho_n=O_{\PP}(\log n/n)$. Beyond the present degenerate configuration, Theorem~\ref{thm:PILLAR_convergence} extends the principle to heterogeneous sources and low-rank structure, with $n_{\mathrm{eff}}$ playing the role of the effective number of observations retained by the tilted distribution.

\subsection{Initialization guarantee}
\label{subsec:init}

We next verify that the soft-max initializer of Section~\ref{sec:proposed_method} satisfies the local condition of Assumption~\ref{ass:initialization}. Throughout this subsection the regularizer is the KL divergence
\(
    D(w)=D_{\mathrm{KL}}(w\|w^{\mathrm{ref}})
    =
    \sum_{k=1}^m w_k\log({w_k}/{w_k^{\mathrm{ref}}}),
\) for an arbitrary reference weight $w^{\mathrm{ref}}\in\operatorname{int}(\Delta_m)$, and the initialization is given by \eqref{eq:softmax_init} and \eqref{eq:softmax_weight_init}. Because the convex program \eqref{eq:softmax_init} is posed over all of $\RR^{d_1\times d_2}$ rather than over the low-rank set, its analysis relies on Assumption~\ref{ass:rsc_rss} along the full set $\mathcal A$ of \eqref{eq:ratio_cone}, together with convexity of the empirical losses; the associated nuclear-norm geometry is standard \citep{negahban2012restricted,koltchinskii2011nuclear} and is collected in Appendix~\ref{appen:proof_thm2}.

\begin{theorem}\label{thm:softmax_init} Suppose $D(w)=D_{\mathrm{KL}}(w\|w^{\mathrm{ref}})$, each $\Lkh$ is convex on $\RR^{d_1\times d_2}$ and Assumptions~\ref{ass:rsc_rss}, \ref{ass:loss_psi2} and~\ref{ass:pop_foc} hold, with $\varrho\ge\sigma_{\min}(\Theta^*)/4$ in Assumption~\ref{ass:rsc_rss}. Let $\Theta^0$ solve \eqref{eq:softmax_init} and $w^0$ be defined in \eqref{eq:softmax_weight_init}. For any $\delta\in(0,1)$, define
\[
A_\delta
=
K_G\sqrt{d_1+d_2+\log(m/\delta)}
+
\frac{\beta K_L}{\lambda}\sqrt{\log(m/\delta)}.
\]
Choose
\(
\tau_n
=
C_\tau
{A_\delta}/{\sqrt{n_{\min}}},\)
where $C_\tau>0$ is a sufficiently large universal constant. Let $\sigma_{\min}=\sigma_{\min}(\Theta^*)$. If
\begin{equation}
    n_{\min}
\ge
C
\max\left\{
\frac{r A_\delta^2}
{\mu^2 I_\Theta\sigma_{\min}^2\min\{1,\sigma_{\min}\}},
\frac{
\left(
\beta\sqrt r\,A_\delta/\mu
+
K_L\sqrt{\log(m/\delta)}
\right)^2
}
{\lambda I_w\sigma_{\min}^2}
\right\},
\end{equation}
then, with probability at least $1-\delta$, the initialization $(\Theta^0,w^0)$ satisfies the condition \eqref{eq:init_assum} in Assumption~\ref{ass:initialization}.

\end{theorem}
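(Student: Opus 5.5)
The plan is to treat \eqref{eq:softmax_init} as a nuclear-norm-regularized M-estimator for the smooth aggregate loss $\widehat F(\Theta):=\lambda\log\bigl(\sum_k w_k^{\mathrm{ref}}e^{\Lkh(\Theta)/\lambda}\bigr)$, and to compare it with its population analogue $F(\Theta):=\lambda\log\bigl(\sum_k w_k^{\mathrm{ref}}e^{L_k(\Theta)/\lambda}\bigr)=\Phi_\lambda(\Theta)$. By \eqref{eq:kl_profile_identity}, $\nabla \widehat F(\Theta)=\sum_k \widehat w_k(\Theta)\nabla\Lkh(\Theta)$, where $\widehat w(\Theta)$ is the softmax weight, and $\nabla F(\Theta^*)=\sum_k w_k^*\nabla L_k(\Theta^*)=0$ by Assumption~\ref{ass:pop_foc}, since the population softmax weight at $\Theta^*$ is exactly $w^*$. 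The argument then follows Negahban et al.\ in three steps: (a) a dual-norm bound on the score $\nabla\widehat F(\Theta^*)$, (b) a cone condition combined with restricted strong convexity of $\widehat F$, and (c) transfer of the $\Theta$ error to the weight error.

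\textbf{Step (a), the score bound.} Decompose
\[
\nabla\widehat F(\Theta^*)=\sum_k w_k^*\bigl(\nabla\Lkh(\Theta^*)-\nabla L_k(\Theta^*)\bigr)+\sum_k\bigl(\widehat w_k(\Theta^*)-w_k^*\bigr)\nabla\Lkh(\Theta^*).
\]
For the first term, each $\nabla\Lkh(\Theta^*)-\nabla L_k(\Theta^*)$ is an average of $n_k$ independent $\psi_2$ matrices (Assumption~\ref{ass:loss_psi2}). A covering of the operator-norm unit ball then gives $\|\cdot\|_{op}\lesssim K_G\sqrt{(d_1+d_2+\log(m/\delta))/n_{\min}}$ uniformly over $k$, and the convex combination preserves this bound. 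For the second term, the softmax map is $(1/\lambda)$-Lipschitz from $\ell_\infty$ to $\ell_1$, so $\|\widehat w(\Theta^*)-w^*\|_1\le \lambda^{-1}\max_k|\Lkh(\Theta^*)-L_k(\Theta^*)|\lesssim \lambda^{-1}K_L\sqrt{\log(m/\delta)/n_{\min}}$. Multiplying by $\|\nabla\Lkh(\Theta^*)\|_{op}\le\beta_L\le\beta$ bounds the second term. Together, $\|\nabla\widehat F(\Theta^*)\|_{op}\lesssim A_\delta/\sqrt{n_{\min}}\le\tau_n/2$ once $C_\tau$ is large.

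\textbf{Step (b), the error rate.} Because each $\Lkh$ is convex, $\widehat F$ is convex, being a log-sum-exp composed with convex functions. The standard decomposability argument with $\tau_n\ge 2\|\nabla\widehat F(\Theta^*)\|_{op}$ and $\operatorname{rank}(\Theta^*)\le r$ puts $\Delta=\Theta^0-\Theta^*$ in the cone $\|\Delta\|_*\le 4\sqrt{2r}\|\Delta\|_F$, which is exactly $\mathcal A$. We next need RSC for $\widehat F$ along $\mathcal A$. Since $\widehat F$ is a convex combination of the $\Lkh$ with $\Theta$-dependent weights plus a nonnegative variance term, its Hessian satisfies
\[
\nabla^2\widehat F\succeq\sum_k\widehat w_k\nabla^2\Lkh,
\]
which yields $\tfrac{\mu}{2}$-RSC from Assumption~\ref{ass:rsc_rss} without smoothness; in zeroth-order form, the Jensen inequality for log-sum-exp gives the same conclusion. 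This holds inside the ball of radius $\varrho$. To avoid requiring RSC globally, we localize: if $\|\Delta\|_F>\sigma_{\min}/4$, convexity lets us rescale to the boundary of that ball and reach a contradiction. Hence $\|\Delta\|_F\lesssim\sqrt r\,\tau_n/\mu\lesssim \sqrt r A_\delta/(\mu\sqrt{n_{\min}})$. The first sample-size condition then converts this to $\|\Delta\|_F^2\le I_\Theta\sigma_{\min}^2\min\{1,\sigma_{\min}\}$, and it also forces $\|\Delta\|_F\le\sigma_{\min}/4\le\varrho$, which keeps the localization consistent.

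\textbf{Step (c), the weight error.} For softmax weights we have the exact identity $\lambda D_{\mathrm{KL}}(w^*\|w^0)=\lambda D_{\mathrm{KL}}(w^*\|\mathrm{softmax}(\widehat L(\Theta^0)/\lambda))$, which is a Bregman divergence of log-sum-exp. Its smoothness (Hessian bounded by $\lambda^{-1}$ in $\ell_\infty$) gives
\[
\lambda B_\phi(w^*,w^0)\le \tfrac{1}{2\lambda}\|\widehat L(\Theta^0)-L(\Theta^*)\|_\infty^2 .
\]
We then bound $|\Lkh(\Theta^0)-L_k(\Theta^*)|\le\beta_L\|\Delta\|_F+|\Lkh(\Theta^*)-L_k(\Theta^*)|\lesssim(\beta\sqrt rA_\delta/\mu+K_L\sqrt{\log(m/\delta)})/\sqrt{n_{\min}}$. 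The second sample-size condition yields $\lambda B_\phi(w^*,w^0)\le I_w\sigma_{\min}^2$. A union bound over the two concentration events gives probability $1-\delta$.

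The main obstacle is Step (b): establishing RSC of the log-sum-exp aggregate along $\mathcal A$ using only the empirical RSC of the components. This is where the weight-curvature term and the localization to radius $\varrho$ must be handled carefully. The operator-norm concentration with the $\log m$ union bound in Step (a) is secondary.
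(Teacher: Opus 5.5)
Your proposal is correct and follows essentially the same route as the paper's proof. The common steps are a two-term score decomposition; the cone condition plus restricted strong convexity of the profiled loss via monotonicity, translation and convexity of log-sum-exp; and the softmax stability bound $\lambda D_{\mathrm{KL}}(w^*\|w^0)\le\|L(\Theta^*)-\widehat L(\Theta^0)\|_\infty^2/(2\lambda)$ combined with the mean value theorem on the $\varrho$-ball. Your version of the decomposition puts $w^*$ on the centered gradients and $\nabla\Lkh(\Theta^*)$ in the weight-fluctuation term, which is equally valid and avoids the Jensen step for $\|\nabla L_k(\Theta^*)\|_F$. The localization in Step (b) is unnecessary: Assumption~\ref{ass:rsc_rss} imposes the curvature bound globally along $\mathcal A$, and only the $\beta_L$ gradient bound is local, so you can apply RSC directly at the pair $(\Theta^*,\Theta^0)$ without rescaling.
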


Theorem~\ref{thm:softmax_init} establishes the initialization required by Theorem~\ref{thm:PILLAR_convergence}. The tuning parameter $\tau_n$ is set at the stochastic-gradient scale of the profiled loss, with the two terms of $A_\delta$ capturing the gradient noise and the propagation through the softmax weights. Although stated for KL regularization, the same practical principle extends to other  warm starts, such as a pooled nuclear-norm estimator or a continuation from a larger $\lambda$, after which BGDMA refines the estimator for the final PILLAR objective.

\subsection{Benefit of prior-informed robustness}
\label{subsec:prior_benefit_gdro}

We next quantify when regularizing the adversary toward a reference mixture improves upon vanilla group DRO. Such an improvement holds when the deployment mixtures of interest remain reasonably close to $w^{\mathrm{ref}}$, the setting in which prior information is meaningful.

Because this comparison turns on the choice of $\lambda$, we restore the subscript in this subsection and write $(\Tsl,\wsl)$ for the PILLAR saddle point of Definition~\ref{def:pillar_target}. Alongside it we define two reference models
\begin{equation}
        \Theta^*_{\mathrm{ref}}
    \in
    \argmin{\Theta\in\Xi(r)} \summ w^{\mathrm{ref}}_k L_k(\Theta);
    \qquad
    \Theta^*_{\mathrm{G}}
    \in
    \argmin{\Theta\in\Xi(r)} \max_{k \in [m]}L_k(\Theta),
    \label{eq:refGdef}
\end{equation}
which are the optimal low-rank models under the prior weights and under vanilla group DRO, respectively. We also define the source-risk heterogeneity as
\begin{equation}
      H:= \max_{k\in[m]}L_k(\Theta^*_{\mathrm{ref}})-\min_{k\in[m]}L_k(\Theta^*_{\mathrm{ref}}).
    \label{eq:source_hetero}
\end{equation}
The following result makes explicit when reduction in conservativeness achieved by prior informed regularization outweights the costs of source heterogeneity and possible prior mis-specification.

\begin{theorem}
\label{thm:prior_benefit_gdro}
Suppose Assumption~\ref{ass:w_regularizer_mirror} holds and that $D\ge0$ with $D(w^{\mathrm{ref}})=0$ for some $w^{\mathrm{ref}}\in\operatorname{relint}(\Delta_m)$. For any radius $\rho\geq 0$, define the worst-case risk and the conservativeness gap by
\[
\mathcal R_\rho(\Theta)
    :=
    \sup_{q\in\Delta_m: D(q)\leq \rho} \summ q_k L_k(\Theta),
    \qquad
    \Delta_{\mathrm{ref}}
    := \summ w^{\mathrm{ref}}_k \bigl(L_k(\Theta^*_{\mathrm{G}})-L_k(\Theta^*_{\mathrm{ref}})\bigr).
\]
Then, for every evaluation radius $\rho$, we have
\begin{equation}
    \mathcal R_\rho(\Tsl)-\mathcal R_\rho(\Theta^*_{\mathrm{G}})
    \leq
    -\Delta_{\mathrm{ref}}
    +
    \frac{H^2}{8\lambda}
    +
    \lambda \rho.
    \label{eq:worst_prior_valid_gain}
\end{equation}
In particular, if $\frac{H^2}{8\lambda}+ \lambda \rho \leq \Delta_{\mathrm{ref}}$, then PILLAR has strictly smaller worst-case evaluation risk than vanilla group DRO.
\end{theorem}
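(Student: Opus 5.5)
The plan is to bound the two worst-case risks separately and then subtract. We upper bound $\mathcal R_\rho(\Tsl)$ through the PILLAR objective $\Phi_\lambda$ of \eqref{eq:Phi_lambda}, and lower bound $\mathcal R_\rho(\Theta^*_{\mathrm G})$ by evaluating it at the reference mixture.

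\emph{Upper bound for PILLAR.} Fix $q\in\Delta_m$ with $D(q)\le\rho$. By definition of $\Phi_\lambda$ as a maximum over $w$,
\[
\summ q_kL_k(\Tsl)\le \Phi_\lambda(\Tsl)+\lambda D(q)\le \Phi_\lambda(\Tsl)+\lambda\rho .
\]
Taking the supremum over such $q$ gives $\mathcal R_\rho(\Tsl)\le\Phi_\lambda(\Tsl)+\lambda\rho$. Since $\Tsl$ minimizes $\Phi_\lambda$ over $\Xi(r)$ and $\Theta^*_{\mathrm{ref}}\in\Xi(r)$, we get $\Phi_\lambda(\Tsl)\le\Phi_\lambda(\Theta^*_{\mathrm{ref}})$. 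It then remains to show
\[
\Phi_\lambda(\Theta^*_{\mathrm{ref}})\le \summ w^{\mathrm{ref}}_kL_k(\Theta^*_{\mathrm{ref}})+\frac{H^2}{8\lambda}.
\]

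\emph{Key step: quadratic growth of $D$ around $w^{\mathrm{ref}}$.} Because $D\ge0$ and $D(w^{\mathrm{ref}})=0$, the point $w^{\mathrm{ref}}$ minimizes the convex function $D$ over $\Delta_m$. The first-order optimality condition therefore gives $\langle\nabla D(w^{\mathrm{ref}}),w-w^{\mathrm{ref}}\rangle\ge0$ for all $w\in\Delta_m$; since $w^{\mathrm{ref}}$ lies in the relative interior, this inner product is in fact $0$. Combining this with the lower inequality in \eqref{eq:assum:relative} and the $1$-strong convexity of $\phi$ in Assumption~\ref{ass:w_regularizer_mirror} yields
\[
D(w)\ge B_\phi(w,w^{\mathrm{ref}})\ge\tfrac12\|w-w^{\mathrm{ref}}\|_1^2 .
\]
Next, let $c$ be the midpoint of $\max_kL_k(\Theta^*_{\mathrm{ref}})$ and $\min_kL_k(\Theta^*_{\mathrm{ref}})$. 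Since $\sum_k(w_k-w^{\mathrm{ref}}_k)=0$, we may subtract $c$ from every loss and apply H\"older's inequality:
\[
\summ (w_k-w^{\mathrm{ref}}_k)L_k(\Theta^*_{\mathrm{ref}})=\summ (w_k-w^{\mathrm{ref}}_k)\bigl(L_k(\Theta^*_{\mathrm{ref}})-c\bigr)\le \tfrac H2\|w-w^{\mathrm{ref}}\|_1 .
\]
Writing $t=\|w-w^{\mathrm{ref}}\|_1$, the maximand in $\Phi_\lambda(\Theta^*_{\mathrm{ref}})$ is therefore at most
\[
\summ w^{\mathrm{ref}}_kL_k(\Theta^*_{\mathrm{ref}})+\sup_{t\ge0}\Bigl\{\tfrac{tH}{2}-\tfrac{\lambda t^2}{2}\Bigr\}=\summ w^{\mathrm{ref}}_kL_k(\Theta^*_{\mathrm{ref}})+\frac{H^2}{8\lambda},
\]
which is the claimed bound.

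\emph{Lower bound for group DRO and conclusion.} Since $\rho\ge0$ and $D(w^{\mathrm{ref}})=0$, the reference mixture $w^{\mathrm{ref}}$ is feasible in the supremum defining $\mathcal R_\rho$. Hence $\mathcal R_\rho(\Theta^*_{\mathrm G})\ge\summ w^{\mathrm{ref}}_kL_k(\Theta^*_{\mathrm G})$. Subtracting this from the upper bound for $\mathcal R_\rho(\Tsl)$, the two reference-weighted risks combine to $-\Delta_{\mathrm{ref}}$, which gives \eqref{eq:worst_prior_valid_gain}. The final sentence of the theorem follows immediately from \eqref{eq:worst_prior_valid_gain}.

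I expect the main point needing care to be the quadratic-growth step. It requires justifying that the linear term $\langle\nabla D(w^{\mathrm{ref}}),w-w^{\mathrm{ref}}\rangle$ is nonnegative, which is where the relative-interior assumption and the differentiability implicit in Assumption~\ref{ass:w_regularizer_mirror} are used. Everything else is a direct chain of inequalities.
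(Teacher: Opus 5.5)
Your proposal is correct and follows the paper's proof essentially step for step: the Lagrangian bound $\mathcal R_\rho(\Tsl)\le\Phi_\lambda(\Tsl)+\lambda\rho$, then comparison with $\Theta^*_{\mathrm{ref}}$, then the quadratic growth $D(w)\ge\tfrac12\|w-w^{\mathrm{ref}}\|_1^2$ combined with mid-range centering to produce $H^2/(8\lambda)$, and finally feasibility of $w^{\mathrm{ref}}$ to lower bound $\mathcal R_\rho(\Theta^*_{\mathrm{G}})$. One small correction to your last line: under the non-strict hypothesis $H^2/(8\lambda)+\lambda\rho\le\Delta_{\mathrm{ref}}$ the bound only gives $\mathcal R_\rho(\Tsl)\le\mathcal R_\rho(\Theta^*_{\mathrm{G}})$, so the ``strictly smaller'' conclusion needs the strict inequality $\Delta_{\mathrm{ref}}>H^2/(8\lambda)+\lambda\rho$, which is how the paper's appendix version states it.
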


Theorem~\ref{thm:prior_benefit_gdro} compares PILLAR with vanilla group DRO over test mixtures within $D$-radius $\rho$ of the reference mixture. The gap $\Delta_{\mathrm{ref}}$ measures how much prior-weighted risk sacrificed by group DRO when it protects against the entire simplex. PILLAR recovers this excess conservativeness, up to two costs: the soft-robustness cost $H^2/(8\lambda)$ and $\lambda\rho$. The first reflects source heterogeneity and decreases as adversary is pulled toward the prior. The second reflects possible prior misspecification and increases with $\lambda$. Thus, stronger regularization is beneficial when the prior is reliable, while weaker regularization is safer when substantial deviation from the prior is possible. For a fixed $\rho$, the bound is minimized at $\lambda^*=H/\sqrt{8\rho}\asymp\rho^{-1/2}$, so the optimal regularization strength increases as the uncertainty radius shrinks. In particular, at the statistically calibrated radius $\rho\asymp 1/n$, this gives $\lambda_n\asymp\sqrt n$, consistent with Corollary~\ref{cor:fdiv}. Appendix~\ref{appen:thm3proof} provides the proof and a sharper characterization of the dependence on $\lambda$, including the detailed corresponding radius-matching property.



\section{Simulation studies}\label{sec:sim}

In this section, we study the empirical performance of PILLAR in a heterogeneous multi-task regression model. Section~\ref{subsec:sim_rate} examines the optimization and statistical error given by Theorem~\ref{thm:PILLAR_convergence}, while  Section~\ref{subsec:sim_pred} compares PILLAR with benchmark estimators evaluates predictive robustness across different deployment distributions.

\subsection{Design and competing estimators}
\label{subsec:sim_design}

We first describe the ingredients shared by every experiment. The quantities that vary between them, namely the dimensions $(d_1,d_2)$, the number of sources $m$, the per-source sample size $n_k$, the rank $r$ and the grid of $\lambda$, are stated as each experiment is introduced.

We generate data $(x_i\uk,y_i\uk)$ for $m$ sources using different distributions. For the regression parameter $\Theta_k$, the $m$ sources share a latent low-rank center $U_0V_0^\top$, with $U_0 \in \RR^{d_1 \times r^*}, V_0 \in \RR^{d_2 \times r^*}$, and each source $k$ ($k \in [m]$) deviates from it individually, so we have $\Theta_k=(U_0+\Delta^U_k)(V_0+\Delta^V_k)^\top \in \RR^{d_1 \times d_2}$, and we generate the covariates $x\uk_i \in \RR^{d_1}$ and multi-task outcome $y\uk_i \in \RR^{d_2}$ using a linear model
\begin{equation}
\label{eq:sim_dgp}
    x_i\uk\sim\mathcal N_{d_1}(\mu_k,\Sigma_k),
    \quad
    y_i\uk = \Theta_k^\top x_i\uk + \varepsilon_i\uk,
    \quad
    \varepsilon_i\uk\sim\mathcal N(0,\sigma_k^2 I_{d_2}),
\end{equation}
where $U_0,V_0$ and  $\Delta^U_k \in \RR^{d_1\times r^*},\Delta^V_k \in \RR^{d_2 \times r^*}$ have independent
centered Gaussian entries, those of the perturbations having half the standard deviation of those of the center. Heterogeneity therefore enters through the regression
operator $\Theta_k$, the covariate law $(\mu_k,\Sigma_k)$ and the noise level $\sigma_k^2$. Thus, heterogeneity extends beyond covariate shift, while the shared low-rank structure still allows information borrowing across sources.  The detailed specification of all the parameters in the model can be found in Appendix Section \ref{appen:sim_details}.

We take
$\Lkh(\Theta)=\{\|Y\uk-X\uk\Theta\|_F^2-\|Y\uk\|_F^2\}/(2n_kd_2)$ and solve
\eqref{eq:empirical_minmax} by BGDMA with the balancing penalty and KL regularization
towards $w^{\mathrm{ref}}=m^{-1}\mathbf 1$. The offset $\|Y\uk\|_F^2$ does not change the gradient in $\Theta$, but prevents the adversary from favoring sources simply because they have larger irreducible noise. Instead, it emphasizes sources whose signal is poorly captured by the shared low-rank model \citep{wang2026distributionally}. At the population level, this removes dependence on $\sigma_k$, yielding
\begin{equation}
\label{eq:sim_pop_risk_main}
    L_k(\Theta)=\frac{1}{2d_2}\Bigl\{\operatorname{tr}\bigl[(\Theta_k-\Theta)^\top A_k(\Theta_k-\Theta)\bigr]-\operatorname{tr}\bigl[\Theta_k^\top A_k\Theta_k\bigr]\Bigr\}, \ \mbox{with} \ A_k:=\Sigma_k+\mu_k\mu_k^\top .
\end{equation}
Since \eqref{eq:sim_pop_risk_main} is available in closed form, we compute the population saddle point $(\Tsl,\wsl)$ of \eqref{eq:PILLAR_def} to numerical
precision, and use $\Tsl$, not any individual $\Theta_k$, as the estimation target: it
is the object the estimator is defined to recover, and it moves with $\lambda$.

We compare PILLAR with 1) \textsc{Per-task}, pooled ridge fitted separately
per response, with neither a rank constraint nor reweighting; 2) \textsc{SVD}, its
truncation to rank $r^*$, low-rank by projection rather than by
optimization; 3) \textsc{BGD}, factorized low-rank gradient descent with the same
balancing penalty on the uniformly averaged source losses; 4) \textsc{L21}, $\ell_{2,1}$ row-sparse
multi-task regression \citep{argyriou2008convex}; 5) \textsc{Nuclear}, its nuclear-norm
counterpart \citep{koltchinskii2011nuclear,negahban2012restricted}, the convex
relaxation without robustness; and 6) \textsc{Maximin} \citep{sagawa2019distributionally}. Of particular interest are two ablations: BGD retains the low-rank structure but fixes the source weights at $w^{\mathrm{ref}}$, while maximin retains adversarial reweighting but removes the low-rank structure. 
All methods see the same data and, where applicable, the same rank and initialization. Following Section~\ref{subsec:init}, BGDMA is warm-started at the pooled ridge estimate followed by a balanced rank-$r$ singular value decomposition, with $w^0=w^{\mathrm{ref}}$, and is run with constant step sizes throughout.

On independent test samples we measure the explained-variation reward
\begin{equation}
\label{eq:sim_reward}
    R\uk(\Theta)
    =
    \bigl\{\|Y_{\mathrm{test}}\uk\|_F^2-\|Y_{\mathrm{test}}\uk-X_{\mathrm{test}}\uk\Theta\|_F^2\bigr\}
    \big/ (n_{\mathrm{test}}d_2),
\end{equation}
for which larger is better. Because $R\uk=-2\Lkh$ on the test data, the worst-source reward aligns directly with the $\lambda\to 0$ PILLAR objective. We report per-source rewards, their average and their worst-source value. For a mixture
$\QQ_v=\summ v_k\PP\uk$, we similarly report the mixture reward $R_v$ and its minimum over a collection of deployment mixtures.

\subsection{Optimization and estimation error}
\label{subsec:sim_rate}

We begin with the two error components of Theorem~\ref{thm:PILLAR_convergence}: the optimization error, which the theorem predicts decays geometrically, and the statistical error $\varepsilon_n$ of \eqref{eq:eps_n}, whose effective sample size $n_{\mathrm{eff}}=(\summ (w^*_{\lambda,k})^2/n_k)^{-1}$ makes a specific prediction about the number of sources. Throughout this subsection we take $m=3$ sources with $d_1=100$ features, $d_2=20$ tasks and true rank $r^*=5$.

\paragraph{Optimization error at the population level.}
We isolate the optimization error by replacing the empirical risks with the population risks \eqref{eq:sim_pop_risk_main}, so that no sampling noise intervenes and BGDMA is directed at the exact saddle point. Fixing $\lambda=1$, for which $\wsl\approx(0.34,0.46,0.20)$ so that no source dominates the target, we run the algorithm from a common initialization at four constant step sizes $\eta=\eta_\Theta=\eta_w$ ranging over $\{0.005,0.01,0.02,0.04\}$, an eightfold range.
Figure~\ref{fig:sim_population}(a) plots the resulting $\log\|\Theta^{(t)}-\Tsl\|_F$ against the iteration count. After a short transient every curve is straight on the log scale, which is the signature of the geometric factor $(1-\gamma)^T$ in \eqref{eq:theta_convergence},
and it appears across the whole range of $\eta$ rather than at a single tuned value. Larger step sizes lead to faster contraction rate over the range considered, although the improvement is limited by the coupling between the factor and weight updates. 

\paragraph{The population adversary as $\lambda$ varies.}
Solving the same population problem over a logarithmic grid of sixteen values $\lambda\in[0.1,100]$ traces the path of $\wsl$ shown in
Figure~\ref{fig:sim_population}(b). At $\lambda=0.1$ the adversary places more than half its mass on the hardest source and under 0.1 on the easiest; by $\lambda=100$ it
has returned to the uniform reference. Thus $\lambda$ moves the PILLAR target continuously from near worst-source protection to the reference mixture, illustrating the robustness--prior borrowing trade off as discussed in Section \ref{sec:method:prior}.

\begin{figure}[htb!]
    \centering
    \begin{subfigure}[b]{0.49\textwidth}
        \centering
        \includegraphics[width=\textwidth]{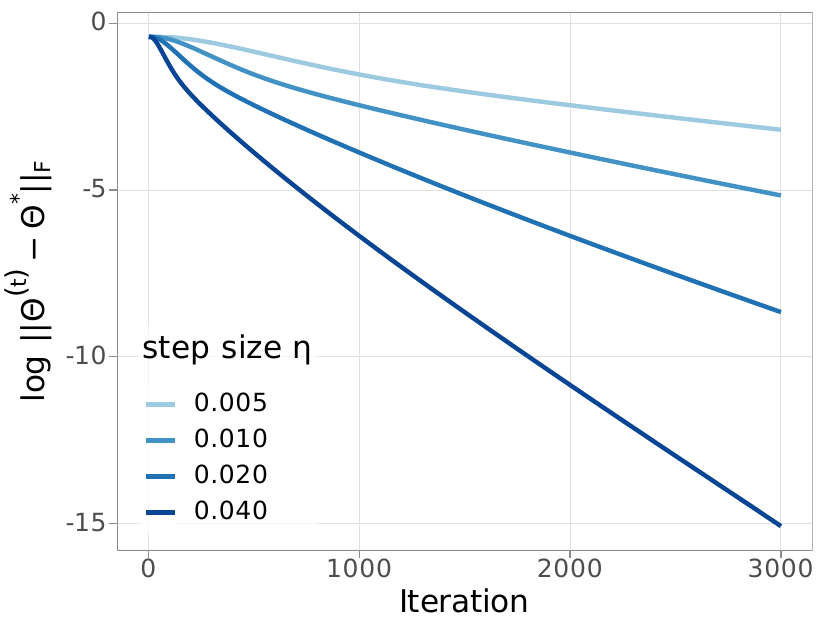}
        \caption{log optimization error}
        \label{fig:population_convergence}
    \end{subfigure}
    \hfill
    \begin{subfigure}[b]{0.49\textwidth}
        \centering
        \includegraphics[width=\textwidth]{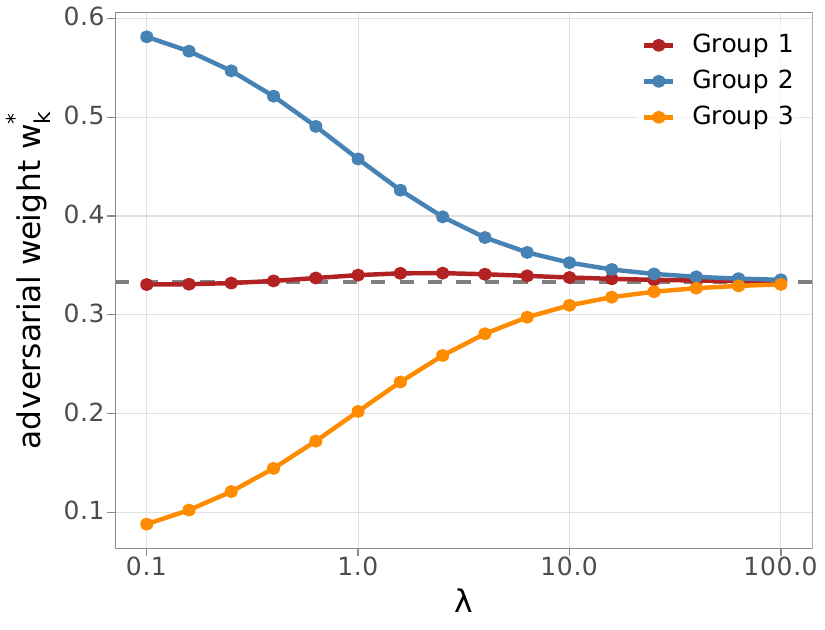}
        \caption{Population PILLAR weights}
        \label{fig:population_weights}
    \end{subfigure}

    \caption{Optimization behavior of BGDMA algorithm.
    \textbf{(a)} $\log\|\Theta^{(t)}-\Tsl\|_F$ against iteration for four constant step
    sizes at $\lambda=1$, all converging to the same saddle.
    \textbf{(b)} Population-optimal weights $w^*_{\lambda,k}$ over a logarithmic grid of
    $\lambda$. Small $\lambda$ concentrates mass on the source that the shared low-rank
    model fits worst; large $\lambda$ returns the weights to
    $w^{\mathrm{ref}}=1/m$ (dashed).}
    \label{fig:sim_population}
\end{figure}

\paragraph{Statistical error in finite samples.} We next assess the finite-sample error $\|\widehat\Theta-\Tsl\|_F$ via two sets of experiments: 1) set $m=3$ and increase the common per-source sample size $n_k$ from $600$ to $3600$; 2) set $n_k=400$ and increase the number of sources $m$ from $4$ to $59$. Each experiment is evaluated at $\lambda\in\{0.1,0.5,1\}$ and working ranks $r\in\{3,5,10\}$, with true rank $r^*=5$, averaged over $B=50$ independent replications. More details about the configurations can be found in Appendix~\ref{appen:sim_details}. Figure~\ref{fig:sim_rate} reports the mean log error with $\pm1$ standard deviation bands, on logarithmic axes so that a slope estimates a rate exponent.

In panel (a), the correctly specified curves are parallel to the plotted $n^{-1/2}$ guide. This is the behavior \eqref{eq:eps_n} predicts once $n_{\mathrm{eff}}\asymp n$ with $m$ held fixed.
When the true rank is unknown, rank under-specification produces persistent approximation bias, whereas over-specification still allows convergence although more slowly over this range. Therefore, when the true rank is unknown, it is safer to use a more conservative rank. 

In panel (b), the behavior depends sharply on $\lambda$, and this can be reflected in the theory through $n_{\mathrm{eff}}$. When $\lambda$ is moderate, the adversary remains diffuse, $w^*_{\lambda,k}\approx1/m$, so $n_{\mathrm{eff}}\asymp mn$ and each new environment contributes; when $\lambda$ is small, the adversary concentrates on a few high-loss sources, $n_{\mathrm{eff}}$ then saturates at their scale, and the remaining environments go almost unused. Additional environments therefore contribute informative only to the extent that the population adversary remains sufficiently diffuse, which is controlled by $\lambda$.

\begin{figure}[htb!]
    \centering
    \begin{subfigure}[b]{0.9\textwidth}
    \centering
\includegraphics[width=\linewidth]{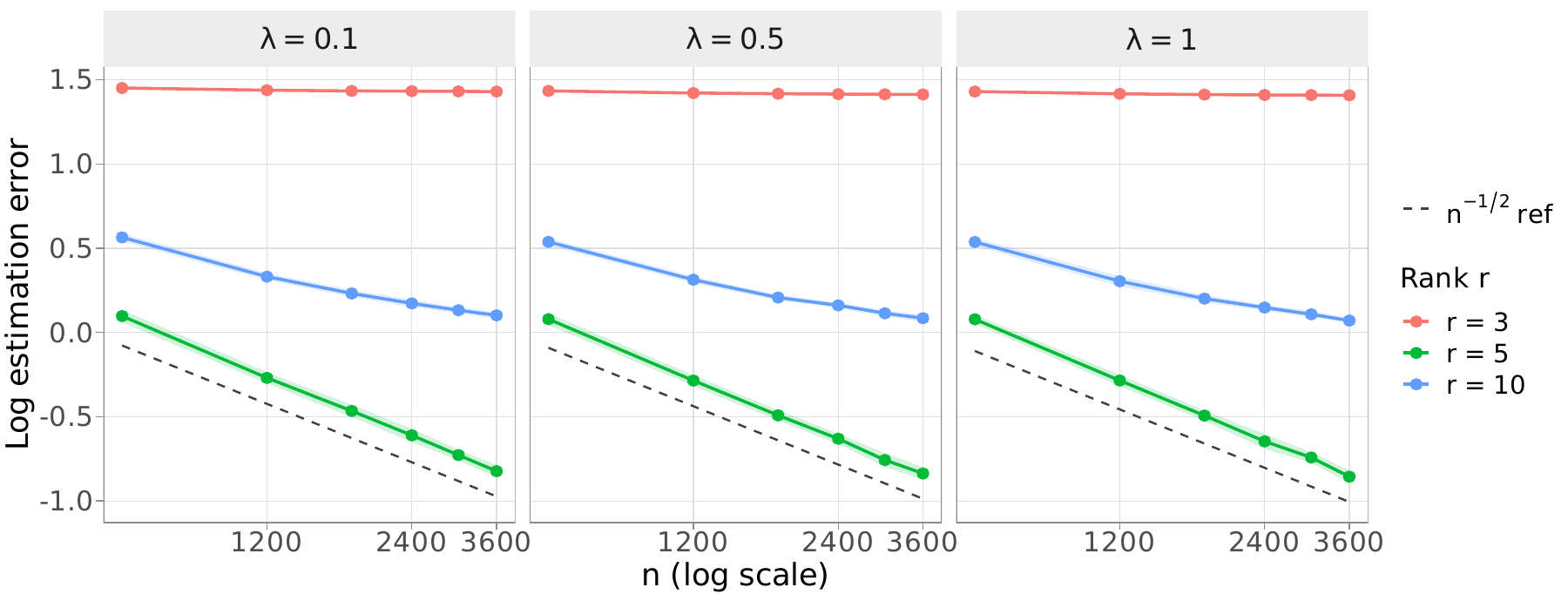}
\caption{$m=3$ and $n_k$ increasing from $600$ to $3600$ ($B=50$ replications).}
    \end{subfigure}
    \begin{subfigure}[b]{0.86\textwidth}
    \centering
\includegraphics[width=\linewidth]{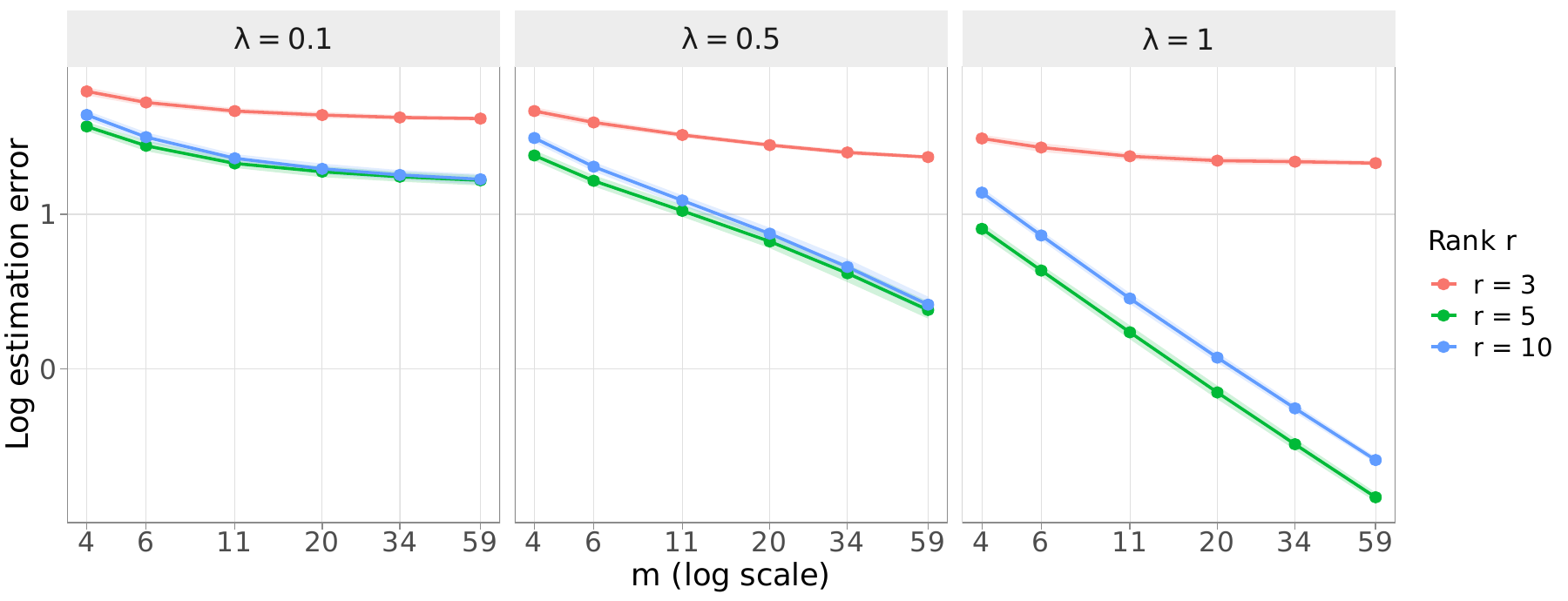}
\caption{$n_k=400$ and $m$ increasing from $4$ to $59$ ($B=50$ replications).}
    \end{subfigure}
    \caption{Estimation error relative to the population target $\Tsl$: mean
    $\log\|\widehat\Theta-\Tsl\|_F$ over independent replications with $\pm1$ standard deviation
    bands, by $\lambda$ (panels) and working rank $r$ against $r^*=5$
    (colors).}
    \label{fig:sim_rate}
\end{figure}

\begin{remark}[On the valid range of $\lambda$.]
    Both estimation error studies use $\lambda\in\{0.1,0.5,1\}$, and Section~\ref{subsec:sim_pred} adds the unregularized case $\lambda=0$. These values are possible to lie below the threshold $\lambda\ge c_1\sqrt{\kappa^*}$ of Theorem~\ref{thm:PILLAR_convergence}, yet BGDMA converged in every configuration we ran.
    As Remark~\ref{rem:lambda_condition} explains, the condition is sufficient for our argument rather than a constraint for the method, and the numerical evidence suggests that the usable range of $\lambda$ is considerably wider than the theory currently certifies. 
\end{remark}


\subsection{Predictive robustness under mixture shift}
\label{subsec:sim_pred}

We evaluate the predictive performance of PILLAR in a larger and harder design: $d_1=200$ features, $d_2=40$ tasks, $m=3$ sources and true rank $r^*=5$, with $n_k=150$ observations per source. Since $n_k<d_1$, no source can be well-fitted on its own, and it is thus important to borrow strength across sources. PILLAR is fitted at $\lambda\in\{0,0.1,0.5,1,5,10\}$ with the working rank set to $r^*$, and all estimators are evaluated on independent test samples of $n_{\mathrm{test}}=2000$ observations. We evaluate predictive robustness in three complementary settings: the observed sources; new environments outside the observed source family; and a continuum of source mixtures interpolating between the two.

\paragraph{The observed environments.}
Figure~\ref{fig:sim_pred}(a) evaluates every estimator on fresh test samples from each of the 3 training source distributions, reporting the 3 per-source rewards together with their average and minimum. Every PILLAR fit exceeds every benchmark in
worst-source reward, the criterion directly targeted as $\lambda \to 0$: the best benchmark reaches $6.92$ (\textsc{BGD}), while PILLAR ranges from $7.29$ at $\lambda=10$ to $8.16$ at
$\lambda=0$. The two ablations help separate the sources of this gain. 
\textsc{BGD} removes the adversary but keeps the
low-rank geometry, reducing the worst-source reward by $1.24$ relative to PILLAR at $\lambda=0$. \textsc{Maximin} retains adversarial reweighting but removes the low rank structure and reaches only $4.39$. With $n_k<d_1$, the unstructured minimax fit cannot borrow strength across sources. The panel also displays the robustness-efficiency trade-off: from $\lambda=0$ to $\lambda=10$, the worst-source reward falls from $8.16$ to $7.29$ while the average rises from $8.48$ to $9.42$. 

\paragraph{Out-of-distribution environments.}
Figure~\ref{fig:sim_pred}(b) leaves the convex hull of the observed sources
altogether. We draw 100 new environments
which are not any mixture of the observed sources; the full OOD configuration can be found in Appendix \ref{appen:subsec:sim_eval}. In this out-of-distribution setting, PILLAR shows the strongest overall reward distribution, with a median reward of $8.29$ against $7.74$ for the best benchmark. The gain therefore transports beyond the mixture family, suggesting that adversarial reweighting recovers latent directions that are stable across environments rather than merely reweighting the training sample. 

\begin{figure}[htbp]
    \centering
    \includegraphics[width=0.9\linewidth]{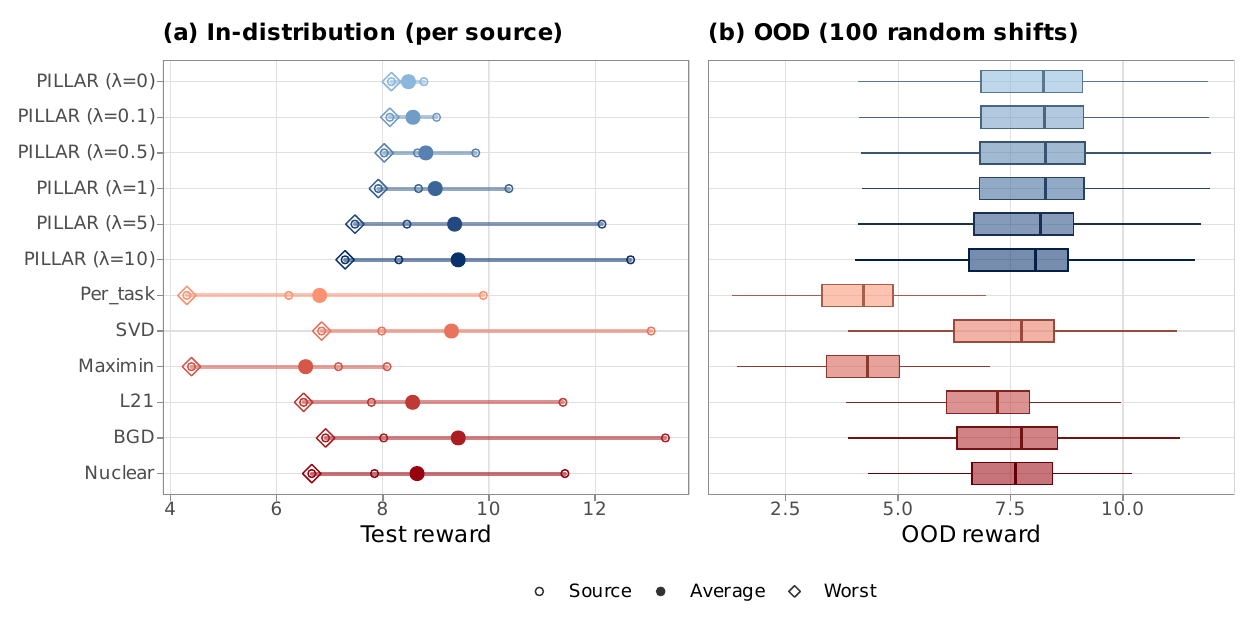}
    \caption{Predictive comparison in the design of Section~\ref{subsec:sim_pred}; larger rewards \eqref{eq:sim_reward} are better, PILLAR
    in blue and benchmarks in red. \textbf{(a)} In-distribution, on $2000$ fresh
    observations per training source: segments give the range of the three per-source
    rewards, open circles the sources, filled circles their average, diamonds the
    worst source. \textbf{(b)} Out-of-distribution, over $100$ environments whose
    signal matrix, covariate mean and scale, and noise level are all drawn outside the
    training configuration.}
    \label{fig:sim_pred}
\end{figure}

\paragraph{A continuum of deployment mixtures.}
The third regime evaluates PILLAR on a continuum of deployment mixtures and directly examines the prediction of Theorem~\ref{thm:prior_benefit_gdro}. The theorem bounds the excess worst-case risk of PILLAR relative to vanilla group DRO over the
neighborhood $\{q\in\Delta_m:D(q)\le\rho\}$. 
To construct an empirical analog of $\rho$, at each proximity level $\alpha$, we draw $500$ mixture weights $v\sim\mathrm{Dirichlet}(\alpha\mathbf 1_m)$, evaluate every estimator on a fresh test sample from $\QQ_v=\summ v_k\PP\uk$, and record the worst reward $\min_v R_v$ over the $500$ draws. Thus small $\alpha$ spreads the mixtures toward the simplex boundary, corresponding to a larger evaluation radius, whereas large $\alpha$ concentrates them near $w^{\mathrm{ref}}$, so $\rho \to 0$. We report this worst reward over the rescaled proximity level for all the methods in Figure~\ref{fig:sim_continuum}. 

We observe two patterns from the results. First, PILLAR lies above every benchmark along the whole path. Its margin is largest when the deployment mixtures are most dispersed and vanishes as they concentrate near the reference mixture: $+1.20$ in reward over the best benchmark at the adversarial end (left) against $+0.04$ at the uniform end (right), where the evaluation mixture is close to the training composition. Second, the best-performing $\lambda$ is interior and rises as the neighborhood contracts, from $\lambda=1$ at the adversarial end to $\lambda=10$ at the uniform end. This is consistent with the term $H^2/8\lambda +\lambda \rho$ in Theorem~\ref{thm:prior_benefit_gdro}: the best $\lambda$ is taken interior, and as $\rho$ gets smaller the best $\lambda$ grows larger.

\begin{figure}[htb]
    \centering
    \includegraphics[width=0.87\linewidth]{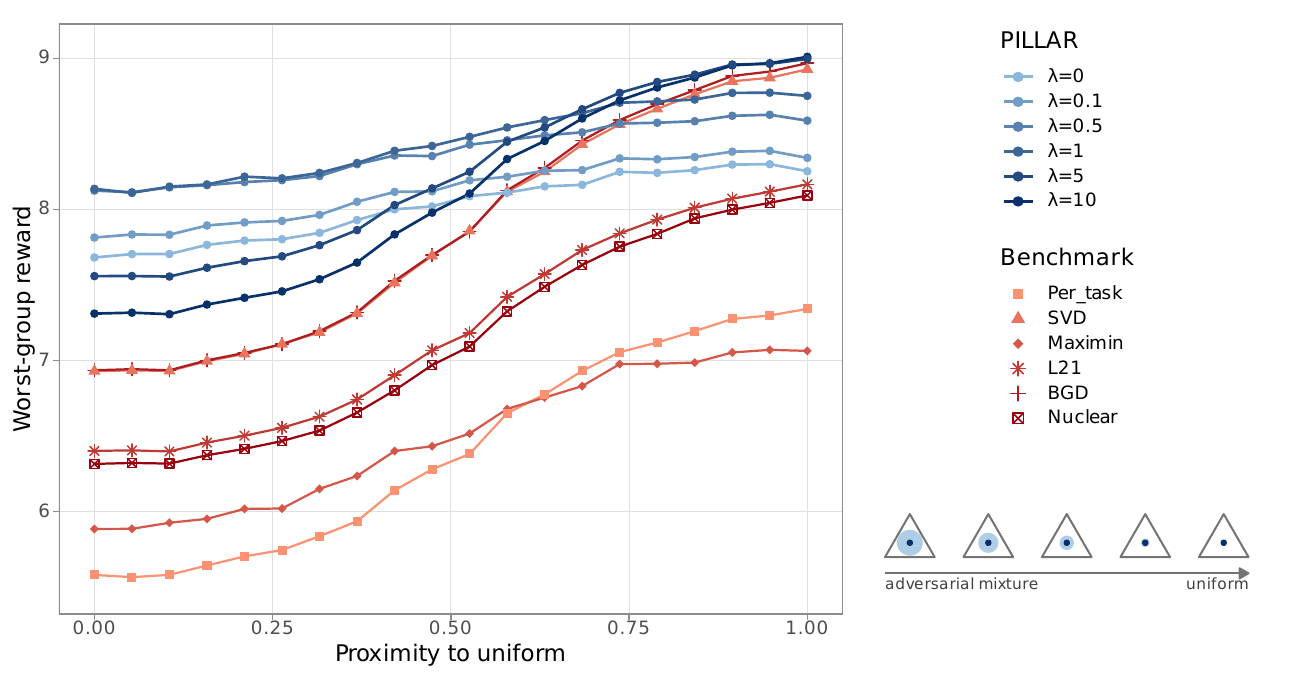}
    \caption{Worst-case reward over a continuum of deployment mixtures. The $x$-axis is the rescaled
    $\log\alpha$ over $\alpha\in[0.1,100]$, so the left end corresponds to a large
    evaluation radius $\rho$ in Theorem~\ref{thm:prior_benefit_gdro} and the right end
    to $\rho\to0$. Curves average $20$ training draws. PILLAR (blue) exceeds every benchmark (red) throughout.}
    \label{fig:sim_continuum}
\end{figure}

\section{Application: Robust EHR-Based Risk Prediction}
\label{sec:real}

Suicide is a leading cause of death worldwide, with an estimated $727{,}000$ deaths in 2021 \citep{who2025suicide}. Health systems increasingly use routinely collected EHR data to flag patients at elevated short-term risk; such models have now been developed and validated across several systems \citep{barak2017predicting,barak2020validation,bentley2025clinician}. Two features of this setting motivate the two components of PILLAR. First, EHR data usually drift as diagnostic coding, care pathways, and recorded incidence all change over calendar time, so a model fitted on historical data may be deployed in a future period with a different composition. We therefore treat calendar periods as sources and use adversarial reweighting to guard against temporal shift. Second, suicide risk is strongly age-dependent: presentation, incidence, and the predictive value of diagnoses can differ substantially across age groups, while some strata (e.g. youngest) are too small to support separate high-dimensional models. We thus treat age strata as tasks and jointly estimate them through a shared low-rank representation. The application therefore follows  the framework of Section~\ref{sec:problem_formulation}, with sources indexing calendar time and tasks indexing age group.

\subsection{Cohort and model specification}
\label{subsec:real_model}

The cohort is drawn from the Mass General Brigham (MGB) Research Patient Data Registry, a centralized repository of longitudinal EHR data from MGB. Patients are included if they have a mental-health diagnosis (PheCodes 295--306) recorded between 2005 and 2022 and at least one code in the preceding two years. PheCodes groups related international classification of diagnosis (ICD) codes into clinically coherent categories \citep{denny2013systematic} such as 297.2 for suicide attempt or self-inflicted injury. We write $t_0$ for that index mental-health encounter, and take as outcome a recorded PheCode 297.2 within the following $48$ weeks. Predictors are built from the 12 months before $t_0$: $233$ code-by-window counts using codes selected by the knowledge-graph-based ONCE procedure \citep{xiong2026bending} and recorded separately over the two 6-month windows preceding $t_0$, together with sex, race and age at $t_0$. This yields $p=238$ variables, standardized using training-sample statistics only. 

The $m=4$ sources are the calendar periods and the $T=6$ tasks are the age strata at $t_0$, yielding the $24$ cells in Table~\ref{tab:real_cohort} and $328{,}823$ patients. The tabel also shows substantial heterogeneity: across calendar periods, cohort size decreases by a factor of $5.5$ while the recorded event rate more than doubles. Therefore, simple pooling is dominated by the oldest data, whereas the period closest to deployment is the smallest and differs most from the pooled training distribution. Event rates also vary markedly across age strata, making separate high-dimensional fitting unstable in the smaller groups. 

\begin{table}[htbp]
\centering
\small
\caption{Cohort: patients and, in parentheses, the one-year suicide-attempt rate
(\%) per calendar period $\times$ age stratum. Sources are periods, tasks are strata.
Marginal rates are computed from the cell counts.}
\label{tab:real_cohort}
\begin{tabular}{@{}lrrrrrrr@{}}
\toprule
& \multicolumn{6}{c}{Age stratum at $t_0$} & \\
\cmidrule(lr){2-7}
Period & $0$--$14$ & $14$--$18$ & $18$--$25$ & $25$--$40$ & $40$--$60$ & $60$--$100$ & Total \\
\midrule
2005--2014 & 5{,}911 & 6{,}133 & 16{,}475 & 34{,}171 & 47{,}387 & 27{,}580 & 137{,}657 \\
           & (0.20) & (0.91) & (0.60) & (0.34) & (0.19) & (0.09) & (0.29) \\[3pt]
2015--2018 & 4{,}204 & 5{,}384 & 13{,}386 & 25{,}957 & 31{,}639 & 30{,}130 & 110{,}700 \\
           & (0.76) & (2.88) & (1.65) & (0.72) & (0.49) & (0.21) & (0.73) \\[3pt]
2019--2020 & 1{,}655 & 2{,}516 & 7{,}191 & 14{,}005 & 14{,}428 & 15{,}434 & 55{,}229 \\
           & (2.84) & (3.78) & (1.17) & (0.76) & (0.42) & (0.30) & (0.79) \\[3pt]
2021--2022 & 732 & 1{,}203 & 3{,}503 & 6{,}489 & 6{,}361 & 6{,}949 & 25{,}237 \\
           & (3.01) & (4.32) & (1.08) & (0.52) & (0.49) & (0.19) & (0.75) \\
\midrule
Total      & 12{,}502 & 15{,}236 & 40{,}555 & 80{,}622 & 99{,}815 & 80{,}093 & 328{,}823 \\
           & (0.90) & (2.35) & (1.09) & (0.55) & (0.34) & (0.18) & (0.56) \\
\bottomrule
\end{tabular}
\end{table}

Let $\beta_j\in\RR^{p}$ be the coefficient vector for stratum $j$, let
$\Theta=[\beta_1,\dots,\beta_6]$, and write $\{(x_{ijk},y_{ijk})\}_{i\le n_{jk}}$ for
the data in period $k$ and stratum $j$. We fit the logistic model
$\PP(y_{ijk}=1\mid x_{ijk})=\sigma(x_{ijk}^\top\beta_j)$, where
$\sigma(u)=(1+e^{-u})^{-1}$, and define the source risk as the stratum-averaged
per-patient negative log-likelihood
\begin{equation}
\label{eq:real_loss}
    \Lkh(\Theta)
    =
    \frac{1}{6}\sum_{j=1}^{6}\frac{1}{n_{jk}}\sum_{i=1}^{n_{jk}}
    \bigl[\log\{1+\exp(x_{ijk}^\top\beta_j)\}-y_{ijk}\,x_{ijk}^\top\beta_j\bigr].
\end{equation}
Averaging equally across strata rather than pooling patients within a period prevents the largest strata from dominating the source loss and hence the adversarial weighting. PILLAR solves \eqref{eq:empirical_minmax} with $\Theta=UV^\top$ of rank $r=3$,  and KL regularization towards a uniform prior $w^{\mathrm{ref}}=\tfrac14\mathbf 1$. 
We compare PILLAR with the logistic analogs of the 6 benchmark estimators inSection~\ref{subsec:sim_design}, with
\textsc{BGD} denoting the pooled low-rank fit in which patients within a stratum are weighted equally, so periods enter in proportion to their size. PILLAR is fitted with the pre-specified value $\lambda=0.1$, without using the held-out period for tuning. Performance is measured by AUC within each period $\times$ stratum cell and then averaged over strata, so every age group counts equally regardless of size.

\subsection{Discrimination and transportability}
\label{subsec:real_auc}

Figure~\ref{fig:real_auc}(a) reports $5$-fold cross-validated AUC by period. Each time-period is divided into 5 folds, and methods are trained on four folds from each time-period and then validated on the held-out fold in every time-period. PILLAR consistently achieves the best AUC compared to all the benchmark methods in all the time periods. This consistency is meaningful because the periods differ substantially in size, event rate and coding practice. 

We next evaluate temporal transportability. In Figure~\ref{fig:real_auc}(b), we use 2005–2008, 2009–2012, and 2013–2016 as training data and evaluate on three held-out test periods: 2017–2018, 2019–2020, and 2021–2022. These targets lie genuinely outside the training data sources rather than being reweightings of them. Moreover, due to the evolution of medication and coding practices, there is substantial heterogeneity between older and newer data. Figure~\ref{fig:real_auc}(b) shows the test AUC for each held-out period, as well as the average and worst AUCs across the three test periods. PILLAR outperforms all benchmarks in each period. For example, it attains the highest average transported AUC of 0.784, compared with 0.763 for the best benchmark. These results show that PILLAR achieves strong generalizability across time and is robust to temporal drift.

\begin{figure}[htbp]
    \centering
    \begin{minipage}{0.85\textwidth}
        \centering
        \begin{subfigure}[b]{0.47\linewidth}
            \centering
            \includegraphics[width=\linewidth]
                {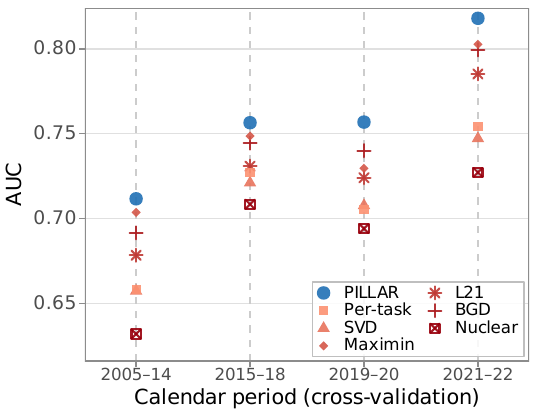}
            \caption{Cross-validation testing}
            \label{fig:cv_by_period}
        \end{subfigure}
        \hfill
        \begin{subfigure}[b]{0.5\linewidth}
            \centering
            \includegraphics[width=\linewidth]
                {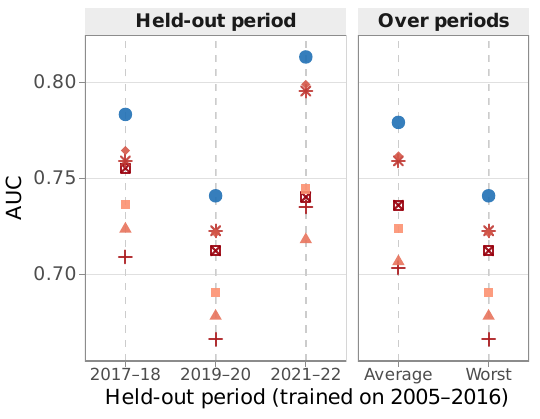}
            \caption{Temporal transportability}
            \label{fig:multi-period}
        \end{subfigure}
    \end{minipage}

    \caption{Discrimination in the cohort of Table~\ref{tab:real_cohort}; each column compares the seven methods within one evaluation stratum, PILLAR ($\lambda=0.1$) in blue. \textbf{(a)} Five-fold cross-validated AUC by calendar period, folds assigned within period $\times$ stratum cells and stratified on the outcome, averaged over folds then strata. \textbf{(b)} Transportability: fitted on 2005--2016 and evaluated on three held-out periods.}
    \label{fig:real_auc}
\end{figure}

\subsection{PILLAR representation interpretation}
\label{subsec:real_rep}

Because the model predicts suicide risk separately across age strata, the fitted coefficient patterns provide a way to examine how predictive associations vary with age. Figure~\ref{fig:real_heatmap} displays the fitted coefficients of a PILLAR fit to the full cohort, as the change in log-odds per within-stratum standard deviation of each predictor. We have several interpretations for the findings that PILLAR discovers.

\begin{figure}[htb]
    \centering
\includegraphics[width=0.9\linewidth]{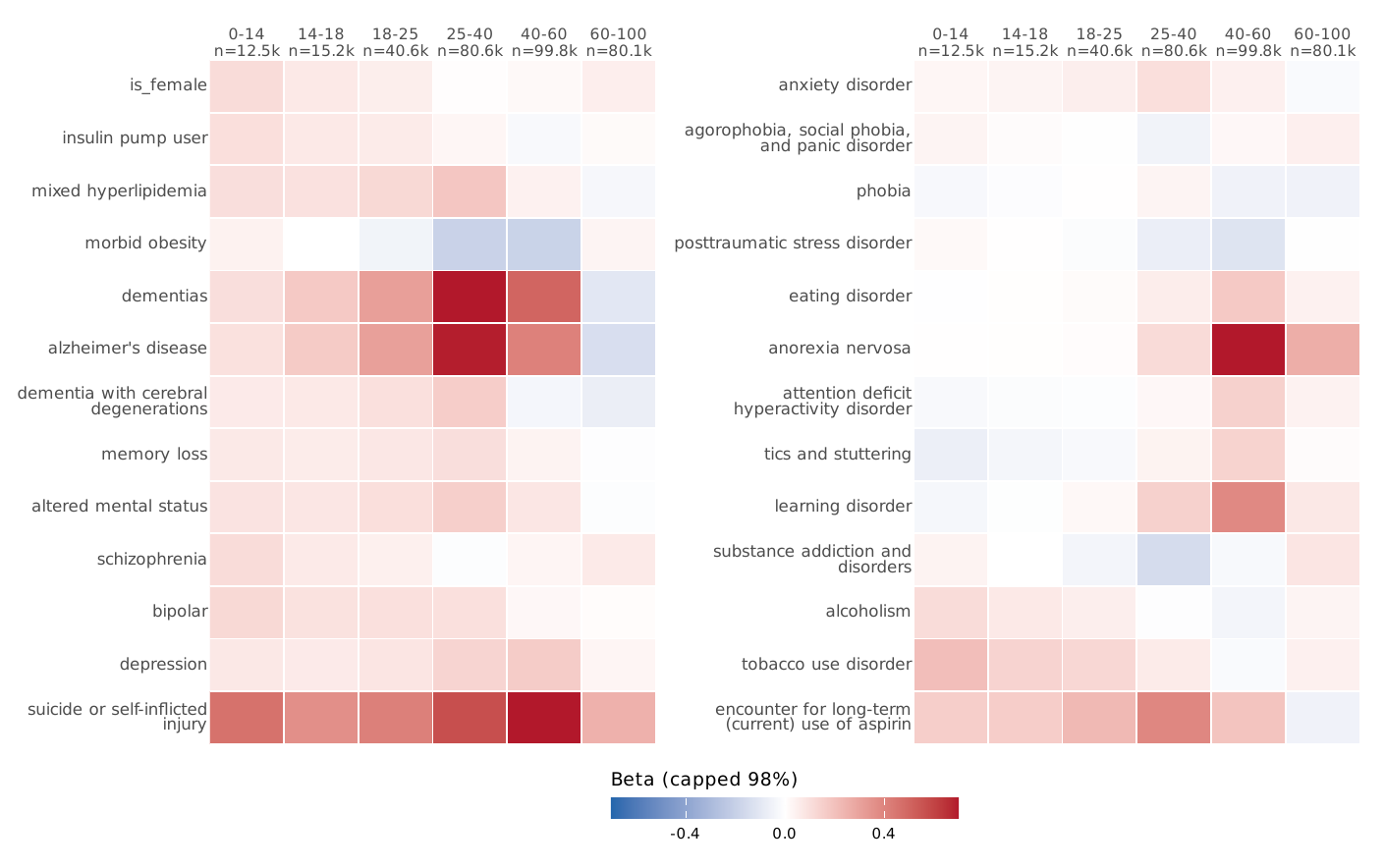}
    \caption{Fitted PILLAR coefficients by predictor (rows, in two blocks) and age stratum (columns, with cell sample size), as the change in log-odds.  The displayed set is the union over strata of the ten largest effects in each. The color caps at $98$th percentile of the absolute value. Blue is negative, red positive. }
    \label{fig:real_heatmap}
\end{figure}

First, a prior record for suicide or self-inflicted injury is the largest positive effect in all six age strata, ranging from $0.22$ in the oldest stratum to $0.85$ in adolescents. This is an example of a predictive pattern that is shared across age groups, although its magnitude differs by age. Second, many other predictors show clear age-specific patterns. Some change substantially in magnitude across strata, and some even change signs. 
For example, dementia and Alzheimer codes are most frequent in the oldest stratum, but their coefficients peak earlier in $40$--$60$ and {reverse} sign after $60$, which is consistent with the literature of post-diagnosis suicide risk among patients diagnosed before age $65$ \citep{alothman2022risk}. Several mechanisms are compatible with that reversal: a neurodegenerative code at $40$--$60$ marks young-onset disease with preserved insight, whereas after $60$ it marks the age-typical course, when planning capacity declines and supervision rises; and in the oldest stratum such coding is near-routine, so conditional on the remaining predictors it carries little residual signal. 
Other examples include that anorexia nervosa is most frequent in adolescence yet peaks in $40$--$60$, where a persisting diagnosis marks the long-standing illness whose suicide risk is markedly elevated
\citep{arcelus2011mortality}. 

Together, these patterns illustrate two components of PILLAR. The low-rank constraint allows the six age-specific coefficient vectors to share a small number of latent directions while still permitting their predictive effects to vary across age groups. At the same time, the adversarial weighting over calendar periods favors representations that remain predictive across temporal shifts rather than being driven primarily by the largest or easiest period. Thus, PILLAR uses low-rank structure to capture shared predictive patterns across age strata and robust reweighting to make those patterns more stable across calendar time.

\section{Discussion}
\label{sec:discussion}

PILLAR combines low-rank structure with prior-informed distributional robustness to address two complementary challenges in multi-source learning: sharing information across related tasks while protecting against uncertainty in the composition of the future deployment population. The low-rank structure reduces estimation complexity by exploiting common latent structure, while adversarial reweighting guards against source heterogeneity and distribution shift. At the same time, regularizing the adversary toward a reference mixture provides a continuum between prior-weighted pooling and vanilla group DRO. Our theory further shows that the benefit of additional sources is governed by the effective sample size $n_{\mathrm{eff}}$. When the adversarial weights remain diffuse, information accumulates across sources, whereas concentrated weights reduce the effective amount of information available for estimation. 

Several directions remain open. The first concerns the range of $\lambda$ covered by the theory. Theorem~\ref{thm:PILLAR_convergence} requires $\lambda\ge c_1\sqrt{\kappa^*}$, which is a sufficient condition for our theoretical guarantees rather than an inherent limitation of the estimator. Establishing rates for general $\lambda$, and in particular developing a limiting theory as $\lambda\to0$ or directly analyzing the unregularized case $\lambda=0$, would help close the gap between what is currently proved and what we observe empirically. Another important direction is to develop principled, data-adaptive rules for choosing both $\lambda$ and the prior encoded by $D(w)$. Such choices could account for the degree of heterogeneity across sources, available prior knowledge about the deployment population, and the downstream objective of interest. More broadly, PILLAR can be viewed as a ``soft" form of group DRO. Further 
characterizing how prior informed regularization controls the concentration of the adversarial weights and hence the effective sample size $n_{\mathrm{eff}}$, may provide useful insights for group DRO more generally. An important limitation of the current formulation is that robustness is defined through reweighting the observed source distributions. Interestingly, our out-of-distribution simulations and temporal transport experiment suggest that the learned representation can remain useful beyond this mixture family. Understanding when robustness to source reweighting translates into robustness to genuinely new environments warrants future research.

Although our simulations and real-data study focus on multi-task prediction, the principles underlying PILLAR extend naturally to other low-rank problems, including matrix recovery, principal component analysis (PCA), and representation learning across institutions or environments. Our current analysis focuses primarily on optimization and estimation; calibration, uncertainty quantification, and formal inference for robust low-rank predictors remain important open problems. Richer structural models, such as low-rank-plus-sparse effects or hierarchical sharing across tasks, may further improve the balance among robustness, predictive performance, and interpretability in applications. 


{
\spacingset{1.0}
\vspace{10pt}
\footnotesize
\setlength{\bibsep}{5pt}
\bibliographystyle{plainnat}
\bibliography{references}
}

\newpage

\appendix

\setcounter{assumption}{0}
\setcounter{lemma}{0}
\setcounter{theorem}{0}
\setcounter{proposition}{0}
\setcounter{corollary}{0}
\setcounter{definition}{0}
\setcounter{remark}{0}
\setcounter{claim}{0}

\renewcommand{\theassumption}{\thesection.\arabic{assumption}}
\renewcommand{\thelemma}{\thesection.\arabic{lemma}}
\renewcommand{\thetheorem}{\thesection.\arabic{theorem}}
\renewcommand{\theproposition}{\thesection.\arabic{proposition}}
\renewcommand{\thecorollary}{\thesection.\arabic{corollary}}
\renewcommand{\thedefinition}{\thesection.\arabic{definition}}
\renewcommand{\theremark}{\thesection.\arabic{remark}}
\renewcommand{\theclaim}{\thesection.\arabic{claim}}

\numberwithin{equation}{section}
\makeatletter
\@addtoreset{assumption}{section}
\@addtoreset{lemma}{section}
\@addtoreset{theorem}{section}
\@addtoreset{proposition}{section}
\@addtoreset{corollary}{section}
\@addtoreset{definition}{section}
\@addtoreset{remark}{section}
\@addtoreset{claim}{section}
\makeatother

\section{Proof of Theorem~\ref{thm:PILLAR_convergence}}
\label{appen:proof_thm1}

This appendix proves Theorem~\ref{thm:PILLAR_convergence}. The argument has a deterministic part and a stochastic part, which we present separately. Section~\ref{appen:subsec:notation} fixes notation and defines the two statistical errors that drive the final rate. Section~\ref{appen:subsec:lemmas} collects auxiliary lemmas on the Burer--Monteiro geometry and on the regularity of the losses; Section~\ref{appen:subsec:neighborhood} quantifies the local neighborhood in which the analysis takes place. Sections~\ref{appen:subsec:onestepZ} and~\ref{appen:subsec:onestepw} establish one-step contraction bounds for the factor update and for the mirror-ascent update, respectively. Section~\ref{appen:subsec:combine} combines the two recursions through a joint Lyapunov potential and proves a deterministic convergence theorem (Theorem~\ref{thm:appen:deterministic}). Section~\ref{appen:subsec:estat} bounds the statistical errors under Assumption~\ref{ass:loss_psi2}, and Section~\ref{appen:subsec:thm1_final} assembles the pieces into Theorem~\ref{thm:PILLAR_convergence}.

\subsection{Notation and statistical errors}
\label{appen:subsec:notation}

Throughout the appendices, $c,c',C,C',C_0,C_1,\dots$ denote positive universal constants whose values may change from line to line; constants that depend on the regularity parameters of Assumptions~\ref{ass:rsc_rss}--\ref{ass:w_regularizer_mirror} are indicated explicitly. For a matrix $A$, $\|A\|_2$, $\|A\|_F$ and $\|A\|_*$ denote the spectral, Frobenius and nuclear norms, and $\sigma_j(A)$ the $j$-th largest singular value. We abbreviate
\[
\sigma_{\min}:=\sigma_{\min}(\Theta^*)=\sigma_r(\Theta^*),
\qquad
\sigma_{\max}:=\sigma_{\max}(\Theta^*)=\sigma_1(\Theta^*),
\qquad
\kappa^*:=\sigma_{\max}/\sigma_{\min},
\]
where $(\Theta^*,w^*)$ is the population saddle point of \eqref{eq:PILLAR_def}: $\Theta^*\in\Xi(r)$ minimizes $\Phi_\lambda(\Theta):=\max_{w\in\Delta_m}\{\summ w_kL_k(\Theta)-\lambda D(w)\}$ over $\Xi(r)$, and $w^*=w_\lambda^*$ is the associated adversarial weight vector defined in \eqref{eq:PILLAR_def}. Without loss of generality we assume $\rank(\Theta^*)=r$.

\paragraph{Balanced factors and the lifted matrix.}
Write the singular value decomposition $\Theta^*=U_{\Theta^*}\Sigma_{\Theta^*}V_{\Theta^*}^\top$ and define the balanced factors
\[
U^*:=U_{\Theta^*}\Sigma_{\Theta^*}^{1/2}\in\RR^{d_1\times r},
\qquad
V^*:=V_{\Theta^*}\Sigma_{\Theta^*}^{1/2}\in\RR^{d_2\times r},
\]
so that $\Theta^*=U^*(V^*)^\top$ and $(U^*)^\top U^*=(V^*)^\top V^*=\Sigma_{\Theta^*}$. For a generic pair $(U,V)\in\RR^{d_1\times r}\times\RR^{d_2\times r}$ we write the lifted (stacked) matrix and its balanced-reference counterpart as
\[
Z:=\begin{pmatrix} U\\ V\end{pmatrix}\in\RR^{(d_1+d_2)\times r},
\qquad
Z^*:=\begin{pmatrix} U^*\\ V^*\end{pmatrix},
\]
and we also use the sign-flipped stack $\breve Z:=(U;-V)$. Note that $\sigma_i(Z^*)^2=2\sigma_i(\Theta^*)$ for $i\in[r]$; in particular $\|Z^*\|_2^2=2\sigma_{\max}$ and $\sigma_r(Z^*)^2=2\sigma_{\min}$. Because the factorization $\Theta=UV^\top$ is invariant under $(U,V)\mapsto(UO,VO)$ for $O\in\mathbb{O}(r)$, we measure the error of $Z$ by the rotation-invariant distance
\begin{equation}
\label{eq:appen:dist}
d^2(Z,Z^*):=\min_{O\in\mathbb{O}(r)}\bigl\{\|U-U^*O\|_F^2+\|V-V^*O\|_F^2\bigr\}
=\min_{O\in\mathbb{O}(r)}\|Z-Z^*O\|_F^2 .
\end{equation}
Given $Z$, we fix a minimizer $\widehat O\in\argmin{O\in\mathbb{O}(r)}\|Z-Z^*O\|_F$ and write
\[
\Delta_U:=U-U^*\widehat O,\qquad
\Delta_V:=V-V^*\widehat O,\qquad
\Delta_Z:=Z-Z^*\widehat O=(\Delta_U;\Delta_V),
\]
so that $d^2(Z,Z^*)=\|\Delta_Z\|_F^2=\|\Delta_U\|_F^2+\|\Delta_V\|_F^2$. Two matrices associated with the balancing penalty recur in the analysis:
\[
M:=U^\top U-V^\top V,
\qquad
N:=(U^*)^\top\Delta_U-(V^*)^\top\Delta_V .
\]
Since $p(U,V)=\frac14\|M\|_F^2$, a direct computation gives $\nabla_U p=UM$, $\nabla_V p=-VM$, that is, $\nabla_Z p(Z)=\breve ZM$.

\paragraph{Weighted gradients and statistical errors.}
For $w\in\Delta_m$ let
\[
\nabla_\Theta\widehat L(\Theta,w):=\summ w_k\nabla_\Theta\Lkh(\Theta)\in\RR^{d_1\times d_2},
\qquad
\widehat L(\Theta):=\bigl(\widehat L_1(\Theta),\dots,\widehat L_m(\Theta)\bigr)^\top .
\]
For an integer $s\ge1$ define the rank-restricted dual norm
\begin{equation}
\label{eq:appen:restricted_norm}
\|A\|_{F,s}:=\sup\bigl\{\langle A,\Delta\rangle:\ \Delta\in\Xi(s),\ \|\Delta\|_F\le1\bigr\},
\qquad A\in\RR^{d_1\times d_2};
\end{equation}
this is a norm, $\|A\|_{F,s}\le\|A\|_F$, and the constraint set is symmetric, so $|\langle A,\Delta\rangle|\le\|A\|_{F,s}\|\Delta\|_F$ for every $\Delta\in\Xi(s)$. The two statistical errors that appear in Theorem~\ref{thm:PILLAR_convergence} are defined as follows.

\begin{definition}[Statistical errors]
\label{def:stat_error}
With $(\Theta^*,w^*)$ the population saddle point,
\[
\estatt:=\Bigl\|\summ w_k^*\nabla_\Theta\Lkh(\Theta^*)\Bigr\|_{F,2r}
=\sup_{\Delta\in\Xi(2r),\,\|\Delta\|_F\le1}
\Bigl\langle\summ w_k^*\nabla_\Theta\Lkh(\Theta^*),\Delta\Bigr\rangle,
\]
and
\[
\estatw:=\bigl\|\widehat L(\Theta^*)-L(\Theta^*)\bigr\|_\infty .
\]
\end{definition}

The quantity $\estatt$ measures the size of the weighted empirical gradient at the truth in low-rank directions; $\estatw$ measures the fluctuation of the empirical group losses at the truth, which is the quantity to which the mirror step responds. Both vanish in the noiseless limit, and Section~\ref{appen:subsec:estat} shows that under Assumption~\ref{ass:loss_psi2} they are of the order stated in Theorem~\ref{thm:PILLAR_convergence}.

\subsection{Auxiliary lemmas}
\label{appen:subsec:lemmas}

The first lemma extracts from Assumption~\ref{ass:rsc_rss} the two consequences used in the sequel: strong monotonicity of the gradient map between low-rank matrices, and a Lipschitz property of the gradient in rank-restricted directions. The lemma is stated under the two-sided bound \eqref{eq:rsc_rss} for pairs in $\Xi(2r)$, which is implied by Assumption~\ref{ass:rsc_rss}: any two matrices in $\Xi(2r)$ differ by a matrix of rank at most $4r$, hence of nuclear-to-Frobenius ratio at most $2\sqrt r\le4\sqrt{2r}$, so their difference lies in $\mathcal A$. In particular, the analysis of the iterations uses \eqref{eq:rsc_rss} only along differences of rank at most $4r$.

\begin{lemma}[Restricted strong monotonicity and restricted gradient Lipschitz bound]
\label{lem:appen:monotone_lipschitz}
Let $g:\RR^{d_1\times d_2}\to\RR$ be differentiable and satisfy, for all $\Theta_1,\Theta_2\in\Xi(2r)$,
\[
\frac{\mu}{2}\|\Theta_2-\Theta_1\|_F^2
\le g(\Theta_2)-g(\Theta_1)-\langle\nabla g(\Theta_1),\Theta_2-\Theta_1\rangle
\le\frac{\beta_G}{2}\|\Theta_2-\Theta_1\|_F^2 .
\]
Then, for all $\Theta,\Theta'\in\Xi(r)$,
\begin{align}
\langle\nabla g(\Theta)-\nabla g(\Theta'),\,\Theta-\Theta'\rangle
&\ge\mu\|\Theta-\Theta'\|_F^2,
\label{eq:appen:monotone}\\
\|\nabla g(\Theta)-\nabla g(\Theta')\|_{F,r}
&\le\beta_G\|\Theta-\Theta'\|_F .
\label{eq:appen:res_lip}
\end{align}
\end{lemma}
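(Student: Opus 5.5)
For \eqref{eq:appen:monotone}, I would apply the two-sided hypothesis twice with the roles of the two points swapped. Fix $\Theta,\Theta'\in\Xi(r)\subset\Xi(2r)$. The lower bound with $(\Theta_1,\Theta_2)=(\Theta',\Theta)$ gives
\[
g(\Theta)-g(\Theta')-\langle\nabla g(\Theta'),\Theta-\Theta'\rangle\ge\tfrac{\mu}{2}\|\Theta-\Theta'\|_F^2 .
\]
The lower bound with $(\Theta_1,\Theta_2)=(\Theta,\Theta')$ gives
\[
g(\Theta')-g(\Theta)-\langle\nabla g(\Theta),\Theta'-\Theta\rangle\ge\tfrac{\mu}{2}\|\Theta-\Theta'\|_F^2 .
\]
Adding the two inequalities cancels the function values and leaves exactly $\langle\nabla g(\Theta)-\nabla g(\Theta'),\Theta-\Theta'\rangle\ge\mu\|\Theta-\Theta'\|_F^2$. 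This step uses only the lower curvature bound at pairs of rank-$r$ points, so it is immediate.

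For \eqref{eq:appen:res_lip}, I would work through the definition \eqref{eq:appen:restricted_norm}. Write $A:=\nabla g(\Theta)-\nabla g(\Theta')$ and $\delta:=\Theta-\Theta'$; it suffices to show $\langle A,\Delta\rangle\le\beta_G\|\delta\|_F$ for every $\Delta\in\Xi(r)$ with $\|\Delta\|_F\le1$. The plan is to imitate the standard proof that smoothness plus convexity yields a Lipschitz gradient, i.e.\ the co-coercivity argument, restricted to low-rank perturbations. Consider $h(X):=g(X)-\langle\nabla g(\Theta'),X\rangle$. By the restricted bounds, $h$ is restricted $\mu$-convex and $\beta_G$-smooth, and $\Theta'$ is its restricted critical point. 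For a step $t>0$ I would evaluate $h$ at $\Theta-t\Delta$ and use the chain
\[
h(\Theta')\le h(\Theta-t\Delta)\le h(\Theta)-t\langle\nabla h(\Theta),\Delta\rangle+\tfrac{\beta_G t^2}{2}\|\Delta\|_F^2 ,
\]
together with the upper bound $h(\Theta)-h(\Theta')\le\frac{\beta_G}{2}\|\delta\|_F^2$. Here $\nabla h(\Theta)=A$. This yields
\[
t\langle A,\Delta\rangle\le\tfrac{\beta_G}{2}\|\delta\|_F^2+\tfrac{\beta_G t^2}{2},
\]
and choosing $t=\|\delta\|_F$ gives $\langle A,\Delta\rangle\le\beta_G\|\delta\|_F$, as required.

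The main obstacle is checking that every pair of points used in this chain lies within the hypothesis class $\Xi(2r)$. The point $\Theta-t\Delta$ has rank at most $2r$, and the pair $(\Theta,\Theta-t\Delta)$ lies in $\Xi(2r)$. The pair $(\Theta',\Theta-t\Delta)$ is where the difficulty sits. The first inequality of the chain needs $h(\Theta-t\Delta)\ge h(\Theta')$; I would justify it by the restricted lower curvature bound at $\Theta'$, which gives $h(X)-h(\Theta')\ge\frac{\mu}{2}\|X-\Theta'\|_F^2\ge0$ for any $X$ with $(\Theta',X)$ admissible. The pair $(\Theta',\Theta-t\Delta)$ has both points in $\Xi(2r)$, so this is acceptable. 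If bookkeeping of ranks fails, the fallback is to invoke the remark preceding the lemma: all differences involved have rank at most $4r$ and hence lie in $\mathcal A$. Assumption~\ref{ass:rsc_rss} then applies directly to them, and the same chain goes through.
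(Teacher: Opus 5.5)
Your proposal is correct and follows essentially the paper's proof. Monotonicity uses the same swap-and-add argument, and your restricted Lipschitz argument mirrors the paper's: the paper perturbs $\Theta'$ to $\Theta'+t\Delta$ and sandwiches $g$ between its linearizations at $\Theta$ and $\Theta'$, while you perturb $\Theta$ to $\Theta-t\Delta$ via the tilted $h$. Every pair you use lies in $\Xi(2r)$, so the fallback to $\mathcal A$ is unnecessary; the only omission is the trivial case $\Theta=\Theta'$, where $t=\|\Theta-\Theta'\|_F=0$ cannot be divided by but the gradient difference vanishes.
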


\begin{proof}
Both $\Theta,\Theta'\in\Xi(r)\subset\Xi(2r)$, so the assumed two-sided bound applies to the pair. Adding the lower bound written at $(\Theta_1,\Theta_2)=(\Theta',\Theta)$ to the same bound written at $(\Theta_1,\Theta_2)=(\Theta,\Theta')$ cancels the function values and yields \eqref{eq:appen:monotone}.

For \eqref{eq:appen:res_lip}, fix $\Delta\in\Xi(r)$ with $\|\Delta\|_F\le1$ and $t>0$, and note that $\Theta'+t\Delta\in\Xi(2r)$. Applying the lower bound to the pair $(\Theta,\Theta'+t\Delta)$ and dropping the nonnegative quadratic term,
\[
g(\Theta'+t\Delta)\ \ge\ g(\Theta)+\langle\nabla g(\Theta),\Theta'-\Theta\rangle+t\,\langle\nabla g(\Theta),\Delta\rangle,
\]
while the upper bound applied to the pair $(\Theta',\Theta'+t\Delta)$ gives
\[
g(\Theta'+t\Delta)\ \le\ g(\Theta')+t\,\langle\nabla g(\Theta'),\Delta\rangle+\frac{\beta_G}{2}t^2 .
\]
Combining the two displays and then using the upper bound once more, now for the pair $(\Theta,\Theta')$,
\[
t\,\langle\nabla g(\Theta)-\nabla g(\Theta'),\Delta\rangle
\ \le\ g(\Theta')-g(\Theta)-\langle\nabla g(\Theta),\Theta'-\Theta\rangle+\frac{\beta_G}{2}t^2
\ \le\ \frac{\beta_G}{2}\|\Theta-\Theta'\|_F^2+\frac{\beta_G}{2}t^2 .
\]
If $\Theta\neq\Theta'$, dividing by $t$ and taking $t=\|\Theta-\Theta'\|_F$ gives $\langle\nabla g(\Theta)-\nabla g(\Theta'),\Delta\rangle\le\beta_G\|\Theta-\Theta'\|_F$; the case $\Theta=\Theta'$ is trivial. Taking the supremum over $\Delta$ yields \eqref{eq:appen:res_lip}.
\end{proof}

The restricted norm \eqref{eq:appen:restricted_norm} interacts with the factors through the following elementary bound.

\begin{lemma}[Restricted norm and factored products]
\label{lem:appen:restricted_product}
For any $A\in\RR^{d_1\times d_2}$, $U\in\RR^{d_1\times r}$ and $V\in\RR^{d_2\times r}$,
\[
\|AV\|_F\le\|A\|_{F,r}\,\|V\|_2,
\qquad
\|A^\top U\|_F\le\|A\|_{F,r}\,\|U\|_2 .
\]
\end{lemma}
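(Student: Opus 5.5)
We prove both inequalities by duality: write the Frobenius norm of the product as a supremum of inner products, move the factor onto the test matrix, and observe that the resulting test matrix has rank at most $r$. We treat $\|AV\|_F$ in detail; the bound on $\|A^\top U\|_F$ follows by the same argument with the roles of the two dimensions exchanged, or equivalently by applying the first bound to $A^\top$.

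For the first bound, we start from
\[
\|AV\|_F=\sup\{\langle AV,W\rangle:\ W\in\RR^{d_1\times r},\ \|W\|_F\le1\}.
\]
By the trace identity, $\langle AV,W\rangle=\operatorname{tr}(V^\top A^\top W)=\langle A,WV^\top\rangle$. The matrix $\Delta:=WV^\top\in\RR^{d_1\times d_2}$ is a product through an inner dimension $r$, so $\rank(\Delta)\le r$ and $\Delta\in\Xi(r)$. Its size is controlled by $\|\Delta\|_F=\|VW^\top\|_F\le\|V\|_2\|W\|_F\le\|V\|_2$, using the standard inequality $\|BC\|_F\le\|B\|_2\|C\|_F$.

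If $V=0$ the claim is trivial. Otherwise $\Delta/\|V\|_2$ is feasible in the supremum defining $\|A\|_{F,r}$ in \eqref{eq:appen:restricted_norm}, which gives
\[
\langle A,\Delta\rangle\le\|A\|_{F,r}\|V\|_2 .
\]
Taking the supremum over $W$ yields $\|AV\|_F\le\|A\|_{F,r}\|V\|_2$.

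For the second bound, the same computation gives $\langle A^\top U,W\rangle=\langle A,UW^\top\rangle$ for $W\in\RR^{d_2\times r}$. Here $UW^\top\in\Xi(r)$ and $\|UW^\top\|_F\le\|U\|_2\|W\|_F$, so the conclusion follows exactly as before.

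There is no real obstacle. The only point needing care is confirming that the test matrix $WV^\top$ (respectively $UW^\top$) lies in the rank-$r$ class $\Xi(r)$, which is what makes the restricted norm $\|A\|_{F,r}$, rather than the full Frobenius norm $\|A\|_F$, sufficient.
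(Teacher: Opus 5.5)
Your proof is correct and takes essentially the same approach as the paper: both bound $\|AV\|_F$ by testing $A$ against a matrix of rank at most $r$ built from $V$, together with $\|BC\|_F\le\|B\|_2\|C\|_F$. The only difference is packaging: the paper factors $AV=(A\Pi)V$ through the projection $\Pi$ onto the column space of $V$ and bounds $\|A\Pi\|_F\le\|A\|_{F,r}$ via the single witness $A\Pi/\|A\Pi\|_F$, while you dualize $\|AV\|_F$ directly with test matrices $WV^\top$; your direct treatment of the second inequality is equivalent to the paper's use of $\|A^\top\|_{F,r}=\|A\|_{F,r}$.
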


\begin{proof}
Let $\Pi\in\RR^{d_2\times d_2}$ be the orthogonal projection onto the column space of $V$, so that $V=\Pi V$ and $\rank(A\Pi)\le\rank(\Pi)\le r$. Then $AV=(A\Pi)V$, whence $\|AV\|_F\le\|A\Pi\|_F\|V\|_2$. If $A\Pi\neq0$, then $\|A\Pi\|_F=\langle A\Pi,A\Pi\rangle/\|A\Pi\|_F=\langle A,\,A\Pi/\|A\Pi\|_F\rangle\le\|A\|_{F,r}$, because $\Pi$ is symmetric and idempotent and $A\Pi/\|A\Pi\|_F$ belongs to the unit Frobenius ball of $\Xi(r)$. The second inequality follows by applying the first to $A^\top$ and noting that $\|A^\top\|_{F,r}=\|A\|_{F,r}$.
\end{proof}

The next three lemmas concern the geometry of the lifted matrix.

\begin{lemma}[From factor error to matrix error]
\label{lemma:thetatoZ}
For $\Theta=UV^\top$ and $\Theta^*=U^*(V^*)^\top$ with stacked matrices $Z=(U;V)$ and $Z^*=(U^*;V^*)$,
\begin{equation}
\label{eq:appen:thetatoZ}
\|\Theta-\Theta^*\|_F^2\le2\max\{\|Z\|_2^2,\|Z^*\|_2^2\}\;d^2(Z,Z^*).
\end{equation}
\end{lemma}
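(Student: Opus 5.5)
We prove \eqref{eq:appen:thetatoZ} by a direct algebraic decomposition, after fixing an optimal rotation. Let $\widehat O\in\mathbb O(r)$ attain the minimum in \eqref{eq:appen:dist}, and write $\Delta_U=U-U^*\widehat O$ and $\Delta_V=V-V^*\widehat O$, so that $d^2(Z,Z^*)=\|\Delta_U\|_F^2+\|\Delta_V\|_F^2$. Since $\widehat O\widehat O^\top=I_r$, the reference matrix can be written as $\Theta^*=(U^*\widehat O)(V^*\widehat O)^\top$. Adding and subtracting $U(V^*\widehat O)^\top$ gives the telescoping identity
\[
\Theta-\Theta^*=UV^\top-U(V^*\widehat O)^\top+U(V^*\widehat O)^\top-(U^*\widehat O)(V^*\widehat O)^\top
=U\Delta_V^\top+\Delta_U(V^*\widehat O)^\top .
\]

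Next we bound each term using $\|AB\|_F\le\|A\|_2\|B\|_F$. For the first term, $\|U\Delta_V^\top\|_F\le\|U\|_2\|\Delta_V\|_F\le\|Z\|_2\|\Delta_V\|_F$, because $U$ is a row-submatrix of $Z$. For the second term, $\|\Delta_U(V^*\widehat O)^\top\|_F\le\|V^*\|_2\|\Delta_U\|_F\le\|Z^*\|_2\|\Delta_U\|_F$, by orthogonal invariance of the spectral norm and because $V^*$ is a submatrix of $Z^*$. Setting $a:=\max\{\|Z\|_2,\|Z^*\|_2\}$, the triangle inequality and $(x+y)^2\le2(x^2+y^2)$ give
\[
\|\Theta-\Theta^*\|_F^2\le a^2\bigl(\|\Delta_V\|_F+\|\Delta_U\|_F\bigr)^2\le2a^2\bigl(\|\Delta_U\|_F^2+\|\Delta_V\|_F^2\bigr)=2a^2\,d^2(Z,Z^*),
\]
which is \eqref{eq:appen:thetatoZ}.

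No step here is a real obstacle. The only point needing care is the asymmetric split: the current factor $U$ is paired with $\Delta_V$, and the reference factor $V^*$ with $\Delta_U$. This pairing is why the constant involves the maximum of $\|Z\|_2$ and $\|Z^*\|_2$ rather than either norm alone. Balancedness of $Z^*$ is not needed for this inequality.
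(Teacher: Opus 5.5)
Your proposal is correct and follows the paper's proof essentially line by line: the same split $\Theta-\Theta^*=U\Delta_V^\top+\Delta_U(V^*\widehat O)^\top$ at the optimal rotation, the same spectral-norm bounds via $\|U\|_2\le\|Z\|_2$ and $\|V^*\|_2\le\|Z^*\|_2$, and the same $(x+y)^2\le2(x^2+y^2)$ step.
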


\begin{proof}
With $\widehat O$ the optimal rotation, $\Theta^*=U^*\widehat O(V^*\widehat O)^\top$ and
\begin{align*}
\|\Theta-\Theta^*\|_F
&=\bigl\|U(V-V^*\widehat O)^\top+(U-U^*\widehat O)(V^*\widehat O)^\top\bigr\|_F\\
&\le\|U\|_2\|\Delta_V\|_F+\|V^*\|_2\|\Delta_U\|_F
\le\max\{\|Z\|_2,\|Z^*\|_2\}\bigl(\|\Delta_U\|_F+\|\Delta_V\|_F\bigr),
\end{align*}
using $\|AB\|_F\le\|A\|_2\|B\|_F$, $\|U\|_2\le\|Z\|_2$ and $\|V^*\|_2\le\|Z^*\|_2$. Squaring and using $(a+b)^2\le2(a^2+b^2)$ gives \eqref{eq:appen:thetatoZ}.
\end{proof}

The following two results are Lemma~5.4 and Lemma~5.14 of \citet{tu2016low}, restated in our notation.

\begin{lemma}[{\citealp[Lemma~5.4]{tu2016low}}]
\label{lem:lift_geometry}
Let $Z^*\in\RR^{(d_1+d_2)\times r}$ have full column rank. Then, for any $Z\in\RR^{(d_1+d_2)\times r}$,
\begin{equation}
\label{eq:lift_lower}
d^2(Z,Z^*)\le\frac{1}{2(\sqrt2-1)\,\sigma_r^2(Z^*)}\,\bigl\|ZZ^\top-Z^*(Z^*)^\top\bigr\|_F^2
\le\frac{5}{4\,\sigma_r^2(Z^*)}\,\bigl\|ZZ^\top-Z^*(Z^*)^\top\bigr\|_F^2 .
\end{equation}
\end{lemma}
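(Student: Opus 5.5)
We follow the argument of \citet{tu2016low}. The idea is to reduce to an aligned frame via the orthogonal Procrustes condition, expand $ZZ^\top-Z^*(Z^*)^\top$ in the error $\Delta_Z$, and complete a square so that only a term controlled by $\sigma_r^2(Z^*)$ survives. The second inequality in \eqref{eq:lift_lower} is numerical, since $1/(2(\sqrt2-1))\approx1.207\le5/4$. It therefore suffices to prove
\[
\bigl\|ZZ^\top-Z^*(Z^*)^\top\bigr\|_F^2\ \ge\ 2(\sqrt2-1)\,\sigma_r^2(Z^*)\,d^2(Z,Z^*).
\]

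\textbf{Step 1 (alignment).} Fix the optimal rotation $\widehat O$ of \eqref{eq:appen:dist}, and replace $Z^*$ by $Z^*\widehat O$. This changes neither $Z^*(Z^*)^\top$ nor $\sigma_r(Z^*)$, and after it $d^2(Z,Z^*)=\|\Delta\|_F^2$ with $\Delta:=Z-Z^*$. Since $\widehat O$ maximizes $\operatorname{tr}(O^\top (Z^*)^\top Z)$ over $\mathbb O(r)$, the standard Procrustes characterization shows that, in the rotated frame, $(Z^*)^\top Z$ is symmetric positive semidefinite. Set $A:=(Z^*)^\top Z^*\succeq \sigma_r^2(Z^*)I_r$ and $S:=(Z^*)^\top\Delta=(Z^*)^\top Z-A$. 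Then $S$ is symmetric, and $S+A=(Z^*)^\top Z\succeq0$.

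\textbf{Step 2 (expansion).} Write $ZZ^\top-Z^*(Z^*)^\top=Z^*\Delta^\top+\Delta(Z^*)^\top+\Delta\Delta^\top$ and expand the squared Frobenius norm. Cyclicity of the trace and the symmetry of $S$ give
\[
\|ZZ^\top-Z^*(Z^*)^\top\|_F^2
=2\operatorname{tr}(A\Delta^\top\Delta)+2\|S\|_F^2+4\operatorname{tr}(S\Delta^\top\Delta)+\|\Delta^\top\Delta\|_F^2 .
\]

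\textbf{Step 3 (completing a square).} Split $4\operatorname{tr}(S\Delta^\top\Delta)=2\sqrt2\operatorname{tr}(S\Delta^\top\Delta)+(4-2\sqrt2)\operatorname{tr}(S\Delta^\top\Delta)$. The first piece combines with the other terms into a square, so the right-hand side above equals
\[
2\operatorname{tr}(A\Delta^\top\Delta)+\|\Delta^\top\Delta+\sqrt2 S\|_F^2+(4-2\sqrt2)\operatorname{tr}(S\Delta^\top\Delta).
\]
Drop the square. Since $S+A\succeq0$ and $\Delta^\top\Delta\succeq0$, the trace of their product is nonnegative, so $\operatorname{tr}(S\Delta^\top\Delta)\ge-\operatorname{tr}(A\Delta^\top\Delta)$. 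This yields the lower bound
\[
\bigl(2-(4-2\sqrt2)\bigr)\operatorname{tr}(A\Delta^\top\Delta)=2(\sqrt2-1)\operatorname{tr}(A\Delta^\top\Delta)\ \ge\ 2(\sqrt2-1)\,\sigma_r^2(Z^*)\,\|\Delta\|_F^2 ,
\]
where the last step uses $A\succeq\sigma_r^2(Z^*)I_r$.

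\textbf{Main obstacle.} The delicate point is Step 1, namely the sign information $(Z^*)^\top Z\succeq0$ coming from optimality of the rotation. Without it the cross term $\operatorname{tr}(S\Delta^\top\Delta)$ has no sign, and no multiple of $\operatorname{tr}(A\Delta^\top\Delta)$ survives. I would derive it from the SVD of $(Z^*)^\top Z=P\Sigma Q^\top$: the maximizing rotation is $\widehat O=PQ^\top$, and then $\widehat O^\top (Z^*)^\top Z=Q\Sigma Q^\top\succeq0$. Given this, the choice of the split constant $\sqrt2$ is exactly what makes the residual coefficient $2(\sqrt2-1)$ positive, which matches the stated constant.
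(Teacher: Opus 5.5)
Your proof is correct and is essentially the argument of \citet{tu2016low} for their Lemma~5.4, which the paper cites without reproducing; the Procrustes alignment giving $(Z^*\widehat O)^\top Z\succeq 0$, the four-term expansion, the $\sqrt2$ square completion, and the bound $\operatorname{tr}(S\Delta^\top\Delta)\ge-\operatorname{tr}(A\Delta^\top\Delta)$ all check out. One small point: the Procrustes minimizer need not be unique, but this is harmless. You may simply take $\widehat O=PQ^\top$, since $d^2(Z,Z^*)$ has the same value at every minimizer, and in any case every maximizer $O$ of $\operatorname{tr}(O^\top(Z^*)^\top Z)$ makes $O^\top(Z^*)^\top Z$ symmetric positive semidefinite.
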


\begin{lemma}[{\citealp[Lemma~5.14]{tu2016low}}]
\label{lemma:Ztotheta}
Let $\Theta,\Theta^*\in\RR^{d_1\times d_2}$ be rank-$r$ matrices with rank-$r$ singular value decompositions $\Theta=\tilde U\Sigma\tilde V^\top$ and $\Theta^*=\tilde U^*\Sigma^*(\tilde V^*)^\top$, and let $Z=(\tilde U\Sigma^{1/2};\tilde V\Sigma^{1/2})$ and $Z^*=(\tilde U^*(\Sigma^*)^{1/2};\tilde V^*(\Sigma^*)^{1/2})$ be the corresponding balanced stacks. If $\|\Theta-\Theta^*\|_F\le\tfrac12\sigma_{\min}(\Theta^*)$, then
\begin{equation}
\label{eq:Ztotheta}
d^2(Z,Z^*)\le\frac{2}{\sqrt2-1}\cdot\frac{\|\Theta-\Theta^*\|_F^2}{\sigma_{\min}(\Theta^*)}
\le\frac{5\,\|\Theta-\Theta^*\|_F^2}{\sigma_{\min}(\Theta^*)}.
\end{equation}
\end{lemma}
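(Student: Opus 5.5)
This is Lemma 5.14 of \citet{tu2016low}, so the plan is to reproduce their argument in our notation. The goal is to bound $d^2(Z,Z^*)$ by $\|\Theta-\Theta^*\|_F^2/\sigma_{\min}(\Theta^*)$. The natural route goes through the lifted Gram matrices and Lemma~\ref{lem:lift_geometry}, which gives
\[
d^2(Z,Z^*)\le \frac{1}{2(\sqrt2-1)\sigma_r^2(Z^*)}\|ZZ^\top-Z^*(Z^*)^\top\|_F^2,
\qquad \sigma_r^2(Z^*)=2\sigma_{\min}(\Theta^*).
\]
It then suffices to show $\|ZZ^\top-Z^*(Z^*)^\top\|_F^2\le 8\|\Theta-\Theta^*\|_F^2$. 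Combined with the display, this yields the constant $\tfrac{8}{4(\sqrt2-1)}=\tfrac{2}{\sqrt2-1}$, which is at most $5$ since $2/(\sqrt2-1)\approx 4.83$.

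\textbf{Block decomposition.} Because both stacks are balanced, we have
\[
ZZ^\top=\begin{pmatrix}\tilde U\Sigma\tilde U^\top & \Theta\\ \Theta^\top & \tilde V\Sigma\tilde V^\top\end{pmatrix},
\]
and similarly for $Z^*$. The two off-diagonal blocks therefore contribute exactly $2\|\Theta-\Theta^*\|_F^2$. The diagonal blocks are $(\Theta\Theta^\top)^{1/2}-(\Theta^*\Theta^{*\top})^{1/2}$ and its right-side analogue, because $\tilde U\Sigma\tilde U^\top=(\Theta\Theta^\top)^{1/2}$.

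\textbf{Main obstacle: the diagonal blocks.} I must show that each satisfies
\[
\|(\Theta\Theta^\top)^{1/2}-(\Theta^*\Theta^{*\top})^{1/2}\|_F^2\le 3\|\Theta-\Theta^*\|_F^2 .
\]
With this bound the total is at most $2+3+3=8$ times $\|\Theta-\Theta^*\|_F^2$.

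I would first try the operator-monotonicity / Powers–Størmer type route. The plan is:
\begin{enumerate}
\item Consider the symmetric dilation $S(\Theta)=\begin{pmatrix}0&\Theta\\ \Theta^\top&0\end{pmatrix}$, whose matrix absolute value is $|S(\Theta)|=\operatorname{diag}\bigl((\Theta\Theta^\top)^{1/2},(\Theta^\top\Theta)^{1/2}\bigr)$.
\item Apply the Frobenius perturbation bound for the absolute value, $\||A|-|B|\|_F\le\sqrt2\|A-B\|_F$ for symmetric $A,B$ (Araki–Yamagami / Kato).
\item Since $\|S(\Theta)-S(\Theta^*)\|_F^2=2\|\Theta-\Theta^*\|_F^2$, this gives a combined diagonal contribution of at most $4\|\Theta-\Theta^*\|_F^2$.
\end{enumerate}
This is even better than the $3+3$ needed, so the Gram-level bound holds with constant $6\le 8$.

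\textbf{Role of the hypothesis.} The condition $\|\Theta-\Theta^*\|_F\le\tfrac12\sigma_{\min}(\Theta^*)$ ensures that $\Theta$ has rank exactly $r$ with $\sigma_r(\Theta)\ge\tfrac12\sigma_{\min}(\Theta^*)$, by Weyl's inequality. Hence the rank-$r$ SVD of $\Theta$ captures all of $\Theta$, and the block identities above are exact rather than approximate. If the perturbation-bound citation is unavailable, the fallback is Tu et al.'s direct argument, which derives the diagonal bound via the Sylvester equation $X\Sigma^*+\Sigma^* X=\dots$ using $\sigma_{\min}$.
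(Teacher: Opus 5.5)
The paper gives no proof of its own here: it imports Lemma~5.14 of \citet{tu2016low} verbatim. Your argument therefore stands as a self-contained derivation, and it is correct.

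Your steps are:
\begin{itemize}
\item reduce to Lemma~\ref{lem:lift_geometry}, using $\sigma_r^2(Z^*)=2\sigma_{\min}(\Theta^*)$;
\item write $ZZ^\top=S(\Theta)+|S(\Theta)|$;
\item control the diagonal blocks through the dilation.
\end{itemize}
This gives a Gram-level constant of $6$, hence $d^2(Z,Z^*)\le\frac{3}{2(\sqrt2-1)}\|\Theta-\Theta^*\|_F^2/\sigma_{\min}(\Theta^*)$, which is already stronger than \eqref{eq:Ztotheta}.

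Your paragraph on the role of the hypothesis gives it a job it does not do.
\begin{itemize}
\item The identities $\tilde U\Sigma\tilde U^\top=(\Theta\Theta^\top)^{1/2}$ and $\tilde V\Sigma\tilde V^\top=(\Theta^\top\Theta)^{1/2}$ are exact simply because $\Theta=\tilde U\Sigma\tilde V^\top$ is given. They would hold even with zero singular values.
\item Lemma~\ref{lem:lift_geometry} needs only full column rank of $Z^*$.
\item The absolute-value perturbation bound is global.
\end{itemize}
So the condition $\|\Theta-\Theta^*\|_F\le\frac12\sigma_{\min}(\Theta^*)$ is never used, and your route proves the inequality unconditionally. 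Say so explicitly rather than invoking Weyl's inequality for a purpose it does not serve.

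The attribution for the key perturbation step should be fixed. The $\sqrt2$ constant of Araki--Yamagami is for general, non-normal operators, and Kato's estimate concerns the operator norm. For symmetric $A=\sum_i\alpha_ia_ia_i^\top$ and $B=\sum_j\beta_jb_jb_j^\top$ you can prove a sharper bound in one line:
\[
\||A|-|B|\|_F^2=\sum_{i,j}(|\alpha_i|-|\beta_j|)^2(a_i^\top b_j)^2\le\sum_{i,j}(\alpha_i-\beta_j)^2(a_i^\top b_j)^2=\|A-B\|_F^2 .
\]
Moreover, $S(\Theta)-S(\Theta^*)$ is block off-diagonal and $|S(\Theta)|-|S(\Theta^*)|$ is block diagonal, so the two are Frobenius-orthogonal. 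Hence in fact $\|ZZ^\top-Z^*(Z^*)^\top\|_F^2\le 4\|\Theta-\Theta^*\|_F^2$, and the final constant improves to $1/(\sqrt2-1)$.

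For comparison, the constant $2/(\sqrt2-1)$ in the statement corresponds exactly to a Gram constant of $8$. That is what the coarser argument gives: $ZZ^\top=2\,S(\Theta)_+$, and Frobenius projection onto the positive semidefinite cone is nonexpansive, so $\|ZZ^\top-Z^*(Z^*)^\top\|_F\le 2\|S(\Theta)-S(\Theta^*)\|_F=2\sqrt2\,\|\Theta-\Theta^*\|_F$. Exploiting the block orthogonality, which that argument ignores, is what improves $8$ to $4$.
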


The next lemma is an exact algebraic identity for the balancing-penalty term and makes the subsequent combination step transparent. Recall $M=U^\top U-V^\top V$ and $\nabla_Zp(Z)=\breve ZM$.

\begin{lemma}[Balancing penalty identity]
\label{lem:balance_geometry}
Assume $(U^*)^\top U^*=(V^*)^\top V^*$. Then
\begin{equation}
\label{eq:balance_inner}
\bigl\langle\nabla_Zp(Z),\,\Delta_Z\bigr\rangle
=\frac12\|M\|_F^2+\frac12\bigl\langle M,\;\Delta_U^\top\Delta_U-\Delta_V^\top\Delta_V\bigr\rangle
\ \ge\ \frac12\|M\|_F^2-\frac12\|M\|_F\,d^2(Z,Z^*).
\end{equation}
\end{lemma}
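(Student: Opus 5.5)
We prove the identity in \eqref{eq:balance_inner} by direct algebra and then obtain the inequality from Cauchy--Schwarz. Since $\nabla_Zp(Z)=\breve ZM=(UM;-VM)$, we have
\[
\bigl\langle\nabla_Zp(Z),\Delta_Z\bigr\rangle=\langle UM,\Delta_U\rangle-\langle VM,\Delta_V\rangle=\bigl\langle M,\;U^\top\Delta_U-V^\top\Delta_V\bigr\rangle,
\]
where $M$ is symmetric. Only its symmetric part matters, so we may replace $U^\top\Delta_U-V^\top\Delta_V$ by its symmetrization $\tfrac12(U^\top\Delta_U+\Delta_U^\top U)-\tfrac12(V^\top\Delta_V+\Delta_V^\top V)$.

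Next, substitute $U=U^*\widehat O+\Delta_U$ and $V=V^*\widehat O+\Delta_V$ into $M$. This gives
\[
M=\widehat O^\top\bigl((U^*)^\top U^*-(V^*)^\top V^*\bigr)\widehat O+\widehat O^\top N+N^\top\widehat O+\Delta_U^\top\Delta_U-\Delta_V^\top\Delta_V .
\]
By the balancing assumption $(U^*)^\top U^*=(V^*)^\top V^*$, the first term vanishes.

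Similarly,
\[
U^\top\Delta_U-V^\top\Delta_V=\widehat O^\top N+\Delta_U^\top\Delta_U-\Delta_V^\top\Delta_V .
\]
Its symmetrization is
\[
\tfrac12(\widehat O^\top N+N^\top\widehat O)+(\Delta_U^\top\Delta_U-\Delta_V^\top\Delta_V).
\]
Comparing with the expression for $M$, this equals
\[
\tfrac12 M+\tfrac12(\Delta_U^\top\Delta_U-\Delta_V^\top\Delta_V).
\]
Pairing with $M$ yields exactly
\[
\tfrac12\|M\|_F^2+\tfrac12\bigl\langle M,\Delta_U^\top\Delta_U-\Delta_V^\top\Delta_V\bigr\rangle,
\]
which is the identity.

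For the inequality, apply Cauchy--Schwarz to get
\[
\bigl\langle M,\Delta_U^\top\Delta_U-\Delta_V^\top\Delta_V\bigr\rangle\ge-\|M\|_F\,\bigl\|\Delta_U^\top\Delta_U-\Delta_V^\top\Delta_V\bigr\|_F .
\]
The two Gram matrices $\Delta_U^\top\Delta_U$ and $\Delta_V^\top\Delta_V$ are positive semidefinite, and the difference of two PSD matrices has Frobenius norm at most the Frobenius norm of their sum. The sum is bounded by its trace:
\[
\bigl\|\Delta_U^\top\Delta_U+\Delta_V^\top\Delta_V\bigr\|_F\le\operatorname{tr}\bigl(\Delta_U^\top\Delta_U+\Delta_V^\top\Delta_V\bigr)=\|\Delta_U\|_F^2+\|\Delta_V\|_F^2=d^2(Z,Z^*).
\]
Combining these bounds gives the stated lower bound.

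The main obstacle is the bookkeeping in the symmetrization step, namely matching the cross terms involving $N$ in $M$ with those in $U^\top\Delta_U-V^\top\Delta_V$. This is where the balancing of $Z^*$ and the symmetry of $M$ are essential. The PSD comparison $\|A-B\|_F\le\|A+B\|_F$ for $A,B\succeq0$ follows from expanding both sides, since $\langle A,B\rangle\ge0$.
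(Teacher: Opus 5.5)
Your proof is correct and follows the paper's argument. Both expand $M=\widehat O^\top N+N^\top\widehat O+\Delta_U^\top\Delta_U-\Delta_V^\top\Delta_V$ using the balancing of $Z^*$, and both use the symmetry of $M$ to replace $\widehat O^\top N$ in $U^\top\Delta_U-V^\top\Delta_V$ by its symmetric part before pairing with $M$. The only cosmetic difference is the final norm bound: you use $\|A-B\|_F\le\|A+B\|_F\le\operatorname{tr}(A+B)$ for positive semidefinite $A,B$, whereas the paper uses the triangle inequality with $\|\Delta_U^\top\Delta_U\|_F\le\|\Delta_U\|_F^2$; both give $\|\Delta_U^\top\Delta_U-\Delta_V^\top\Delta_V\|_F\le d^2(Z,Z^*)$.
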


\begin{proof}
Write $D_\Delta:=\Delta_U^\top\Delta_U-\Delta_V^\top\Delta_V$. Substituting $U=U^*\widehat O+\Delta_U$ and $V=V^*\widehat O+\Delta_V$ into $M$ and using $(U^*)^\top U^*-(V^*)^\top V^*=0$,
\begin{equation}
\label{eq:appen:M_expand}
M=\widehat O^\top N+N^\top\widehat O+D_\Delta,
\qquad
N=(U^*)^\top\Delta_U-(V^*)^\top\Delta_V .
\end{equation}
Next, $\langle\nabla_Zp(Z),\Delta_Z\rangle=\langle\breve ZM,\Delta_Z\rangle=\operatorname{tr}\bigl(M\,\breve Z^\top\Delta_Z\bigr)$, where we used that $M$ is symmetric, and
\[
\breve Z^\top\Delta_Z=U^\top\Delta_U-V^\top\Delta_V
=\widehat O^\top N+D_\Delta .
\]
Since $\operatorname{tr}(MA)=\operatorname{tr}(MA^\top)$ for symmetric $M$, we may replace $\widehat O^\top N$ by its symmetric part $\frac12(\widehat O^\top N+N^\top\widehat O)$, which equals $\frac12(M-D_\Delta)$ by \eqref{eq:appen:M_expand}. Hence
\[
\bigl\langle\nabla_Zp(Z),\Delta_Z\bigr\rangle
=\Bigl\langle M,\;\frac12(M-D_\Delta)+D_\Delta\Bigr\rangle
=\frac12\|M\|_F^2+\frac12\langle M,D_\Delta\rangle,
\]
which is the claimed identity. The inequality follows from
$|\langle M,D_\Delta\rangle|\le\|M\|_F\bigl(\|\Delta_U^\top\Delta_U\|_F+\|\Delta_V^\top\Delta_V\|_F\bigr)\le\|M\|_F\bigl(\|\Delta_U\|_F^2+\|\Delta_V\|_F^2\bigr)=\|M\|_F\,d^2(Z,Z^*)$.
\end{proof}

\begin{lemma}[Gram coercivity]
\label{lem:appen:gram}
Assume $(U^*)^\top U^*=(V^*)^\top V^*$. Then
\begin{equation}
\label{eq:appen:gram_identity}
\bigl\|ZZ^\top-Z^*(Z^*)^\top\bigr\|_F^2
=4\|\Theta-\Theta^*\|_F^2+\|M\|_F^2-2\|N\|_F^2,
\end{equation}
where $N$ is as in \eqref{eq:appen:M_expand}. Consequently, with $\sigma_r^2(Z^*)=2\sigma_{\min}$,
\begin{equation}
\label{eq:appen:gram_coercive}
\|\Theta-\Theta^*\|_F^2
\ \ge\ \frac{2}{5}\,\sigma_{\min}\,d^2(Z,Z^*)-\frac14\|M\|_F^2 .
\end{equation}
\end{lemma}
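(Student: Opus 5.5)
The plan is to prove \eqref{eq:appen:gram_identity} by writing every Frobenius norm as a trace of $r\times r$ Gram products. The consequence \eqref{eq:appen:gram_coercive} then follows from Lemma~\ref{lem:lift_geometry}.

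\textbf{Step 1: expand the left side.} Write $\|A\|_F^2=\operatorname{tr}(A^\top A)$ and use cyclicity. This gives
\[
\bigl\|ZZ^\top-Z^*(Z^*)^\top\bigr\|_F^2=\|Z^\top Z\|_F^2+\|(Z^*)^\top Z^*\|_F^2-2\|(Z^*)^\top Z\|_F^2 .
\]
Each term is then split using the sign-flipped stack $\breve Z$. For any stacks $Z_1=(U_1;V_1)$ and $Z_2=(U_2;V_2)$ we have $Z_1^\top Z_2=U_1^\top U_2+V_1^\top V_2$ and $\breve Z_1^\top Z_2=U_1^\top U_2-V_1^\top V_2$. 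Expanding both squares and subtracting gives the polarization identity
\[
\|Z_1^\top Z_2\|_F^2-\|\breve Z_1^\top Z_2\|_F^2=4\langle U_1^\top U_2,\,V_1^\top V_2\rangle .
\]

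\textbf{Step 2: identify each piece.}
\begin{itemize}
\item For $Z_1=Z_2=Z$: $\breve Z^\top Z=U^\top U-V^\top V=M$, so $\|Z^\top Z\|_F^2=\|M\|_F^2+4\operatorname{tr}(U^\top UV^\top V)$.
\item For $Z_1=Z_2=Z^*$: balancedness makes the corresponding $M^*$ vanish, so $\|(Z^*)^\top Z^*\|_F^2=4\operatorname{tr}\bigl((U^*)^\top U^*(V^*)^\top V^*\bigr)$.
\item For $Z_1=Z^*$, $Z_2=Z$: substitute $U=U^*\widehat O+\Delta_U$ and $V=V^*\widehat O+\Delta_V$. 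Since $(U^*)^\top U^*=(V^*)^\top V^*$, the $\widehat O$ terms cancel and $(U^*)^\top U-(V^*)^\top V=N$. Hence $\|(Z^*)^\top Z\|_F^2=\|N\|_F^2+4\langle (U^*)^\top U,(V^*)^\top V\rangle$.
\end{itemize}
On the other side, expanding by cyclicity gives
\[
4\|UV^\top-U^*(V^*)^\top\|_F^2=4\operatorname{tr}(U^\top UV^\top V)+4\operatorname{tr}\bigl((U^*)^\top U^*(V^*)^\top V^*\bigr)-8\langle (U^*)^\top U,(V^*)^\top V\rangle .
\]
Substituting the three pieces into Step~1, the trace terms recombine exactly into $4\|\Theta-\Theta^*\|_F^2$. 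What remains is $\|M\|_F^2-2\|N\|_F^2$, which is \eqref{eq:appen:gram_identity}.

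\textbf{Step 3: the consequence.} Since $2\|N\|_F^2\ge0$, the identity gives
\[
4\|\Theta-\Theta^*\|_F^2\ \ge\ \bigl\|ZZ^\top-Z^*(Z^*)^\top\bigr\|_F^2-\|M\|_F^2 .
\]
Lemma~\ref{lem:lift_geometry} bounds the first term on the right below by $\tfrac45\sigma_r^2(Z^*)\,d^2(Z,Z^*)=\tfrac85\sigma_{\min}\,d^2(Z,Z^*)$. Dividing by $4$ yields \eqref{eq:appen:gram_coercive}.

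The only delicate point is the cross term in Step~2: the $\widehat O$ contributions cancel only because of balancedness, and the inner product $\langle (U^*)^\top U,(V^*)^\top V\rangle$ must match the cross term of $\|\Theta-\Theta^*\|_F^2$ with the correct factor. Everything else is routine trace algebra.
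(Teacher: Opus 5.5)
Your proof is correct and is essentially the paper's argument. The paper splits the left side into blocks as $X+Y+2W$ and rewrites it as $(X+Y-2W)+4W$, which is your polarization step with $\breve Z$ in different bookkeeping; it relies on the same balancedness observation $(U^*)^\top U-(V^*)^\top V=N$, and it obtains the consequence the same way, by dropping $2\|N\|_F^2$ and applying Lemma~\ref{lem:lift_geometry} with $\sigma_r^2(Z^*)=2\sigma_{\min}$.
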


\begin{proof}
Write $X:=\|UU^\top-U^*(U^*)^\top\|_F^2$, $Y:=\|VV^\top-V^*(V^*)^\top\|_F^2$ and $W:=\|UV^\top-U^*(V^*)^\top\|_F^2$, so that, reading off the four blocks of $ZZ^\top-Z^*(Z^*)^\top$,
\[
\bigl\|ZZ^\top-Z^*(Z^*)^\top\bigr\|_F^2=X+Y+2W .
\]
Expanding the traces, with $P:=(U^*)^\top U$ and $Q:=(V^*)^\top V$,
\begin{align*}
X&=\|U^\top U\|_F^2-2\|P\|_F^2+\|(U^*)^\top U^*\|_F^2,\qquad
Y=\|V^\top V\|_F^2-2\|Q\|_F^2+\|(V^*)^\top V^*\|_F^2,\\
W&=\langle U^\top U,V^\top V\rangle-2\langle P,Q\rangle+\langle(U^*)^\top U^*,(V^*)^\top V^*\rangle,
\end{align*}
where we used $\operatorname{tr}(UV^\top V^*(U^*)^\top)=\operatorname{tr}\{((U^*)^\top U)(V^\top V^*)\}=\langle P,Q\rangle$ for both cross terms. Hence
\[
X+Y-2W=\|U^\top U-V^\top V\|_F^2+\bigl\|(U^*)^\top U^*-(V^*)^\top V^*\bigr\|_F^2-2\|P-Q\|_F^2
=\|M\|_F^2-2\|P-Q\|_F^2,
\]
using the balancedness of $Z^*$. Finally, $P-Q=(U^*)^\top U-(V^*)^\top V=(U^*)^\top\Delta_U-(V^*)^\top\Delta_V=N$, again by balancedness, and \eqref{eq:appen:gram_identity} follows from $X+Y+2W=(X+Y-2W)+4W$.

For \eqref{eq:appen:gram_coercive}, Lemma~\ref{lem:lift_geometry} together with $\sigma_r^2(Z^*)=2\sigma_{\min}$ gives $\tfrac{8}{5}\sigma_{\min}\,d^2(Z,Z^*)\le\|ZZ^\top-Z^*(Z^*)^\top\|_F^2$, and hence by \eqref{eq:appen:gram_identity},
\[
\frac{8}{5}\,\sigma_{\min}\,d^2(Z,Z^*)
\ \le\ 4\|\Theta-\Theta^*\|_F^2+\|M\|_F^2-2\|N\|_F^2
\ \le\ 4\|\Theta-\Theta^*\|_F^2+\|M\|_F^2 .
\]
Dividing by $4$ and rearranging yields \eqref{eq:appen:gram_coercive}.
\end{proof}

Finally, we record the first-order optimality property of the dual component of the saddle point; it is the only property of $w^*$ used in the mirror-ascent analysis, and it requires no interiority of $w^*$.

\begin{lemma}[Dual first-order condition]
\label{lem:appen:dual_foc}
Let $w^*=w^*_\lambda$ be as in \eqref{eq:PILLAR_def}. Then
\begin{equation}
\label{eq:appen:dual_foc}
\bigl\langle L(\Theta^*)-\lambda\nabla D(w^*),\,u-w^*\bigr\rangle\le0
\qquad\text{for all }u\in\Delta_m .
\end{equation}
\end{lemma}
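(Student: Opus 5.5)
This is the first-order optimality condition for a concave maximization over the convex set $\Delta_m$. By Definition~\ref{def:pillar_target}, $w^*$ maximizes
\[
g(w):=\summ w_kL_k(\Theta^*)-\lambda D(w)=\langle L(\Theta^*),w\rangle-\lambda D(w)
\]
over $\Delta_m$. Here $\Theta^*$ is held fixed, so the first term is linear in $w$. Since $\lambda>0$ and $D$ is convex, $g$ is concave, and its gradient wherever it exists is $\nabla g(w)=L(\Theta^*)-\lambda\nabla D(w)$. The plan is the standard variational-inequality argument along feasible segments, which uses convexity of $\Delta_m$ but no interiority of $w^*$.

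Fix $u\in\Delta_m$ and set $w_t:=w^*+t(u-w^*)$ for $t\in(0,1]$. Convexity of the simplex gives $w_t\in\Delta_m$, so optimality of $w^*$ yields $g(w_t)\le g(w^*)$. Dividing by $t>0$ gives
\[
\frac{g(w^*+t(u-w^*))-g(w^*)}{t}\le0 .
\]
Letting $t\downarrow0$, the left side converges to the directional derivative $\langle\nabla g(w^*),u-w^*\rangle$. That limit is exactly $\langle L(\Theta^*)-\lambda\nabla D(w^*),u-w^*\rangle\le0$, which is \eqref{eq:appen:dual_foc}.

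The only delicate point is the meaning of $\nabla D(w^*)$ when $w^*$ lies on the relative boundary of $\Delta_m$. There, for example, the KL gradient $\log(w_k/w^{\mathrm{ref}}_k)+1$ blows up for any coordinate with $w_k^*=0$. I would handle this in two layers.

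\begin{itemize}
\item First, the statement presupposes $\nabla D(w^*)$, and Assumption~\ref{ass:w_regularizer_mirror} uses $\nabla D(w)$ for all $w\in\Delta_m$. I read $D$ as differentiable on $\Delta_m$, possibly one-sided along feasible directions. The limit above is then a one-sided directional derivative along a feasible direction, which is all the argument needs.
\item Second, to avoid any differentiability beyond that, I would use the convexity inequality $D(w_t)\le (1-t)D(w^*)+tD(u)$ directly. Combined with the monotonicity of convex difference quotients, it shows that $t^{-1}\{D(w_t)-D(w^*)\}$ decreases to the directional derivative $D'(w^*;u-w^*)$. The identification $D'(w^*;u-w^*)=\langle\nabla D(w^*),u-w^*\rangle$ is the one step to check against the paper's convention.
\end{itemize}

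In the KL case this concern disappears. The inner maximizer is the softmax $w_k^*\propto w_k^{\mathrm{ref}}e^{L_k(\Theta^*)/\lambda}$ (cf.\ \eqref{eq:kl_profile_identity}), which lies in $\operatorname{int}(\Delta_m)$. There $\nabla g(w^*)$ is a multiple of $\mathbf 1$, and \eqref{eq:appen:dual_foc} holds with equality.
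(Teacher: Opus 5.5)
Your proof is correct and is the same argument as the paper's, which simply invokes the standard variational inequality for the maximizer $w^*$ of the concave, differentiable map $w\mapsto\langle w,L(\Theta^*)\rangle-\lambda D(w)$ over the convex set $\Delta_m$. Your segment-and-difference-quotient derivation spells that inequality out, and your remarks on boundary differentiability and on the KL softmax case (equality, since $\nabla g(w^*)$ is a multiple of $\mathbf 1$) are accurate but go beyond what the paper needs, because it takes $D$ to be differentiable on $\Delta_m$.
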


\begin{proof}
The map $w\mapsto\langle w,L(\Theta^*)\rangle-\lambda D(w)$ is concave and differentiable on $\Delta_m$ and is maximized at $w^*$; \eqref{eq:appen:dual_foc} is the standard variational inequality for a maximizer over a convex set.
\end{proof}

\subsection{Local neighborhood and initialization}
\label{appen:subsec:neighborhood}

Because the factorized problem is nonconvex, the analysis is local. Define
\begin{equation}
\label{eq:appen:mu_tilde}
\tilde\mu:=\min\{\mu,\,4\},
\qquad
C_{\mathrm{loc}}:=\frac{\beta_G^2}{2\mu}+\frac{\mu}{16}+\frac12,
\end{equation}
and the neighborhood radius
\begin{equation}
\label{eq:appen:I0}
I_0^2:=\min\left\{\frac{\tilde\mu\,\sigma_{\min}}{20\,C_{\mathrm{loc}}},\ \frac{\sigma_{\min}}{32},\ \frac{\varrho^2}{6\,\sigma_{\max}}\right\},
\end{equation}
where $\varrho$ is the radius in Assumption~\ref{ass:rsc_rss}. (If $\varrho$ can be taken of the order $\sigma_{\min}^{1/2}$ or larger, the third branch is never active; it is retained only so that the gradient bound $\|\nabla\Lkh\|_F\le\beta_L$ of Assumption~\ref{ass:rsc_rss} is available along the iterates.) The analysis is carried out on the event that the current iterate satisfies the neighborhood condition
\begin{equation}
\label{eq:local_neighborhood}
d(Z,Z^*)\le I_0 ,
\end{equation}
which will be propagated by induction. Three consequences of \eqref{eq:local_neighborhood} are used repeatedly.

\begin{lemma}[Consequences of the neighborhood condition]
\label{lemma:ZandZ*}
Suppose $d(Z,Z^*)\le I_0$. Then
\begin{enumerate}
\item[(i)] $\|Z\|_2\le\tfrac98\|Z^*\|_2$, and hence $2\max\{\|Z\|_2^2,\|Z^*\|_2^2\}\le6\sigma_{\max}$;
\item[(ii)] $\|\Theta-\Theta^*\|_F^2\le6\,\sigma_{\max}\,d^2(Z,Z^*)\le\varrho^2$, so in particular $\Theta=UV^\top$ lies in the ball of Assumption~\ref{ass:rsc_rss};
\item[(iii)] if in addition $Z^0$ satisfies $d(Z^0,Z^*)\le I_0$, then $\|Z\|_2\le\sqrt2\,\|Z^0\|_2$.
\end{enumerate}
\end{lemma}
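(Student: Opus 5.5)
The plan is to derive all three claims from the triangle inequality in spectral norm, combined with the two branches $I_0^2\le\sigma_{\min}/32$ and $I_0^2\le\varrho^2/(6\sigma_{\max})$ of \eqref{eq:appen:I0} and Lemma~\ref{lemma:thetatoZ}. No optimization or probabilistic input is needed.

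\emph{Step 1 (the key comparison).} First we show that $I_0\le\|Z^*\|_2/8$. Recall $\|Z^*\|_2^2=2\sigma_{\max}$, and use the branch $I_0^2\le\sigma_{\min}/32$. Then
\[
I_0^2\le\frac{\sigma_{\min}}{32}\le\frac{\sigma_{\max}}{32}=\frac{\|Z^*\|_2^2}{64}.
\]

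\emph{Step 2 (claim (i)).} Let $\widehat O$ be the optimal rotation. Since $\|Z^*\widehat O\|_2=\|Z^*\|_2$ and $\|\cdot\|_2\le\|\cdot\|_F$,
\[
\|Z\|_2\le\|Z^*\widehat O\|_2+\|Z-Z^*\widehat O\|_2\le\|Z^*\|_2+d(Z,Z^*)\le\|Z^*\|_2+I_0\le\tfrac98\|Z^*\|_2 .
\]
Squaring gives $2\max\{\|Z\|_2^2,\|Z^*\|_2^2\}\le2\cdot\tfrac{81}{64}\cdot2\sigma_{\max}=\tfrac{81}{16}\sigma_{\max}$, which is at most $6\sigma_{\max}$.

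\emph{Step 3 (claim (ii)).} Substitute (i) into Lemma~\ref{lemma:thetatoZ}. This yields $\|\Theta-\Theta^*\|_F^2\le6\sigma_{\max}\,d^2(Z,Z^*)$. The third branch of \eqref{eq:appen:I0} then gives $6\sigma_{\max}I_0^2\le\varrho^2$, so $\Theta$ lies in the ball of Assumption~\ref{ass:rsc_rss}.

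\emph{Step 4 (claim (iii)).} The same triangle inequality applied to $Z^0$, in reverse, gives
\[
\|Z^0\|_2\ge\|Z^*\|_2-d(Z^0,Z^*)\ge\tfrac78\|Z^*\|_2 .
\]
Combining this with (i) gives $\|Z\|_2\le\tfrac{9/8}{7/8}\|Z^0\|_2=\tfrac97\|Z^0\|_2\le\sqrt2\,\|Z^0\|_2$, since $81/49<2$.

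There is no genuine obstacle here. The only point needing care is Step 1: the factor $\tfrac98$ comes from comparing $I_0$, which is controlled by $\sigma_{\min}$, with $\|Z^*\|_2$, which is controlled by $\sigma_{\max}$, using $\sigma_{\min}\le\sigma_{\max}$. After that, the final constants just need to be checked numerically.
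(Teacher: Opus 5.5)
Your proposal is correct and follows the paper's proof essentially step for step. Like the paper, you use $I_0^2\le\sigma_{\min}/32\le\|Z^*\|_2^2/64$, the triangle inequality with the optimal rotation to get (i), Lemma~\ref{lemma:thetatoZ} together with the branch $I_0^2\le\varrho^2/(6\sigma_{\max})$ for (ii), and the reverse bound $\|Z^0\|_2\ge\tfrac78\|Z^*\|_2$ with $\tfrac97\le\sqrt2$ for (iii).
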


\begin{proof}
(i) Since $I_0^2\le\sigma_{\min}/32\le\sigma_{\max}/32=\|Z^*\|_2^2/64$, the triangle inequality gives
$\|Z\|_2\le\|Z^*\widehat O\|_2+\|\Delta_Z\|_2\le\|Z^*\|_2+I_0\le\tfrac98\|Z^*\|_2$. Consequently
$2\max\{\|Z\|_2^2,\|Z^*\|_2^2\}\le2\cdot\tfrac{81}{64}\|Z^*\|_2^2=\tfrac{81}{16}\,\sigma_{\max}\le6\sigma_{\max}$.
(ii) Combine (i) with Lemma~\ref{lemma:thetatoZ} and $I_0^2\le\varrho^2/(6\sigma_{\max})$.
(iii) As in (i), $\|Z^0\|_2\ge\|Z^*\|_2-I_0\ge\tfrac78\|Z^*\|_2$, so $\|Z\|_2\le\tfrac98\cdot\tfrac87\|Z^0\|_2\le\sqrt2\,\|Z^0\|_2$.
\end{proof}

The initialization condition of Assumption~\ref{ass:initialization} places the algorithm inside this neighborhood. We take the constants of Assumption~\ref{ass:initialization} to be
\begin{equation}
\label{eq:appen:ITheta_Iw}
I_\Theta:=\min\Bigl\{\frac{1}{16},\ \frac{c_{I_0}}{40}\Bigr\},
\qquad
I_w:=\frac{\sqrt{\mu\tilde\mu}}{14}\,c_{I_0},
\qquad\text{where}\quad
c_{I_0}:=\frac{I_0^2}{\sigma_{\min}} .
\end{equation}

\begin{lemma}[Accuracy of $Z^0$]
\label{lem:appen:Z0}
Suppose $\Theta^0$ satisfies the first condition in \eqref{eq:init_assum} with $I_\Theta$ as in \eqref{eq:appen:ITheta_Iw}. Let $\Theta^0_r=\tilde U\tilde\Sigma\tilde V^\top$ be the rank-$r$ truncated singular value decomposition of $\Theta^0$ computed in Algorithm~\ref{alg:BGDMA}, and set $U^0=\tilde U\tilde\Sigma^{1/2}$, $V^0=\tilde V\tilde\Sigma^{1/2}$ and $Z^0=(U^0;V^0)$. Then
\begin{equation}
\label{eq:appen:Z0_accuracy}
d^2(Z^0,Z^*)\le20\,I_\Theta\,\sigma_{\min}\min\{1,\sigma_{\min}\}\le\frac{I_0^2}{2}.
\end{equation}
\end{lemma}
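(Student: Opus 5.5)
The plan is to pass from the initial matrix error to the factor error in two moves. First we control the rank-$r$ truncation $\Theta^0_r$ of $\Theta^0$ in Frobenius norm. Then we invoke Lemma~\ref{lemma:Ztotheta}, whose balanced stacks are exactly the $Z^0$ and $Z^*$ of the statement. The numerical constants in \eqref{eq:appen:ITheta_Iw} should then make both inequalities in \eqref{eq:appen:Z0_accuracy} close directly.

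\emph{Step 1 (truncation at most doubles the error).} By Eckart--Young, $\Theta^0_r$ is a best rank-$r$ approximation of $\Theta^0$ in Frobenius norm. Since $\Theta^*\in\Xi(r)$, this gives $\|\Theta^0_r-\Theta^0\|_F\le\|\Theta^*-\Theta^0\|_F$. The triangle inequality then yields
\[
\|\Theta^0_r-\Theta^*\|_F^2\le4\|\Theta^0-\Theta^*\|_F^2\le4I_\Theta\,\sigma_{\min}^2\min\{1,\sigma_{\min}\},
\]
where the last inequality is the first condition in \eqref{eq:init_assum}.

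\emph{Step 2 (hypotheses of Lemma~\ref{lemma:Ztotheta}).} We need $\|\Theta^0_r-\Theta^*\|_F\le\frac12\sigma_{\min}$. This follows from Step~1 because $I_\Theta\le1/16$ and $\min\{1,\sigma_{\min}\}\le1$, so the right-hand side of Step~1 is at most $\sigma_{\min}^2/4$.

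Lemma~\ref{lemma:Ztotheta} also requires $\Theta^0_r$ to have rank exactly $r$, so that its balanced stack is an $(d_1+d_2)\times r$ matrix as stated. We check this with Weyl's inequality: $\sigma_r(\Theta^0_r)=\sigma_r(\Theta^0)\ge\sigma_{\min}-\|\Theta^0-\Theta^*\|_2\ge\sigma_{\min}-\sigma_{\min}/4>0$.

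\emph{Step 3 (conclude).} Applying \eqref{eq:Ztotheta} to the pair $(\Theta^0_r,\Theta^*)$ gives
\[
d^2(Z^0,Z^*)\le\frac{5\cdot4I_\Theta\sigma_{\min}^2\min\{1,\sigma_{\min}\}}{\sigma_{\min}}=20I_\Theta\sigma_{\min}\min\{1,\sigma_{\min}\},
\]
which is the first inequality in \eqref{eq:appen:Z0_accuracy}. For the second, we drop $\min\{1,\sigma_{\min}\}\le1$ and use $I_\Theta\le c_{I_0}/40=I_0^2/(40\sigma_{\min})$. This gives a bound of $20\cdot I_0^2/40=I_0^2/2$.

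The only point needing care is Step~2. We must confirm that the truncated, rather than raw, initializer satisfies the $\frac12\sigma_{\min}$ proximity required by Lemma~\ref{lemma:Ztotheta}, and that it has full rank $r$. The factor of $4$ lost in Step~1 is exactly what the choice $I_\Theta\le1/16$ absorbs.
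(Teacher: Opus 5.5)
Your proof is correct and follows the paper's argument step for step: Eckart--Young plus the triangle inequality to double the error, $I_\Theta\le 1/16$ to meet the $\tfrac12\sigma_{\min}$ hypothesis of Lemma~\ref{lemma:Ztotheta} (whose constant $5$ gives the factor $20$), and $I_\Theta\le c_{I_0}/40$ for the final bound $I_0^2/2$. Your Weyl-inequality check that $\Theta^0_r$ has exact rank $r$ is a small rigor addition that the paper leaves implicit.
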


\begin{proof}
Since $\Theta^0_r$ is the best rank-$r$ approximation of $\Theta^0$ in Frobenius norm and $\rank(\Theta^*)\le r$,
\[
\|\Theta^0_r-\Theta^*\|_F\le\|\Theta^0_r-\Theta^0\|_F+\|\Theta^0-\Theta^*\|_F\le2\|\Theta^0-\Theta^*\|_F .
\]
By \eqref{eq:init_assum} and $I_\Theta\le1/16$, $\|\Theta^0_r-\Theta^*\|_F\le2\sqrt{I_\Theta}\,\sigma_{\min}\le\tfrac12\sigma_{\min}$, so Lemma~\ref{lemma:Ztotheta} applies to the pair $(\Theta^0_r,\Theta^*)$ and yields
\[
d^2(Z^0,Z^*)\le\frac{5\|\Theta^0_r-\Theta^*\|_F^2}{\sigma_{\min}}
\le\frac{20\,\|\Theta^0-\Theta^*\|_F^2}{\sigma_{\min}}
\le20\,I_\Theta\,\sigma_{\min}\min\{1,\sigma_{\min}\}.
\]
Finally $20I_\Theta\sigma_{\min}\min\{1,\sigma_{\min}\}\le20I_\Theta\sigma_{\min}\le\tfrac12 c_{I_0}\sigma_{\min}=I_0^2/2$ by \eqref{eq:appen:ITheta_Iw}.
\end{proof}

\subsection{One-step contraction for the factor update}
\label{appen:subsec:onestepZ}

Fix an iteration and write $(Z,w)$ for the current iterate and $Z^+$ for the next factor iterate; thus $\Theta=UV^\top$ and, by \eqref{eq:UV_update},
\[
Z^+=Z-\etat\,\nabla_Z\widehat f(U,V,w),
\qquad
\nabla_Z\widehat f(U,V,w)=\widehat G(Z,w)+\nabla_Zp(Z),
\]
where, by the chain rule,
\begin{equation}
\label{eq:appen:chain}
\widehat G(Z,w)
=\begin{pmatrix}\nabla_\Theta\widehat L(\Theta,w)\,V\\[2pt]\nabla_\Theta\widehat L(\Theta,w)^\top U\end{pmatrix},
\qquad
\nabla_Zp(Z)=\breve ZM .
\end{equation}

\begin{lemma}[One-step contraction for $Z$]
\label{lem:appen:onestep_Z}
Suppose Assumptions~\ref{ass:rsc_rss}--\ref{ass:w_regularizer_mirror} hold, the current iterate satisfies $d(Z,Z^*)\le I_0$ with $I_0$ as in \eqref{eq:appen:I0}, and the step size satisfies
\begin{equation}
\label{eq:stepsizes_revised}
\etat\ \le\ \frac{1}{16\,\|Z\|_2^2}\,\min\Bigl\{\frac{2\mu}{\beta_G^2},\,1\Bigr\} .
\end{equation}
Then
\begin{equation}
\label{eq:appen:Z_update_final}
d^2(Z^+,Z^*)
\ \le\ \Bigl(1-\etat\,\frac{\tilde\mu\,\sigma_{\min}}{10}\Bigr)\,d^2(Z,Z^*)
\ +\ \frac{14\,\etat}{\mu}\Bigl(\estatt^2+\beta_L^2\,\|w-w^*\|_1^2\Bigr).
\end{equation}
\end{lemma}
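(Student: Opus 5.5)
We follow the standard Tu et al.\ template. Using the rotation $\widehat O$ optimal for $Z$ as a feasible rotation for $Z^+$, we start from
\[
d^2(Z^+,Z^*)\le\|Z^+-Z^*\widehat O\|_F^2
=d^2(Z,Z^*)-2\etat\bigl\langle\nabla_Z\widehat f,\Delta_Z\bigr\rangle+\etat^2\|\nabla_Z\widehat f\|_F^2 .
\]
The task is therefore a lower bound on the inner product (a ``regularity condition'') and an upper bound on the squared gradient norm.

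\textbf{Splitting the gradient.} We split the loss gradient as
\[
\nabla_\Theta\widehat L(\Theta,w)=\underbrace{\nabla_\Theta\widehat L(\Theta,w^*)-\nabla_\Theta\widehat L(\Theta^*,w^*)}_{A_1}+\underbrace{\nabla_\Theta\widehat L(\Theta^*,w^*)}_{A_2}+\underbrace{\textstyle\sum_k(w_k-w_k^*)\nabla_\Theta\Lkh(\Theta)}_{A_3}.
\]
Because $\widehat L(\cdot,w^*)$ is a convex combination of the $\Lkh$, it inherits \eqref{eq:rsc_rss}. Assumption~\ref{ass:pop_foc} is not needed directly; $A_2$ is the noise, with $\|A_2\|_{F,2r}=\estatt$. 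For $A_3$, Lemma~\ref{lemma:ZandZ*}(ii) places $\Theta$ in the $\varrho$-ball, so $\|A_3\|_F\le\beta_L\|w-w^*\|_1$.

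For the inner product, the chain rule \eqref{eq:appen:chain} gives $\langle\widehat G,\Delta_Z\rangle=\langle\nabla_\Theta\widehat L(\Theta,w),\,\Delta_UV^\top+U\Delta_V^\top\rangle$. We use the identity
\[
\Delta_UV^\top+U\Delta_V^\top=(\Theta-\Theta^*)+\Delta_U\Delta_V^\top,
\]
where both terms have rank at most $2r$. This yields:
\begin{itemize}
\item from $A_1$, by \eqref{eq:appen:monotone} and \eqref{eq:appen:res_lip}, at least $\mu\|\Theta-\Theta^*\|_F^2-\beta_G\|\Theta-\Theta^*\|_F\|\Delta_U\Delta_V^\top\|_F$;
\item from $A_2$ and $A_3$, at least $-(\estatt+\beta_L\|w-w^*\|_1)\bigl(\|\Theta-\Theta^*\|_F+d^2/2\bigr)$.
\end{itemize}
We bound $\|\Delta_U\Delta_V^\top\|_F\le d^2/2$ and then use $d\le I_0$, with $I_0^2\le\tilde\mu\sigma_{\min}/(20C_{\mathrm{loc}})$, to absorb the cubic terms. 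The balancing part is handled by Lemma~\ref{lem:balance_geometry}: it contributes $\tfrac12\|M\|_F^2-\tfrac12\|M\|_Fd^2$, and the negative term is absorbed by Young's inequality together with $d^2\le\sigma_{\min}/32$.

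Next we convert $\|\Theta-\Theta^*\|_F^2$ into $d^2$ via Lemma~\ref{lem:appen:gram}, \eqref{eq:appen:gram_coercive}. Keeping half of $\mu\|\Theta-\Theta^*\|_F^2$ plus a $\tfrac{\tilde\mu}{16}\|M\|_F^2$ fraction of the balancing term gives
\[
\text{(inner product)}\ \gtrsim\ \frac{\tilde\mu\sigma_{\min}}{10}\,d^2+c\|M\|_F^2+c\mu\|\Theta-\Theta^*\|_F^2-\frac{C}{\mu}\bigl(\estatt^2+\beta_L^2\|w-w^*\|_1^2\bigr).
\]
Here Young's inequality is applied on the noise cross terms against the $\mu\|\Theta-\Theta^*\|_F^2$ term. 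The choice $\tilde\mu=\min\{\mu,4\}$ ensures the $M$-coefficients match.

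\textbf{Gradient norm.} By Lemma~\ref{lem:appen:restricted_product}, the first two pieces satisfy $\|\widehat G\|_F^2\le 2\|Z\|_2^2\|\nabla_\Theta\widehat L(\Theta,w)\|_{F,r}^2$, where $\|A_3\|_{F,r}\le\|A_3\|_F$. Combined with \eqref{eq:appen:res_lip} this gives
\[
\|\widehat G\|_F^2\le C\|Z\|_2^2\bigl(\beta_G^2\|\Theta-\Theta^*\|_F^2+\estatt^2+\beta_L^2\|w-w^*\|_1^2\bigr),
\]
and also $\|\breve ZM\|_F^2\le\|Z\|_2^2\|M\|_F^2$. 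The step size \eqref{eq:stepsizes_revised} is chosen so that
\[
\etat\|Z\|_2^2\max\{\beta_G^2/\mu,\,1\}\le1/8 .
\]
Hence the $\etat^2$ terms in $\|\Theta-\Theta^*\|_F^2$ and $\|M\|_F^2$ are dominated by the positive $c\mu\|\Theta-\Theta^*\|_F^2$ and $c\|M\|_F^2$ terms kept above. The noise contributions from this part carry a factor at most $\etat\cdot\etat\|Z\|_2^2\le\etat/(16\mu)\cdot\mu$, so they can be merged into the constant $14\etat/\mu$.

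\textbf{Main obstacle.} The delicate step is the constant bookkeeping in the regularity inequality. It must simultaneously
\begin{itemize}
\item absorb the cubic remainders $\Delta_U\Delta_V^\top$ and $\|M\|_Fd^2$ using the specific radius $I_0$;
\item retain enough of $\|\Theta-\Theta^*\|_F^2$ and $\|M\|_F^2$ to cancel the $\etat^2$ terms;
\item land exactly on the contraction factor $\tilde\mu\sigma_{\min}/10$ after the Gram coercivity loss.
\end{itemize}
We would first fix the Young splits with explicit fractions (for example $\mu/4$ against the noise and $\mu/4$ against the remainders), and tune them only if the final constants fail.
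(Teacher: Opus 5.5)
Your proposal is correct and follows the paper's proof essentially step for step: the optimal-rotation expansion, the identity $\Delta_UV^\top+U\Delta_V^\top=(\Theta-\Theta^*)+\Delta_U\Delta_V^\top$, the restricted monotonicity and Lipschitz bounds, the balancing identity, Gram coercivity applied to a $\tilde\mu/4$ fraction of $\|\Theta-\Theta^*\|_F^2$, and Lemma~\ref{lem:appen:restricted_product} for the gradient norm; your one deviation, placing the weight error at $\Theta$ rather than at $\Theta^*$, is harmless because Lemma~\ref{lemma:ZandZ*}(ii) supplies the gradient bound there. The constants are tight, so in the bookkeeping you should do four things: charge the remainder $\tfrac12\|M\|_Fd^2$ to $C_{\mathrm{loc}}I_0^2d^2$ rather than to $d^2\le\sigma_{\min}/32$, since the latter fails for small $\mu$; aim for only $\tfrac{\tilde\mu\sigma_{\min}}{20}d^2$ in the regularity inequality, since the factor $2\etat$ doubles it; group the gradient norm as $(\beta_G\sqrt{W}+E)^2\le2\beta_G^2W+2E^2$, with $W=\|\Theta-\Theta^*\|_F^2$ and $E=\estatt+\beta_L\|w-w^*\|_1$, so that the retained $\tfrac{\mu}{2}W$ exactly cancels $8\etat^2\|Z\|_2^2\beta_G^2W$ (a three-way split leaves $12\etat^2\|Z\|_2^2\beta_G^2W\le\tfrac32\etat\mu W$, which exceeds it); and merge the $\etat^2$ noise term using $8\etat\|Z\|_2^2\le\mu/\beta_G^2\le1/\mu$.
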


\begin{proof}
Throughout the proof write $d^2:=d^2(Z,Z^*)$, $W:=\|\Theta-\Theta^*\|_F^2$, $\delta_w:=\|w-w^*\|_1$, and
\[
E:=\estatt+\beta_L\,\delta_w .
\]
Using the definition of $d(\cdot,\cdot)$ and the suboptimality of the rotation $\widehat O$,
\begin{align}
d^2(Z^+,Z^*)
&\le\|Z^+-Z^*\widehat O\|_F^2
=\bigl\|\Delta_Z-\etat\nabla_Z\widehat f(U,V,w)\bigr\|_F^2 \notag\\
&=d^2-2\etat\bigl\langle\nabla_Z\widehat f(U,V,w),\Delta_Z\bigr\rangle+\etat^2\bigl\|\nabla_Z\widehat f(U,V,w)\bigr\|_F^2 \notag\\
&\le d^2-2\etat(T_1+R_1)+2\etat^2(T_2+R_2),
\label{eq:split_T_R}
\end{align}
where, recalling \eqref{eq:appen:chain} and using $\|a+b\|_F^2\le2\|a\|_F^2+2\|b\|_F^2$,
\[
T_1:=\bigl\langle\widehat G(Z,w),\Delta_Z\bigr\rangle,\quad
R_1:=\bigl\langle\nabla_Zp(Z),\Delta_Z\bigr\rangle,\quad
T_2:=\bigl\|\widehat G(Z,w)\bigr\|_F^2,\quad
R_2:=\bigl\|\nabla_Zp(Z)\bigr\|_F^2 .
\]

\medskip\noindent\textbf{Step 1: lower bound for $T_1$.}
By \eqref{eq:appen:chain},
\begin{equation}
T_1=\bigl\langle\nabla_\Theta\widehat L(\Theta,w),\,\Delta_UV^\top+U\Delta_V^\top\bigr\rangle .
\label{eq:T1_chain}
\end{equation}
Using $V=\Delta_V+V^*\widehat O$ and $U=\Delta_U+U^*\widehat O$,
\begin{align}
\Delta_UV^\top+U\Delta_V^\top
&=\Delta_U\Delta_V^\top+\Delta_U(V^*\widehat O)^\top+U\Delta_V^\top \notag\\
&=\Delta_U\Delta_V^\top+(U-U^*\widehat O)(V^*\widehat O)^\top+U(V-V^*\widehat O)^\top \notag\\
&=\Delta_U\Delta_V^\top+UV^\top-U^*(V^*)^\top
=\Delta_U\Delta_V^\top+\Theta-\Theta^* .
\label{eq:decomp_matrix}
\end{align}
Plugging \eqref{eq:decomp_matrix} into \eqref{eq:T1_chain} and decomposing the weighted gradient around $(\Theta^*,w^*)$ yields $T_1=T_{11}+T_{12}+T_{13}$ with
\begin{align*}
T_{11}&:=\bigl\langle\nabla_\Theta\widehat L(\Theta,w)-\nabla_\Theta\widehat L(\Theta^*,w),\,\Theta-\Theta^*\bigr\rangle,\\
T_{12}&:=\bigl\langle\nabla_\Theta\widehat L(\Theta^*,w),\,\Theta-\Theta^*\bigr\rangle,\qquad
T_{13}:=\bigl\langle\nabla_\Theta\widehat L(\Theta,w),\,\Delta_U\Delta_V^\top\bigr\rangle .
\end{align*}

\emph{Term $T_{11}$.} Both $\Theta$ and $\Theta^*$ lie in $\Xi(r)$, so \eqref{eq:appen:monotone} of Lemma~\ref{lem:appen:monotone_lipschitz}, applied to each $\Lkh$ and averaged with the weights $w_k\ge0$, $\sum_kw_k=1$, gives
\begin{equation}
\label{eq:T11_bound}
T_{11}\ \ge\ \mu\,W .
\end{equation}

\emph{Term $T_{12}$.} Write $\nabla_\Theta\widehat L(\Theta^*,w)=\nabla_\Theta\widehat L(\Theta^*,w^*)+\sum_k(w_k-w_k^*)\nabla_\Theta\Lkh(\Theta^*)$. Since $\Theta-\Theta^*\in\Xi(2r)$, Definition~\ref{def:stat_error} gives $|\langle\nabla_\Theta\widehat L(\Theta^*,w^*),\Theta-\Theta^*\rangle|\le\estatt\sqrt W$, while the gradient bound of Assumption~\ref{ass:rsc_rss} and H\"older's inequality give
\[
\Bigl|\sum_k(w_k-w_k^*)\bigl\langle\nabla_\Theta\Lkh(\Theta^*),\Theta-\Theta^*\bigr\rangle\Bigr|
\le\beta_L\,\delta_w\,\sqrt W .
\]
Hence
\begin{equation}
\label{eq:T12_bound}
T_{12}\ \ge\ -E\,\sqrt W .
\end{equation}

\emph{Term $T_{13}$.} The matrix $\Delta_U\Delta_V^\top$ has rank at most $r$ and $\|\Delta_U\Delta_V^\top\|_F\le\|\Delta_U\|_F\|\Delta_V\|_F\le d^2/2$. Decomposing $\nabla_\Theta\widehat L(\Theta,w)$ as in the previous two steps and using \eqref{eq:appen:res_lip} for the first piece, Definition~\ref{def:stat_error} for the second, and the gradient bound for the third,
\begin{equation}
\label{eq:T13_bound}
|T_{13}|
\le\Bigl(\bigl\|\nabla_\Theta\widehat L(\Theta,w)-\nabla_\Theta\widehat L(\Theta^*,w)\bigr\|_{F,r}+\estatt+\beta_L\delta_w\Bigr)\frac{d^2}{2}
\le\bigl(\beta_G\sqrt W+E\bigr)\frac{d^2}{2} ,
\end{equation}
where the restricted Lipschitz bound \eqref{eq:appen:res_lip} was applied to each $\Lkh$ and averaged.

Combining \eqref{eq:T11_bound}--\eqref{eq:T13_bound} and applying Young's inequality $ab\le a^2/(2\epsilon)+\epsilon b^2/2$ three times --- to $E\sqrt W$ with $\epsilon=\mu/4$, to $(\beta_Gd^2/2)\sqrt W$ with $\epsilon=\mu/4$, and to $E\cdot(d^2/2)$ with $\epsilon=\mu/2$ --- we obtain, using also $d^2\le I_0^2$,
\begin{align}
T_1
&\ge\mu W-E\sqrt W-\bigl(\beta_G\sqrt W+E\bigr)\frac{d^2}{2} \notag\\
&\ge\mu W-\Bigl(\frac{2E^2}{\mu}+\frac{\mu}{8}W\Bigr)
-\Bigl(\frac{\beta_G^2d^4}{2\mu}+\frac{\mu}{8}W\Bigr)
-\Bigl(\frac{E^2}{\mu}+\frac{\mu\,d^4}{16}\Bigr) \notag\\
&\ge\frac{3\mu}{4}\,W-\frac{3E^2}{\mu}
-\Bigl(\frac{\beta_G^2}{2\mu}+\frac{\mu}{16}\Bigr)I_0^2\,d^2 .
\label{eq:T1_final_lower}
\end{align}

\medskip\noindent\textbf{Step 2: lower bound for $R_1$ and combination with the curvature term.}
Lemma~\ref{lem:balance_geometry} and Young's inequality $\tfrac12ab\le\tfrac18a^2+\tfrac12b^2$ (with $a=\|M\|_F$, $b=d^2$) give
\begin{equation}
\label{eq:appen:R1}
R_1\ \ge\ \frac12\|M\|_F^2-\frac12\|M\|_F\,d^2
\ \ge\ \frac38\|M\|_F^2-\frac12\,I_0^2\,d^2 .
\end{equation}
Split the curvature term from \eqref{eq:T1_final_lower} as $\tfrac{3\mu}{4}W=\tfrac{\mu}{2}W+\tfrac{\mu}{4}W$ and recall $\tilde\mu=\min\{\mu,4\}\le\mu$. Applying the Gram coercivity bound \eqref{eq:appen:gram_coercive} to the fraction $\tfrac{\tilde\mu}{4}W\le\tfrac{\mu}{4}W$,
\begin{equation}
\label{eq:appen:R11}
\frac{\mu}{4}\,W\ \ge\ \frac{\tilde\mu}{4}\,W
\ \ge\ \frac{\tilde\mu}{10}\,\sigma_{\min}\,d^2-\frac{\tilde\mu}{16}\|M\|_F^2
\ \ge\ \frac{\tilde\mu}{10}\,\sigma_{\min}\,d^2-\frac14\|M\|_F^2 ,
\end{equation}
where the last step uses $\tilde\mu\le4$. Adding \eqref{eq:T1_final_lower}, \eqref{eq:appen:R1} and \eqref{eq:appen:R11}, and using the definition \eqref{eq:appen:mu_tilde} of $C_{\mathrm{loc}}$ together with $C_{\mathrm{loc}}I_0^2\le\tilde\mu\sigma_{\min}/20$ from \eqref{eq:appen:I0},
\begin{equation}
\label{eq:T1R1_bound}
T_1+R_1
\ \ge\ \frac{\tilde\mu\,\sigma_{\min}}{20}\,d^2
+\frac{\mu}{2}\,W+\frac18\|M\|_F^2-\frac{3E^2}{\mu} .
\end{equation}

\medskip\noindent\textbf{Step 3: upper bounds for $T_2$ and $R_2$.}
By \eqref{eq:appen:chain} and Lemma~\ref{lem:appen:restricted_product},
\begin{align*}
T_2&=\bigl\|\nabla_\Theta\widehat L(\Theta,w)V\bigr\|_F^2+\bigl\|\nabla_\Theta\widehat L(\Theta,w)^\top U\bigr\|_F^2\\
&\le\bigl(\|U\|_2^2+\|V\|_2^2\bigr)\,\bigl\|\nabla_\Theta\widehat L(\Theta,w)\bigr\|_{F,r}^2
\le2\|Z\|_2^2\,\bigl\|\nabla_\Theta\widehat L(\Theta,w)\bigr\|_{F,r}^2 .
\end{align*}
Decomposing $\nabla_\Theta\widehat L(\Theta,w)$ as in Step~1 and using the triangle inequality for $\|\cdot\|_{F,r}$ together with \eqref{eq:appen:res_lip},
\begin{equation}
\label{eq:T2_bound}
T_2\le2\|Z\|_2^2\bigl(\beta_G\sqrt W+E\bigr)^2\le4\|Z\|_2^2\bigl(\beta_G^2W+E^2\bigr).
\end{equation}
For the penalty part, $\nabla_Zp(Z)=\breve ZM$ and $\|\breve Z\|_2=\|Z\|_2$, so
\begin{equation}
\label{eq:R2_bound}
R_2\le\|Z\|_2^2\|M\|_F^2\le2\|Z\|_2^2\|M\|_F^2 .
\end{equation}

\medskip\noindent\textbf{Step 4: conclusion.}
Substituting \eqref{eq:T1R1_bound}, \eqref{eq:T2_bound} and \eqref{eq:R2_bound} into \eqref{eq:split_T_R},
\begin{align}
d^2(Z^+,Z^*)
&\le\Bigl(1-\etat\,\frac{\tilde\mu\sigma_{\min}}{10}\Bigr)d^2
+\etat\bigl(8\etat\|Z\|_2^2\beta_G^2-\mu\bigr)W
+\etat\Bigl(4\etat\|Z\|_2^2-\frac14\Bigr)\|M\|_F^2 \notag\\
&\qquad
+\etat\Bigl(\frac{6}{\mu}+8\etat\|Z\|_2^2\Bigr)E^2 .
\label{eq:Z_step_prestepsize}
\end{align}
The step-size condition \eqref{eq:stepsizes_revised} makes the coefficients of $W$ and of $\|M\|_F^2$ nonpositive, so these two terms may be dropped; it also implies $8\etat\|Z\|_2^2\le\mu/\beta_G^2\le1/\mu$, since $\mu\le\beta_G$. Hence the coefficient of $E^2$ is at most $7\etat/\mu$, and $E^2\le2\estatt^2+2\beta_L^2\delta_w^2$ completes the proof.
\end{proof}

\begin{remark}[A step size computable from $Z^0$]
\label{rem:appen:stepsize}
By Lemma~\ref{lemma:ZandZ*}(iii), $\|Z\|_2^2\le2\|Z^0\|_2^2$ along any trajectory that remains in the neighborhood \eqref{eq:local_neighborhood}. Consequently, the fixed and computable choice
\begin{equation}
\label{eq:appen:stepsize_practical}
\etat\le\frac{1}{32\,\|Z^0\|_2^2}\,\min\Bigl\{\frac{2\mu}{\beta_G^2},\,1\Bigr\}
\end{equation}
guarantees \eqref{eq:stepsizes_revised} at every iteration. Since $\|Z^0\|_2^2\le\tfrac{81}{32}\,\sigma_{\max}$ by Lemma~\ref{lemma:ZandZ*}(i), condition \eqref{eq:appen:stepsize_practical} is in turn implied by
\[
\etat\ \le\ \frac{c_2}{\beta\,\sigma_{\max}(\Theta^*)},
\qquad
c_2
:=
\min\left\{
\frac{\beta}{81}
\min\left(\frac{2\mu}{\beta_G^2},1\right),
\frac{10}{\mu}
\right\},
\]
and $c_2$ depends only on $(\mu,\beta)$. This is the form of the step-size condition quoted in Theorem~\ref{thm:PILLAR_convergence}.
\end{remark}

\subsection{One-step contraction for the mirror ascent update}
\label{appen:subsec:onestepw}

We now analyze the update \eqref{eq:mirror_ascent_general}. It is convenient to work with the (strongly convex) negative dual objective
\[
h_\Theta(w):=\lambda D(w)-w^\top\widehat L(\Theta),
\qquad
g(\Theta,w):=\nabla_wh_\Theta(w)=\lambda\nabla D(w)-\widehat L(\Theta),
\]
so that one mirror-ascent step on $\widehat f$ is one mirror-descent step on $h_\Theta$:
\begin{equation}
\label{eq:appen:mirror}
w^+=\argmin{u\in\Delta_m}\Bigl\{\langle g(\Theta,w),u-w\rangle+\frac{1}{\etaw}B_\phi(u,w)\Bigr\}.
\end{equation}
Since $h_\Theta$ differs from $\lambda D$ by a linear term, Assumption~\ref{ass:w_regularizer_mirror} implies that $h_\Theta$ is $\lambda$-convex and $\lambda\beta_D$-smooth relative to $\phi$:
\begin{equation}
\label{eq:appen:h_rel}
\lambda B_\phi(u,w)\ \le\ h_\Theta(u)-h_\Theta(w)-\langle g(\Theta,w),u-w\rangle\ \le\ \lambda\beta_D B_\phi(u,w),
\qquad u,w\in\Delta_m .
\end{equation}

\begin{lemma}[One-step contraction for $w$]
\label{lem:appen:onestep_w}
Suppose Assumptions~\ref{ass:rsc_rss}--\ref{ass:w_regularizer_mirror} hold, $\etaw\le1/(\lambda\beta_D)$, and the current iterate satisfies $\|\Theta-\Theta^*\|_F\le\varrho$ with $\Theta=UV^\top\in\Xi(r)$. Then
\begin{equation}
\label{eq:appen:wupdate}
B_\phi(w^*,w^+)\ \le\ (1-\lambda\etaw)\,B_\phi(w^*,w)
+\frac{\etaw}{\lambda}\Bigl(\beta_L^2\,\|\Theta-\Theta^*\|_F^2+\estatw^2\Bigr).
\end{equation}
\end{lemma}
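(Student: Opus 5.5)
The plan is to run the standard relatively-smooth/relatively-strongly-convex mirror descent analysis of \citet{lu2018relatively} on $h_\Theta$, with two changes. The comparison point is $w^*$, which is not the minimizer of $h_\Theta$ but of the population version $h^*(w):=\lambda D(w)-w^\top L(\Theta^*)$. The gradient mismatch $g(\Theta,w)-g^*(w)=L(\Theta^*)-\widehat L(\Theta)$ is then treated as a perturbation.

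\textbf{Step 1 (three-point inequality).} The first-order optimality condition of \eqref{eq:appen:mirror}, combined with the three-point identity for Bregman divergences, gives for $u=w^*$:
\[
\etaw\langle g(\Theta,w),w^+-w^*\rangle\le B_\phi(w^*,w)-B_\phi(w^*,w^+)-B_\phi(w^+,w).
\]
I then split $\langle g,w^+-w^*\rangle=\langle g,w^+-w\rangle+\langle g,w-w^*\rangle$ and bound each piece with \eqref{eq:appen:h_rel}.
\begin{itemize}
\item The upper (smoothness) half gives $\langle g,w^+-w\rangle\ge h_\Theta(w^+)-h_\Theta(w)-\lambda\beta_DB_\phi(w^+,w)$.
\item The lower (convexity) half at $u=w^*$ gives $\langle g,w-w^*\rangle\ge h_\Theta(w)-h_\Theta(w^*)+\lambda B_\phi(w^*,w)$.
\end{itemize}
Because $\etaw\lambda\beta_D\le1$, the term $-B_\phi(w^+,w)$ absorbs the smoothness term, with a slack of $(1-\etaw\lambda\beta_D)B_\phi(w^+,w)\ge0$ that I keep. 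The result is
\[
B_\phi(w^*,w^+)\le(1-\lambda\etaw)B_\phi(w^*,w)-\etaw\bigl[h_\Theta(w^+)-h_\Theta(w^*)\bigr]-(1-\etaw\lambda\beta_D)B_\phi(w^+,w).
\]

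\textbf{Step 2 (replace $h_\Theta$ by $h^*$).} The difference of the two objectives is linear:
\[
h_\Theta(w^+)-h_\Theta(w^*)=h^*(w^+)-h^*(w^*)-\langle\widehat L(\Theta)-L(\Theta^*),w^+-w^*\rangle.
\]
Lemma~\ref{lem:appen:dual_foc} together with the relative strong convexity of $h^*$ gives $h^*(w^+)-h^*(w^*)\ge\lambda B_\phi(w^+,w^*)\ge\frac\lambda2\|w^+-w^*\|_1^2$. Here the $1$-strong convexity of $\phi$ in $\|\cdot\|_1$ from Assumption~\ref{ass:w_regularizer_mirror} is used.

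\textbf{Step 3 (bound the perturbation).} Hölder's inequality gives
\[
|\langle\widehat L(\Theta)-L(\Theta^*),w^+-w^*\rangle|\le\|\widehat L(\Theta)-L(\Theta^*)\|_\infty\|w^+-w^*\|_1.
\]
By the triangle inequality, $\|\widehat L(\Theta)-L(\Theta^*)\|_\infty\le\|\widehat L(\Theta)-\widehat L(\Theta^*)\|_\infty+\estatw$. For each $k$, I bound $|\Lkh(\Theta)-\Lkh(\Theta^*)|\le\beta_L\|\Theta-\Theta^*\|_F$ by the mean value theorem, using the local gradient bound of Assumption~\ref{ass:rsc_rss}. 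This applies because the segment between $\Theta$ and $\Theta^*$ lies in the $\varrho$-ball, which is convex. Young's inequality $ab\le\frac{\lambda}{2}b^2+\frac{1}{2\lambda}a^2$ cancels the $\frac\lambda2\|w^+-w^*\|_1^2$ term. This leaves $\frac{\etaw}{2\lambda}(\beta_L\|\Theta-\Theta^*\|_F+\estatw)^2\le\frac{\etaw}{\lambda}(\beta_L^2\|\Theta-\Theta^*\|_F^2+\estatw^2)$, which is exactly \eqref{eq:appen:wupdate}.

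\textbf{Main obstacle.} The delicate point is Step 1. There the function-value term must be evaluated at $w^+$, not at $w$, so that the population strong convexity can be invoked at the new iterate. This in turn requires retaining $-B_\phi(w^+,w)$ and using $\etaw\le1/(\lambda\beta_D)$. If the bookkeeping fails at $w^+$, the fallback is to evaluate the perturbation at $w$ instead. In that case the leftover $\|w^+-w\|_1$ cross term would be controlled by the retained slack $(1-\etaw\lambda\beta_D)B_\phi(w^+,w)$, possibly at the cost of worse constants.
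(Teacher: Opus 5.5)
Your proposal is correct and follows the paper's proof essentially step for step. You use the same ingredients in the same order: the mirror-step optimality condition, the three-point identity and the relative smoothness/convexity bounds \eqref{eq:appen:h_rel}, which give $B_\phi(w^*,w^+)\le(1-\lambda\eta_w)B_\phi(w^*,w)-\eta_w\{h_\Theta(w^+)-h_\Theta(w^*)\}$; then relative strong convexity at $(w^+,w^*)$, the dual first-order condition of Lemma~\ref{lem:appen:dual_foc}, H\"older, and maximizing $at-\lambda t^2/2$. Your ``perturbation of $h^*$'' framing only regroups the paper's three-piece decomposition of $-g(\Theta,w^*)$, and the bookkeeping at $w^+$ does go through, so the fallback you describe is not needed.
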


\begin{proof}
By the relative smoothness upper bound in \eqref{eq:appen:h_rel} applied at the pair $(w^+,w)$,
\begin{equation}
\label{eq:appen:hsmooth}
h_\Theta(w^+)\le h_\Theta(w)+\bigl\langle g(\Theta,w),w^+-w\bigr\rangle+\lambda\beta_DB_\phi(w^+,w),
\end{equation}
and by the relative convexity lower bound applied at $(u,w)$, for every $u\in\Delta_m$,
\begin{equation}
\label{eq:appen:hconvex}
h_\Theta(w)-h_\Theta(u)\le\bigl\langle g(\Theta,w),w-u\bigr\rangle-\lambda B_\phi(u,w).
\end{equation}
Adding \eqref{eq:appen:hsmooth} and \eqref{eq:appen:hconvex} gives
\begin{equation}
\label{eq:appen:hdiff}
h_\Theta(w^+)-h_\Theta(u)\le\bigl\langle g(\Theta,w),w^+-u\bigr\rangle-\lambda B_\phi(u,w)+\lambda\beta_DB_\phi(w^+,w).
\end{equation}
The first-order optimality condition for \eqref{eq:appen:mirror} states that, for every $u\in\Delta_m$,
\[
\Bigl\langle g(\Theta,w)+\frac{1}{\etaw}\bigl(\nabla\phi(w^+)-\nabla\phi(w)\bigr),\,u-w^+\Bigr\rangle\ge0 .
\]
Combining this with the three-point identity for Bregman divergences \citep{chen1993convergence,beck2003mirror},
\[
\langle\nabla\phi(a)-\nabla\phi(b),c-a\rangle=B_\phi(c,b)-B_\phi(c,a)-B_\phi(a,b),
\]
applied with $(a,b,c)=(w^+,w,u)$, we obtain
\begin{equation}
\label{eq:3point}
\bigl\langle g(\Theta,w),w^+-u\bigr\rangle
\le\frac{1}{\etaw}\Bigl[B_\phi(u,w)-B_\phi(u,w^+)-B_\phi(w^+,w)\Bigr].
\end{equation}
Substituting \eqref{eq:3point} into \eqref{eq:appen:hdiff},
\[
h_\Theta(w^+)-h_\Theta(u)
\le-\frac{1}{\etaw}B_\phi(u,w^+)+\Bigl(\frac{1}{\etaw}-\lambda\Bigr)B_\phi(u,w)
+\Bigl(\lambda\beta_D-\frac{1}{\etaw}\Bigr)B_\phi(w^+,w).
\]
Since $\etaw\le1/(\lambda\beta_D)$, the last term is nonpositive and may be dropped. Taking $u=w^*$ and rearranging,
\begin{equation}
\label{eq:appen:hstep1}
B_\phi(w^*,w^+)\le(1-\lambda\etaw)\,B_\phi(w^*,w)-\etaw\bigl(h_\Theta(w^+)-h_\Theta(w^*)\bigr).
\end{equation}
It remains to bound $-(h_\Theta(w^+)-h_\Theta(w^*))$. By the relative convexity lower bound in \eqref{eq:appen:h_rel}, applied at the pair $(w^+,w^*)$,
\[
-\bigl(h_\Theta(w^+)-h_\Theta(w^*)\bigr)
\le-\bigl\langle g(\Theta,w^*),w^+-w^*\bigr\rangle-\lambda B_\phi(w^+,w^*).
\]
We decompose $-g(\Theta,w^*)=\widehat L(\Theta)-\lambda\nabla D(w^*)$ into three pieces:
\[
-g(\Theta,w^*)
=\bigl\{\widehat L(\Theta)-\widehat L(\Theta^*)\bigr\}
+\bigl\{\widehat L(\Theta^*)-L(\Theta^*)\bigr\}
+\bigl\{L(\Theta^*)-\lambda\nabla D(w^*)\bigr\}.
\]
For the first piece, the segment joining $\Theta^*$ and $\Theta$ lies in the ball $\{\|\Theta'-\Theta^*\|_F\le\varrho\}$, by the hypothesis $\|\Theta-\Theta^*\|_F\le\varrho$ and convexity of the ball; the mean value theorem and the gradient bound of Assumption~\ref{ass:rsc_rss} therefore give $\|\widehat L(\Theta)-\widehat L(\Theta^*)\|_\infty=\max_k|\Lkh(\Theta)-\Lkh(\Theta^*)|\le\beta_L\|\Theta-\Theta^*\|_F$. The second piece has $\ell_\infty$-norm $\estatw$ by Definition~\ref{def:stat_error}. For the third piece, Lemma~\ref{lem:appen:dual_foc} with $u=w^+\in\Delta_m$ gives $\langle L(\Theta^*)-\lambda\nabla D(w^*),w^+-w^*\rangle\le0$. Hence, by H\"older's inequality,
\begin{align}
-\bigl(h_\Theta(w^+)-h_\Theta(w^*)\bigr)
&\le\bigl(\beta_L\|\Theta-\Theta^*\|_F+\estatw\bigr)\|w^+-w^*\|_1-\lambda B_\phi(w^+,w^*) \notag\\
&\le\bigl(\beta_L\|\Theta-\Theta^*\|_F+\estatw\bigr)\|w^+-w^*\|_1-\frac{\lambda}{2}\|w^+-w^*\|_1^2
\label{eq:appen:hyoung}\\
&\le\frac{1}{2\lambda}\bigl(\beta_L\|\Theta-\Theta^*\|_F+\estatw\bigr)^2, \notag
\end{align}
where the second inequality uses the $1$-strong convexity of $\phi$ with respect to $\|\cdot\|_1$ from Assumption~\ref{ass:w_regularizer_mirror}, namely $B_\phi(w^+,w^*)\ge\tfrac12\|w^+-w^*\|_1^2$, and the third maximizes $at-\lambda t^2/2$ over $t\ge0$. Substituting into \eqref{eq:appen:hstep1} and using $(a+b)^2\le2a^2+2b^2$ yields \eqref{eq:appen:wupdate}.
\end{proof}

\subsection{Combining the two recursions: deterministic convergence}
\label{appen:subsec:combine}

We now combine Lemmas~\ref{lem:appen:onestep_Z} and~\ref{lem:appen:onestep_w}. Define the Lyapunov potential
\begin{equation}
\label{eq:appen:potential}
\Psi^t:=d^2(Z^t,Z^*)+K\,B_\phi(w^*,w^t),
\qquad
K:=\frac{56\,\etat\,\beta_L^2}{\mu\,\lambda\,\etaw},
\end{equation}
whose weight $K$ balances the speeds of the two recursions.

\begin{theorem}[Deterministic linear convergence of BGDMA]
\label{thm:appen:deterministic}
Suppose Assumptions~\ref{ass:rsc_rss}, \ref{ass:w_regularizer_mirror} and~\ref{ass:initialization} hold, with the constants $I_\Theta,I_w$ of \eqref{eq:appen:ITheta_Iw} in Assumption~\ref{ass:initialization}, and let $I_0,\tilde\mu,C_{\mathrm{loc}}$ be as in \eqref{eq:appen:mu_tilde}--\eqref{eq:appen:I0}. Assume:
\begin{enumerate}
\item[(C1)] \textup{(regularization)}\quad $\displaystyle\lambda\ \ge\ c_1\sqrt{\kappa^*},\qquad c_1:=\frac{82\,\beta_L^2}{\sqrt{\mu\tilde\mu}}\,;$
\item[(C2)] \textup{(step sizes)}\quad $\displaystyle\etat\le\frac{1}{32\|Z^0\|_2^2}\min\Bigl\{\frac{2\mu}{\beta_G^2},1\Bigr\}
\quad\text{and}\quad
\frac{\etat\,\tilde\mu\,\sigma_{\min}}{10\,\lambda}\ \le\ \etaw\ \le\ \frac{1}{\lambda\beta_D}$
\emph{(the window for $\etaw$ is nonempty whenever $\etat\le10/(\beta_D\tilde\mu\sigma_{\min})$, which can always be arranged by shrinking $\etat$)};
\item[(C3)] \textup{(statistical error)}\quad $\displaystyle
\estatt^2+\frac{4\beta_L^2}{\lambda^2}\,\estatw^2\ \le\ \frac{\mu\,\tilde\mu\,\sigma_{\min}}{280}\,I_0^2 .$
\end{enumerate}
Then, with $\gamma:=\etat\tilde\mu\sigma_{\min}/20\in(0,1)$, the iterates of Algorithm~\ref{alg:BGDMA} satisfy, for every $T\ge0$,
\begin{equation}
\label{eq:appen:psiupdate_T}
d^2(Z^T,Z^*)\ \le\ \Psi^T\ \le\ (1-\gamma)^T\,I_0^2
+\frac{280}{\mu\,\tilde\mu\,\sigma_{\min}}\Bigl(\estatt^2+\frac{4\beta_L^2}{\lambda^2}\estatw^2\Bigr),
\end{equation}
and consequently
\begin{equation}
\label{eq:appen:theta_T}
\bigl\|\widehat\Theta^T-\Theta^*\bigr\|_F^2
\ \le\ 6\,\sigma_{\max}\Bigl[(1-\gamma)^T\,I_0^2
+\frac{280}{\mu\tilde\mu\,\sigma_{\min}}\Bigl(\estatt^2+\frac{4\beta_L^2}{\lambda^2}\estatw^2\Bigr)\Bigr].
\end{equation}
\end{theorem}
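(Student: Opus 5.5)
We argue by induction on $t$: we propagate the neighborhood condition $d(Z^t,Z^*)\le I_0$ together with a one-step contraction of the potential $\Psi^t$ in \eqref{eq:appen:potential}. The base case follows from Lemma~\ref{lem:appen:Z0}, which gives $d^2(Z^0,Z^*)\le I_0^2/2$. It also uses the second half of \eqref{eq:init_assum}, which gives $KB_\phi(w^*,w^0)\le K I_w\sigma_{\min}^2/\lambda$. With $K=56\etat\beta_L^2/(\mu\lambda\etaw)$ and the lower window $\etaw\ge\etat\tilde\mu\sigma_{\min}/(10\lambda)$ from (C2), we have $K\le 560\beta_L^2/(\mu\tilde\mu\sigma_{\min})$. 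Hence $KB_\phi(w^*,w^0)\lesssim \beta_L^2 I_w\sigma_{\min}/(\mu\tilde\mu\lambda)$. By the choice $I_w=\sqrt{\mu\tilde\mu}c_{I_0}/14$ and (C1), $\lambda\ge c_1\sqrt{\kappa^*}\ge c_1$, this is at most $I_0^2/2$. Therefore $\Psi^0\le I_0^2$. I would check the constants loosely here; the structure is that $I_w$ and $c_1$ were designed exactly for this.

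For the inductive step, assume $\Psi^t\le I_0^2$. Then $d(Z^t,Z^*)\le I_0$, and Lemma~\ref{lemma:ZandZ*} gives $\|\Theta^t-\Theta^*\|_F^2\le 6\sigma_{\max}d^2(Z^t,Z^*)\le\varrho^2$. By Remark~\ref{rem:appen:stepsize}, (C2) implies \eqref{eq:stepsizes_revised}, so both one-step Lemmas~\ref{lem:appen:onestep_Z} and~\ref{lem:appen:onestep_w} apply. We first use $\|w-w^*\|_1^2\le 2B_\phi(w^*,w)$ from Assumption~\ref{ass:w_regularizer_mirror} inside Lemma~\ref{lem:appen:onestep_Z}. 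We then add $K$ times the bound of Lemma~\ref{lem:appen:onestep_w}, in which $\|\Theta^t-\Theta^*\|_F^2\le6\sigma_{\max}d^2$. This yields
\[
\Psi^{t+1}\le\Bigl(1-\tfrac{\etat\tilde\mu\sigma_{\min}}{10}+\tfrac{6K\etaw\beta_L^2\sigma_{\max}}{\lambda}\Bigr)d^2
+\Bigl(K(1-\lambda\etaw)+\tfrac{28\etat\beta_L^2}{\mu}\Bigr)B_\phi(w^*,w^t)
+\tfrac{14\etat}{\mu}\estatt^2+\tfrac{K\etaw}{\lambda}\estatw^2 .
\]
By the choice of $K$, the coefficient of $B_\phi$ equals $K(1-\lambda\etaw/2)$. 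The upper window $\etaw\ge\etat\tilde\mu\sigma_{\min}/(10\lambda)$ then gives $\lambda\etaw/2\ge\gamma$. The cross term $6K\etaw\beta_L^2\sigma_{\max}/\lambda=336\etat\beta_L^4\sigma_{\max}/(\mu\lambda^2)$ must be at most $\etat\tilde\mu\sigma_{\min}/20$. This is exactly where (C1) enters, since it requires $\lambda^2\gtrsim\beta_L^4\kappa^*/(\mu\tilde\mu)$, matching $c_1=82\beta_L^2/\sqrt{\mu\tilde\mu}$. Thus $\Psi^{t+1}\le(1-\gamma)\Psi^t+\etat S$, with $S:=\frac{14}{\mu}\bigl(\estatt^2+\frac{4\beta_L^2}{\lambda^2}\estatw^2\bigr)$ after simplifying $K\etaw/(\lambda\etat)=56\beta_L^2/(\mu\lambda^2)$.

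To close the induction, note that $\etat S/\gamma=280(\ldots)/(\mu\tilde\mu\sigma_{\min})$, and (C3) makes this at most $I_0^2$. Hence $(1-\gamma)I_0^2+\etat S\le I_0^2$, and the invariant $\Psi^{t+1}\le I_0^2$ persists. Unrolling the recursion and summing the geometric series gives \eqref{eq:appen:psiupdate_T}. Finally, \eqref{eq:appen:theta_T} follows from Lemma~\ref{lemma:ZandZ*}(ii) applied at step $T$. The main obstacle is the coupling term: the $Z$-recursion feeds on $\|w-w^*\|_1^2$ and the $w$-recursion feeds on $\|\Theta-\Theta^*\|_F^2\lesssim\sigma_{\max}d^2$. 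A single potential absorbs both only if the loop gain $\beta_L^4\sigma_{\max}/(\mu\lambda^2)$ is dominated by $\tilde\mu\sigma_{\min}$, which is why the condition on $\lambda$ scales with $\sqrt{\kappa^*}$. The remaining steps are bookkeeping of constants.
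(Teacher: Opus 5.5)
Your proposal is correct and follows the paper's proof essentially step for step. It uses the same potential $\Psi^t$ with $K=56\eta_\Theta\beta_L^2/(\mu\lambda\eta_w)$, absorbs the coupling terms via $\|w-w^*\|_1^2\le 2B_\phi(w^*,w)$ and $\|\Theta-\Theta^*\|_F^2\le 6\sigma_{\max}d^2$, controls the cross term $336\eta_\Theta\beta_L^4\sigma_{\max}/(\mu\lambda^2)$ with (C1), has the same base case $\Psi^0\le I_0^2$, and propagates $\Psi^t\le I_0^2$ with (C3). Two cosmetic points: $\eta_w\ge\eta_\Theta\tilde\mu\sigma_{\min}/(10\lambda)$ is the lower end of the window in (C2), although you call it the upper window once; and the claim $\gamma\in(0,1)$ should be recorded explicitly, e.g.\ via $\gamma\le\lambda\eta_w/2\le 1/(2\beta_D)\le 1/2$, as the paper does.
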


\begin{proof}
We first record the one-step inequality for the potential, valid whenever $d(Z,Z^*)\le I_0$.

\medskip\noindent\textbf{Step 1: one-step inequality for $\Psi$.}
Multiply \eqref{eq:appen:wupdate} by $K$ and add \eqref{eq:appen:Z_update_final}:
\begin{align}
\Psi^+
&\le\Bigl(1-\etat\frac{\tilde\mu\sigma_{\min}}{10}\Bigr)d^2
+K(1-\lambda\etaw)B_\phi(w^*,w) \notag\\
&\qquad+\frac{14\etat\beta_L^2}{\mu}\,\|w-w^*\|_1^2
+\frac{K\etaw\beta_L^2}{\lambda}\,\|\Theta-\Theta^*\|_F^2
+\frac{14\etat}{\mu}\estatt^2+\frac{K\etaw}{\lambda}\estatw^2 .
\label{eq:Psi_step1}
\end{align}
By the strong convexity of $\phi$ in Assumption~\ref{ass:w_regularizer_mirror}, $\|w-w^*\|_1^2\le2B_\phi(w^*,w)$, and by Lemma~\ref{lemma:ZandZ*}(ii), $\|\Theta-\Theta^*\|_F^2\le6\sigma_{\max}d^2$. With the choice of $K$ in \eqref{eq:appen:potential},
\[
\frac{14\etat\beta_L^2}{\mu}\cdot2\,B_\phi(w^*,w)=\frac{K\lambda\etaw}{2}\,B_\phi(w^*,w),
\qquad
\frac{K\etaw\beta_L^2}{\lambda}\cdot6\sigma_{\max}
=\frac{336\,\etat\,\beta_L^4\,\sigma_{\max}}{\mu\,\lambda^2}\,,
\]
so that
\begin{equation}
\label{eq:Psi_step2}
\Psi^+
\le\Bigl(1-\etat\frac{\tilde\mu\sigma_{\min}}{10}+\frac{336\etat\beta_L^4\sigma_{\max}}{\mu\lambda^2}\Bigr)d^2
+K\Bigl(1-\frac{\lambda\etaw}{2}\Bigr)B_\phi(w^*,w)
+\frac{14\etat}{\mu}\estatt^2+\frac{56\etat\beta_L^2}{\mu\lambda^2}\estatw^2 .
\end{equation}
Condition (C1) gives $\lambda^2\ge6724\,\beta_L^4\kappa^*/(\mu\tilde\mu)$, hence
\[
\frac{336\,\beta_L^4\,\sigma_{\max}}{\mu\,\lambda^2}
\le\frac{336\,\tilde\mu\,\sigma_{\min}}{6724}
\le\frac{\tilde\mu\,\sigma_{\min}}{20},
\]
so the coefficient of $d^2$ in \eqref{eq:Psi_step2} is at most $1-\etat\tilde\mu\sigma_{\min}/20=1-\gamma$. Moreover, the lower bound on $\etaw$ in (C2) gives $\lambda\etaw/2\ge\etat\tilde\mu\sigma_{\min}/20=\gamma$, so the coefficient of $KB_\phi(w^*,w)$ is also at most $1-\gamma$. Therefore, whenever $d(Z,Z^*)\le I_0$,
\begin{equation}
\label{eq:appen:psiupdate}
\Psi^+\ \le\ (1-\gamma)\,\Psi+\frac{14\etat}{\mu}\Bigl(\estatt^2+\frac{4\beta_L^2}{\lambda^2}\estatw^2\Bigr).
\end{equation}
Note also that $\gamma\le\lambda\etaw/2\le1/(2\beta_D)\le1/2$, since $\beta_D\ge1$ by Assumption~\ref{ass:w_regularizer_mirror}.

\medskip\noindent\textbf{Step 2: the initialization satisfies $\Psi^0\le I_0^2$.}
Lemma~\ref{lem:appen:Z0} gives $d^2(Z^0,Z^*)\le I_0^2/2$. For the dual part, the lower bound on $\etaw$ in (C2) yields
\[
K=\frac{56\,\etat\,\beta_L^2}{\mu\,\lambda\,\etaw}
\ \le\ \frac{56\,\etat\,\beta_L^2}{\mu\,\lambda}\cdot\frac{10\,\lambda}{\etat\,\tilde\mu\,\sigma_{\min}}
=\frac{560\,\beta_L^2}{\mu\,\tilde\mu\,\sigma_{\min}} .
\]
Hence, by the second condition in \eqref{eq:init_assum}, followed by (C1) with $\kappa^*\ge1$ (so that $1/\lambda\le\sqrt{\mu\tilde\mu}/(82\beta_L^2)$) and the definition of $I_w$ in \eqref{eq:appen:ITheta_Iw},
\[
K\,B_\phi(w^*,w^0)
\ \le\ \frac{560\,\beta_L^2}{\mu\tilde\mu\,\sigma_{\min}}\cdot\frac{I_w\,\sigma_{\min}^2}{\lambda}
\ \le\ \frac{560\,\beta_L^2\,I_w\,\sigma_{\min}}{\mu\tilde\mu}\cdot\frac{\sqrt{\mu\tilde\mu}}{82\,\beta_L^2}
\ \le\ \frac{7\,I_w\,\sigma_{\min}}{\sqrt{\mu\tilde\mu}}
\ =\ \frac{c_{I_0}\,\sigma_{\min}}{2}
\ =\ \frac{I_0^2}{2}.
\]
Adding the two parts gives $\Psi^0\le I_0^2$.

\medskip\noindent\textbf{Step 3: induction.}
Suppose $\Psi^t\le I_0^2$ for some $t\ge0$. Then $d(Z^t,Z^*)\le I_0$, so \eqref{eq:appen:psiupdate} applies at step $t$, and by (C3),
\[
\Psi^{t+1}\le(1-\gamma)I_0^2+\frac{14\etat}{\mu}\cdot\frac{\mu\tilde\mu\sigma_{\min}}{280}I_0^2
=(1-\gamma)I_0^2+\frac{\etat\tilde\mu\sigma_{\min}}{20}I_0^2=I_0^2 .
\]
Hence $\Psi^t\le I_0^2$ for all $t$, and \eqref{eq:appen:psiupdate} is valid at every step. Unrolling \eqref{eq:appen:psiupdate} and bounding the geometric series $\sum_{j<T}(1-\gamma)^j\le1/\gamma$,
\[
\Psi^T\le(1-\gamma)^T\Psi^0+\frac{1}{\gamma}\cdot\frac{14\etat}{\mu}\Bigl(\estatt^2+\frac{4\beta_L^2}{\lambda^2}\estatw^2\Bigr)
=(1-\gamma)^T\Psi^0+\frac{280}{\mu\tilde\mu\sigma_{\min}}\Bigl(\estatt^2+\frac{4\beta_L^2}{\lambda^2}\estatw^2\Bigr),
\]
which is \eqref{eq:appen:psiupdate_T}. Finally, $d(Z^T,Z^*)\le I_0$ and Lemma~\ref{lemma:ZandZ*}(ii) convert \eqref{eq:appen:psiupdate_T} into \eqref{eq:appen:theta_T}.
\end{proof}

\subsection{Bounds on the statistical errors}
\label{appen:subsec:estat}

We now bound $\estatt$ and $\estatw$ under Assumption~\ref{ass:loss_psi2}. The gradient error $\estatt$ concentrates at the rate dictated by the effective sample size of the adversarial mixture,
\[
\Delta_n:=\summ\frac{(w_k^*)^2}{n_k}=\frac{1}{n_{\mathrm{eff}}} .
\]
As is standard in $M$-estimation, the centering of the weighted empirical gradient at $\Theta^*$ is supplied by the population first-order condition, Assumption~\ref{ass:pop_foc}. We also use the following two standard facts about sub-Gaussian variables; see, for example, Sections~2.5--2.6 of \citet{vershynin2018high}: there is a universal constant $C_\psi\ge1$ such that
\begin{equation}
\label{eq:appen:psi2_facts}
\PP\bigl(|X|\ge t\bigr)\le2\exp\bigl(-t^2/\|X\|_{\psi_2}^2\bigr),
\qquad
\Bigl\|\sum_i X_i\Bigr\|_{\psi_2}^2\le C_\psi\sum_i\|X_i\|_{\psi_2}^2
\end{equation}
for any random variable $X$ and any independent, mean-zero $X_1,X_2,\dots$ The covering bound below for low-rank matrices is Lemma~3.1 of \citet{candes2011tight}.

\begin{lemma}[{Covering the low-rank sphere; \citealp[Lemma~3.1]{candes2011tight}}]
\label{lem:appen:lowrank_net}
Let $\mathcal S_s:=\{\Delta\in\RR^{d_1\times d_2}:\rank(\Delta)\le s,\ \|\Delta\|_F=1\}$. For any $\varepsilon\in(0,1)$ there exists an $\varepsilon$-net $\bar{\mathcal S}_s\subset\mathcal S_s$ of $\mathcal S_s$ in Frobenius norm with
\[
|\bar{\mathcal S}_s|\ \le\ \Bigl(\frac9\varepsilon\Bigr)^{(d_1+d_2+1)s}.
\]
\end{lemma}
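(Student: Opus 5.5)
The plan is to use the standard parametrization argument of \citet{candes2011tight}. Every $\Delta\in\mathcal S_s$ has a compact singular value decomposition $\Delta=U\Sigma V^\top$ with three pieces: $U\in\mathcal O_{d_1,s}:=\{U\in\RR^{d_1\times s}:U^\top U=I_s\}$, $V\in\mathcal O_{d_2,s}$, and $\Sigma$ diagonal with $\|\Sigma\|_F=1$. If $\rank(\Delta)<s$, we pad with zero singular values and arbitrary orthonormal columns. I will cover each factor set separately at scale $\varepsilon/3$ and take products. The resulting exponent is $d_1s+d_2s+s=(d_1+d_2+1)s$, which is exactly the one claimed.

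\emph{Nets for the factors.} For the diagonal factor, the set $\{\Sigma:\|\Sigma\|_F=1\}$ is the unit sphere of $\RR^s$. The usual volumetric/packing argument gives an $\varepsilon/3$-net $\bar D$ contained in the sphere with $|\bar D|\le(1+6/\varepsilon)^s\le(9/\varepsilon)^s$. For the orthonormal factors, I use the norm $\|X\|_{1,2}:=\max_i\|x_i\|_2$, the largest column $\ell_2$ norm. The set $\mathcal O_{d_1,s}$ lies in the unit ball of this norm, and that ball is a product of $s$ Euclidean unit balls in $\RR^{d_1}$. Take a maximal $\varepsilon/3$-separated subset of $\mathcal O_{d_1,s}$ itself. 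It is an $\varepsilon/3$-net lying inside $\mathcal O_{d_1,s}$. Comparing volumes of disjoint $\|\cdot\|_{1,2}$-balls of radius $\varepsilon/6$ with the enlarged ball of radius $1+\varepsilon/6$ bounds its size by $(1+6/\varepsilon)^{d_1s}\le(9/\varepsilon)^{d_1s}$. The same construction applied to $\mathcal O_{d_2,s}$ gives $\bar O_{d_2}$. The net is then $\bar{\mathcal S}_s:=\{\bar U\bar\Sigma\bar V^\top:\bar U\in\bar O_{d_1},\bar\Sigma\in\bar D,\bar V\in\bar O_{d_2}\}$. Each element has rank at most $s$ and unit Frobenius norm, because $\bar U,\bar V$ have orthonormal columns, so $\bar{\mathcal S}_s\subset\mathcal S_s$. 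Its cardinality is at most the product of the three bounds.

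\emph{Approximation.} Given $\Delta=U\Sigma V^\top$, choose the nearest net elements $\bar U,\bar\Sigma,\bar V$ and split
\[
\Delta-\bar U\bar\Sigma\bar V^\top=(U-\bar U)\Sigma V^\top+\bar U(\Sigma-\bar\Sigma)V^\top+\bar U\bar\Sigma(V-\bar V)^\top .
\]
The first term is the main point, because it explains why the column-max norm is the right one. Since $V$ has orthonormal columns, $\|(U-\bar U)\Sigma V^\top\|_F=\|(U-\bar U)\Sigma\|_F$. Since $\Sigma$ is diagonal with $\sum_i\sigma_i^2=1$, this equals $(\sum_i\sigma_i^2\|u_i-\bar u_i\|_2^2)^{1/2}\le\|U-\bar U\|_{1,2}\le\varepsilon/3$. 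In the second term both orthonormal factors preserve the Frobenius norm, so it equals $\|\Sigma-\bar\Sigma\|_F\le\varepsilon/3$. The third term is bounded by $\|\bar\Sigma(V-\bar V)^\top\|_F\le\|V-\bar V\|_{1,2}\le\varepsilon/3$, by the same argument as the first. The triangle inequality then gives total distance at most $\varepsilon$.

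The only step I expect to need care is ensuring the net points lie inside $\mathcal O_{d,s}$ rather than merely in the ball, so that the orthonormality identities apply to $\bar U$ and $\bar V$. Building the net as a maximal separated subset of $\mathcal O_{d,s}$ itself handles this, and it does not change the $(9/\varepsilon)$ base of the count.
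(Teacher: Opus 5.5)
Your proposal is correct and is essentially the argument of Cand\`es and Plan's Lemma~3.1, which the paper cites rather than reproves. Like theirs, it uses the SVD parametrization, $\varepsilon/3$-nets for the unit sphere in $\mathbb{R}^s$ and for the orthonormal factors in the max-column norm, and the same three-term triangle-inequality split. The one case you leave open, padding when $s>\min(d_1,d_2)$, is harmless: then $\mathcal S_s=\mathcal S_{\min(d_1,d_2)}$, and the bound for the smaller rank is stronger because $9/\varepsilon>1$.
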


\begin{proposition}[Statistical error rates]
\label{prop:appen:estat}
Suppose Assumptions~\ref{ass:loss_psi2} and~\ref{ass:pop_foc} hold. There exists a universal constant $C>0$ such that, for any $\delta_0\in(0,1)$, each of the following holds with probability at least $1-\delta_0$:
\begin{align}
\estatt&\ \le\ C\,K_G\,\sqrt{\Delta_n}\;\sqrt{r(d_1+d_2)+\log(1/\delta_0)},
\label{eq:appen:estat_theta}\\[2pt]
\estatw&\ \le\ C\,K_L\,\sqrt{\frac{\log(m/\delta_0)}{n_{\min}}} .
\label{eq:appen:estat_w}
\end{align}
\end{proposition}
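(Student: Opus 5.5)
We bound the two statistical errors separately. For $\estatt$ we use the population first-order condition to recenter, a fixed-direction sub-Gaussian bound, and a net argument over low-rank directions. For $\estatw$ we use a per-source sub-Gaussian bound and a union bound over the $m$ sources.

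\emph{Gradient error $\estatt$.} By Assumption~\ref{ass:pop_foc}, $\summ w_k^*\nabla_\Theta L_k(\Theta^*)=0$. Hence
\[
\summ w_k^*\nabla_\Theta\Lkh(\Theta^*)=\summ\sum_{i=1}^{n_k}\frac{w_k^*}{n_k}G_i\supk,
\]
which is a sum of independent, mean-zero random matrices. Fix a direction $\Delta\in\mathcal S_{2r}$. The scalar variables $\frac{w_k^*}{n_k}\langle G_i\supk,\Delta\rangle$ are independent and mean zero, each with $\psi_2$-norm at most $\frac{w_k^*}{n_k}K_G$. By the second fact in \eqref{eq:appen:psi2_facts}, the inner product $\langle\summ w_k^*\nabla\Lkh(\Theta^*),\Delta\rangle$ has squared $\psi_2$-norm at most
\[
C_\psi K_G^2\summ n_k\Bigl(\frac{w_k^*}{n_k}\Bigr)^2=C_\psi K_G^2\Delta_n .
\]
The tail bound then gives deviations of order $K_G\sqrt{\Delta_n}\sqrt{t}$ with probability at least $1-2e^{-t}$.

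Next take the $\tfrac12$-net $\bar{\mathcal S}_{2r}$ of Lemma~\ref{lem:appen:lowrank_net}, whose cardinality is at most $18^{(d_1+d_2+1)2r}$. Union-bounding over the net with $t\asymp r(d_1+d_2)+\log(1/\delta_0)$ controls the maximum over net points.

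The main obstacle is passing from the net to the supremum over all of $\mathcal S_{2r}$. For $\Delta\in\mathcal S_{2r}$ with nearest net point $\bar\Delta$, the difference $\Delta-\bar\Delta$ has rank up to $4r$, not $2r$, so the usual self-bounding step does not close directly. I would handle this by splitting $\Delta-\bar\Delta=\Delta_1+\Delta_2$ with each $\Delta_j$ of rank at most $2r$ and $\|\Delta_1\|_F^2+\|\Delta_2\|_F^2=\|\Delta-\bar\Delta\|_F^2$, for instance via its SVD. Then $\|\Delta_1\|_F+\|\Delta_2\|_F\le\sqrt2\cdot\tfrac12<1$. Writing $S:=\sup_{\mathcal S_{2r}}\langle\cdot,\Delta\rangle$, this gives $S\le\max_{\bar{\mathcal S}}+\tfrac{\sqrt2}{2}S$, and hence $S\le(2+\sqrt2)\max_{\bar{\mathcal S}}$. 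This yields \eqref{eq:appen:estat_theta}. Alternatively one can take a finer net ($\varepsilon=1/4$), which changes only constants.

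\emph{Loss error $\estatw$.} For each $k$, $\widehat L_k(\Theta^*)-L_k(\Theta^*)=n_k^{-1}\sum_i e_i\supk$. By \eqref{eq:appen:psi2_facts} its $\psi_2$-norm is at most $\sqrt{C_\psi}K_L/\sqrt{n_k}\le\sqrt{C_\psi}K_L/\sqrt{n_{\min}}$. The tail bound with a union bound over the $m$ sources at level $\delta_0/m$ gives \eqref{eq:appen:estat_w}.
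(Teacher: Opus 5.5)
Your proposal is correct and follows the paper's proof essentially step for step. You recenter via the population first-order condition, use the fixed-direction sub-Gaussian bound with proxy $C_\psi K_G^2\Delta_n$, close the net argument by splitting the rank-$4r$ difference into two orthogonal rank-$2r$ pieces, and finish with a per-source union bound for $\estatw$. The only deviation is your $\tfrac12$-net (self-bounding factor $2+\sqrt2$) in place of the paper's $\tfrac14$-net (factor $2$), which changes nothing but the universal constant.
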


\begin{proof}
\textbf{Part I: bound for $\estatt$.} By Assumption~\ref{ass:pop_foc},
\[
S:=\summ w_k^*\,\nabla_\Theta\Lkh(\Theta^*)
=\summ\frac{w_k^*}{n_k}\sum_{i=1}^{n_k}G_i\supk,
\]
with $G_i\supk$ the centered gradients of Assumption~\ref{ass:loss_psi2}, which are independent and mean zero. Fix $\Delta\in\mathcal S_{2r}$. By the definition of $\|\cdot\|_{\psi_2,\mathrm F}$, each $\langle G_i\supk,\Delta\rangle$ is mean-zero with $\|\langle G_i\supk,\Delta\rangle\|_{\psi_2}\le K_G$, so by \eqref{eq:appen:psi2_facts},
\[
\bigl\|\langle S,\Delta\rangle\bigr\|_{\psi_2}^2
\le C_\psi\summ\sum_{i=1}^{n_k}\Bigl(\frac{w_k^*}{n_k}\Bigr)^2K_G^2
=C_\psi K_G^2\,\Delta_n,
\]
and hence
\[
\PP\bigl(|\langle S,\Delta\rangle|\ge t\bigr)\le2\exp\Bigl(-\frac{t^2}{C_\psi K_G^2\Delta_n}\Bigr).
\]
Let $\bar{\mathcal S}_{2r}$ be a $\tfrac14$-net of $\mathcal S_{2r}$ from Lemma~\ref{lem:appen:lowrank_net}, so that $\log|\bar{\mathcal S}_{2r}|\le2r(d_1+d_2+1)\log36$. We claim that
\begin{equation}
\label{eq:appen:net_reduction}
\estatt=\sup_{\Delta\in\mathcal S_{2r}}\langle S,\Delta\rangle
\ \le\ 2\max_{\Delta\in\bar{\mathcal S}_{2r}}\langle S,\Delta\rangle .
\end{equation}
Indeed, given $\Delta\in\mathcal S_{2r}$ pick $\bar\Delta\in\bar{\mathcal S}_{2r}$ with $\|\Delta-\bar\Delta\|_F\le\tfrac14$. The difference $\Delta-\bar\Delta$ has rank at most $4r$; splitting its singular value decomposition into two orthogonal parts of rank at most $2r$, say $\Delta-\bar\Delta=\Delta_1+\Delta_2$ with $\|\Delta_1\|_F^2+\|\Delta_2\|_F^2=\|\Delta-\bar\Delta\|_F^2$, we get
\[
\langle S,\Delta-\bar\Delta\rangle\le\estatt\bigl(\|\Delta_1\|_F+\|\Delta_2\|_F\bigr)\le\sqrt2\,\estatt\,\|\Delta-\bar\Delta\|_F\le\frac{\sqrt2}{4}\,\estatt .
\]
Taking suprema, $\estatt\le\max_{\bar{\mathcal S}_{2r}}\langle S,\cdot\rangle+\tfrac{\sqrt2}{4}\estatt$, and $(1-\tfrac{\sqrt2}{4})^{-1}\le2$ proves \eqref{eq:appen:net_reduction}. (The supremum over the unit ball equals the supremum over the sphere because the objective is linear.) A union bound over $\bar{\mathcal S}_{2r}$ now gives, for every $t>0$,
\[
\PP\Bigl(\estatt\ge2t\Bigr)
\le2\exp\Bigl(2r(d_1+d_2+1)\log36-\frac{t^2}{C_\psi K_G^2\Delta_n}\Bigr),
\]
and choosing $t^2=C_\psi K_G^2\Delta_n\{2r(d_1+d_2+1)\log36+\log(2/\delta_0)\}$ yields \eqref{eq:appen:estat_theta}.

\medskip\noindent\textbf{Part II: bound for $\estatw$.} For each fixed $k$, $\widehat L_k(\Theta^*)-L_k(\Theta^*)=n_k^{-1}\sum_{i=1}^{n_k}e_i\supk$ with the $e_i\supk$ of Assumption~\ref{ass:loss_psi2} independent, mean zero, and $\|e_i\supk\|_{\psi_2}\le K_L$. By \eqref{eq:appen:psi2_facts}, $\|\widehat L_k(\Theta^*)-L_k(\Theta^*)\|_{\psi_2}^2\le C_\psi K_L^2/n_k$, whence
\[
\PP\biggl(\bigl|\widehat L_k(\Theta^*)-L_k(\Theta^*)\bigr|
\ge K_L\sqrt{\frac{C_\psi\log(2m/\delta_0)}{n_k}}\biggr)\le\frac{\delta_0}{m}.
\]
A union bound over $k\in[m]$ and $n_k\ge n_{\min}$ give \eqref{eq:appen:estat_w}.
\end{proof}

\subsection{Proof of Theorem~\ref{thm:PILLAR_convergence}}
\label{appen:subsec:thm1_final}

\begin{proof}[Proof of Theorem~\ref{thm:PILLAR_convergence}]
We prove the theorem with the explicit constants
\[
c_1=\frac{82\beta_L^2}{\sqrt{\mu\tilde\mu}},
\qquad
c_2\ \text{as in Remark~\ref{rem:appen:stepsize}},
\qquad
c_3=\frac{\tilde\mu}{10},
\qquad
\gamma=\frac{\etat\tilde\mu\sigma_{\min}}{40},
\]
so that the conditions on $\lambda$ and on the step sizes in Theorem~\ref{thm:PILLAR_convergence} coincide with (C1)--(C2) of Theorem~\ref{thm:appen:deterministic}. The phrase ``sufficiently large sample sizes'' in the statement refers to the condition
\begin{equation}
\label{eq:appen:samplesize}
n_{\mathrm{eff}}\ \ge\ \frac{C\,K_G^2\bigl\{r(d_1+d_2)+\log(1/\delta)\bigr\}}{\mu\tilde\mu\,\sigma_{\min}I_0^2},
\qquad
\lambda^2n_{\min}\ \ge\ \frac{C\,\beta_L^2K_L^2\,\log(m/\delta)}{\mu\tilde\mu\,\sigma_{\min}I_0^2},
\end{equation}
for a sufficiently large universal constant $C$. Recall from \eqref{eq:appen:I0} that
\[
\sigma_{\min}I_0^2\ \ge\ c(\mu,\beta)\,\sigma_{\min}^2\,\min\bigl\{1,\ \varrho^2/(\sigma_{\min}\kappa^*)\bigr\},
\]
so \eqref{eq:appen:samplesize} is a genuine sample-size requirement and involves no unknown quantities beyond those in the theorem.

Fix $\delta\in(0,1)$ and let $\mathcal E$ be the intersection of the two events of Proposition~\ref{prop:appen:estat}, each taken with $\delta_0=\delta/2$; then $\PP(\mathcal E)\ge1-\delta$. On $\mathcal E$, the bounds \eqref{eq:appen:estat_theta}--\eqref{eq:appen:estat_w} and \eqref{eq:appen:samplesize} imply the statistical-error condition (C3) of Theorem~\ref{thm:appen:deterministic}. Conditions (C1)--(C2) hold by hypothesis. Theorem~\ref{thm:appen:deterministic} therefore applies, and \eqref{eq:appen:theta_T} together with \eqref{eq:appen:estat_theta}--\eqref{eq:appen:estat_w} gives, on $\mathcal E$,
\begin{align*}
\bigl\|\widehat\Theta^T-\Theta^*\bigr\|_F
&\le\sqrt{6\sigma_{\max}}\;I_0\,(1-2\gamma)^{T/2}
+\sqrt{\frac{1680\,\sigma_{\max}}{\mu\tilde\mu\,\sigma_{\min}}}\,
\Bigl(\estatt+\frac{2\beta_L}{\lambda}\estatw\Bigr)\\
&\le C_0\,(1-\gamma)^{T}
+C_1\sqrt{\frac{r(d_1+d_2)+\log(1/\delta)}{n_{\mathrm{eff}}}
+\frac{\log(m/\delta)}{\lambda^2\,n_{\min}}},
\end{align*}
where $C_0:=\sqrt{6\sigma_{\max}}\,I_0$ and $C_1:=C\sqrt{\kappa^*/(\mu\tilde\mu)}\,\max\{K_G,\beta_LK_L\}$ for a universal constant $C$; here we used $(1-2\gamma)^{1/2}\le1-\gamma$ and $a+b\le\sqrt2\,\sqrt{a^2+b^2}$. This is \eqref{eq:theta_convergence}. The final claim follows by choosing $T\ge C\log(1/\varepsilon_n)$ with $C\ge\{\log(1/(1-\gamma))\}^{-1}\max\{1,\log C_0\}$, so that $C_0(1-\gamma)^T\le\varepsilon_n$, whence $\|\widehat\Theta^T-\Theta^*\|_F\le(1+C_1)\varepsilon_n=:C_2\varepsilon_n$.
\end{proof}

\begin{remark}[On the constants]
\label{rem:appen:constants}
The proof shows that $c_1$ and $c_2$ depend only on $(\mu,\beta)$, that $\gamma$ is proportional to $\etat\sigma_{\min}\min\{\mu,1\}$, and that $C_1$ depends only on $(\mu,\beta,K_G,K_L)$ and on $\sqrt{\kappa^*}$. The condition (C3), made explicit in \eqref{eq:appen:samplesize}, is the familiar requirement that the statistical error be smaller than the radius of the basin of attraction; an analogous condition appears in the pooled low-rank setting \citep[cf.][]{yu2020simultaneous}.
\end{remark}

\subsection{Proof of Corollary~\ref{cor:fdiv}}
\label{appen:subsec:cor_fdiv}

Throughout this subsection we work in the configuration of Corollary~\ref{cor:fdiv}: $m=n$, $n_k=1$, $X_1,\dots,X_n$ independent with common law $\PP$, common loss $\ell$, and $D(w)=D_f(w\|w^{\mathrm{ref}})$ with $w^{\mathrm{ref}}=n^{-1}\mathbf 1$. Since $f$ is convex with $f(1)=0$, Jensen's inequality gives $D\ge0=D(w^{\mathrm{ref}})$, and $w^{\mathrm{ref}}$ is the unique minimizer of $D$ because Assumption~\ref{ass:w_regularizer_mirror} makes $D$ strictly convex on $\Delta_m$. We use the maximizer map $M_\eta(\cdot)$ of Lemma~\ref{lem:appen:inner_stability} in Appendix~\ref{appen:thm3proof}; in particular, $M_\eta(c\mathbf 1)=w^{\mathrm{ref}}$ for every $c\in\RR$ and $\eta>0$.

\begin{proof}[Proof of Corollary~\ref{cor:fdiv}]
\textbf{Part (i).}
With $n_k=1$ we have $\Lkh(\Theta)=\ell(\Theta,X_k)$, and each $w\in\Delta_n$ corresponds to the distribution $\QQ_w:=\sum_{k=1}^n w_k\,\delta_{X_k}\ll\widehat\PP_n$, for which
\[
\sum_{k=1}^n w_k\Lkh(\Theta)=\EE_{\QQ_w}\,\ell(\Theta,X),
\qquad
D_f\bigl(\QQ_w\,\|\,\widehat\PP_n\bigr)=\sum_{k=1}^n\frac1n\,f(nw_k)=D_f\bigl(w\,\|\,n^{-1}\mathbf 1\bigr),
\]
the divergences being computed over the sample indices. Ranging over $w\in\Delta_n$ exhausts all $\QQ\ll\widehat\PP_n$, which proves \eqref{eq:kl_dro_lagrangian}; the balancing penalty in \eqref{eq:empirical_minmax} vanishes at any balanced factorization of a given $\Theta$ and therefore does not affect the minimal value. At the population level, $L_k(\Theta)=\EE_\PP\,\ell(\Theta,X)=:L(\Theta)$ for every $k$, so for each fixed $\Theta$,
\[
\max_{w\in\Delta_n}\Bigl\{\sum_{k=1}^n w_kL(\Theta)-\lambda D(w)\Bigr\}
=L(\Theta)-\lambda\min_{w\in\Delta_n}D(w)
=L(\Theta),
\]
with the maximum attained at $w=w^{\mathrm{ref}}$. Hence $\Phi_\lambda=L$ for every $\lambda>0$, so that $\Theta^*\in\argmin{\Theta\in\Xi(r)}L(\Theta)$ and, the population loss vector at $\Theta^*$ being constant across sources, $w^*=M_\lambda\bigl(L(\Theta^*)\mathbf 1\bigr)=w^{\mathrm{ref}}=n^{-1}\mathbf 1$; consequently $n_{\mathrm{eff}}=\{\sum_{k=1}^n(1/n)^2\}^{-1}=n$.

\textbf{Part (ii).}
By part (i), the saddle point, and hence $\kappa^*$, does not depend on $\lambda$, so the condition $\lambda_n\ge c_1\sqrt{\kappa^*}$ holds as soon as $n\ge(c_1\sqrt{\kappa^*}/\lambda_0)^2$. Fix $\delta\in(0,1)$. In the sample-size condition \eqref{eq:appen:samplesize}, the left-hand sides are $n_{\mathrm{eff}}=n$ and $\lambda_n^2n_{\min}=\lambda_0^2\,n$, while the right-hand sides grow at most logarithmically in $n$; hence \eqref{eq:appen:samplesize} holds for all $n\ge n_0(\delta)$. Theorem~\ref{thm:PILLAR_convergence} applied with $m=n$, $n_{\min}=1$ and $n_{\mathrm{eff}}=n$ then gives, with probability at least $1-\delta$, for all $T\ge C\log(1/\varepsilon_n)$,
\[
\bigl\|\widehat\Theta^T-\Theta^*\bigr\|_F\le C_2\,\varepsilon_n,
\qquad
\varepsilon_n^2
=\frac{r(d_1+d_2)+\log(1/\delta)}{n}+\frac{\log(n/\delta)}{\lambda_0^2\,n}
\;\le\;\bigl(1+\lambda_0^{-2}\bigr)\,\frac{r(d_1+d_2)+\log(n/\delta)}{n}.
\]
Since $\delta\in(0,1)$ was arbitrary and $\log(1/\varepsilon_n)\le C'\log n$, this is the claim of part (ii).

\textbf{Part (iii).}
Write $\widehat L:=\widehat L(\widehat\Theta^T)$, let $c$ be the mid-range of $(\Lkh(\widehat\Theta^T))_{k\le n}$ and $\widehat H$ its range, so that $\|\widehat L-c\mathbf 1\|_\infty=\widehat H/2$. In the notation of Lemma~\ref{lem:appen:inner_stability}, $\widehat w_n=M_{\lambda_n}(\widehat L)$. Comparing the value of the inner maximization at $\widehat w_n$ with its value at $w^{\mathrm{ref}}$, where $D$ vanishes, and using $\langle\mathbf 1,\widehat w_n-w^{\mathrm{ref}}\rangle=0$,
\[
\lambda_n\,\widehat\rho_n
=\lambda_n\,D(\widehat w_n)
\le\bigl\langle\widehat L,\,\widehat w_n-w^{\mathrm{ref}}\bigr\rangle
=\bigl\langle\widehat L-c\mathbf 1,\,\widehat w_n-w^{\mathrm{ref}}\bigr\rangle
\le\frac{\widehat H}{2}\,\bigl\|\widehat w_n-w^{\mathrm{ref}}\bigr\|_1 .
\]
Moreover, $\|\widehat w_n-w^{\mathrm{ref}}\|_1=\|M_{\lambda_n}(\widehat L)-M_{\lambda_n}(c\mathbf 1)\|_1\le\widehat H/(2\lambda_n)$ by \eqref{eq:appen:inner_stab_theta}, whence
\begin{equation}
\label{eq:appen:radius_quadratic}
\widehat\rho_n\;\le\;\frac{\widehat H^2}{4\lambda_n^2}.
\end{equation}
To bound $\widehat H$, write, for each $k$,
\[
\Lkh(\widehat\Theta^T)=L(\Theta^*)+e_k+\bigl\{\Lkh(\widehat\Theta^T)-\Lkh(\Theta^*)\bigr\},
\qquad e_k=\ell(\Theta^*,X_k)-L(\Theta^*).
\]
By Assumption~\ref{ass:loss_psi2}, \eqref{eq:appen:psi2_facts} and a union bound, $\max_{k\le n}|e_k|\le K_L\sqrt{\log(4n/\delta)}$ with probability at least $1-\delta/2$. Moreover, on the event of part (ii) the iterates remain in the neighborhood \eqref{eq:local_neighborhood} (Step~3 of the proof of Theorem~\ref{thm:appen:deterministic}), so Lemma~\ref{lemma:ZandZ*}(ii) places $\widehat\Theta^T$ in the ball $\{\|\Theta-\Theta^*\|_F\le\varrho\}$, and the mean value theorem with the gradient bound of Assumption~\ref{ass:rsc_rss} yields $|\Lkh(\widehat\Theta^T)-\Lkh(\Theta^*)|\le\beta_L\varrho$ for every $k$. Hence, with probability at least $1-2\delta$, $\widehat H\le2K_L\sqrt{\log(4n/\delta)}+2\beta_L\varrho$, and \eqref{eq:appen:radius_quadratic} gives
\[
\widehat\rho_n
\le\frac{\bigl\{2K_L\sqrt{\log(4n/\delta)}+2\beta_L\varrho\bigr\}^2}{4\lambda_0^2\,n}
\le\frac{2K_L^2\log(4n/\delta)+2\beta_L^2\varrho^2}{\lambda_0^2\,n},
\]
which is $O_{\PP}(\log n/n)$, as claimed.
\end{proof}

\section{Proof of Theorem~\ref{thm:softmax_init}}
\label{appen:proof_thm2}

This appendix proves Theorem~\ref{thm:softmax_init}, which shows that the convex soft-max program \eqref{eq:softmax_init}, followed by the soft-max weight assignment \eqref{eq:softmax_weight_init}, produces an initialization satisfying Assumption~\ref{ass:initialization} with the constants \eqref{eq:appen:ITheta_Iw}. Throughout this appendix the regularizer is the Kullback--Leibler divergence $D(w)=D_{\mathrm{KL}}(w\|w^{\mathrm{ref}})$ with $w^{\mathrm{ref}}\in\operatorname{int}(\Delta_m)$, and the mirror map is the negative entropy $\phi(w)=\sum_kw_k\log w_k$, so that $B_\phi(u,w)=D_{\mathrm{KL}}(u\|w)$ and Assumption~\ref{ass:w_regularizer_mirror} holds with $\beta_D=1$; the $1$-strong convexity of $\phi$ with respect to $\|\cdot\|_1$ is Pinsker's inequality.

Section~\ref{appen:subsec:softmax_prelim} develops the calculus of the soft-max profile. Section~\ref{appen:subsec:nuclear_geometry} collects the (standard) decomposability geometry of the nuclear norm. Section~\ref{appen:subsec:grad_bound} bounds the gradient of the profiled loss at $\Theta^*$, which dictates the choice of $\tau_n$. Section~\ref{appen:subsec:thm2_proof} contains the proof of the theorem.

\subsection{The soft-max profile}
\label{appen:subsec:softmax_prelim}

For $\ell\in\RR^m$ define the log-partition and soft-max maps
\begin{equation}
\label{eq:appen:softmax_def}
A_\lambda(\ell):=\lambda\log\Bigl(\summ w_k^{\mathrm{ref}}e^{\ell_k/\lambda}\Bigr),
\qquad
\mathcal S_k(\ell):=\frac{w_k^{\mathrm{ref}}e^{\ell_k/\lambda}}{\sum_{j=1}^mw_j^{\mathrm{ref}}e^{\ell_j/\lambda}},
\quad k\in[m],
\end{equation}
so that $\nabla A_\lambda(\ell)=\mathcal S(\ell)\in\operatorname{int}(\Delta_m)$. By the conjugate representation of the Kullback--Leibler divergence (the identity \eqref{eq:kl_profile_identity}),
\begin{equation}
\label{eq:appen:profile}
\widehat F(\Theta):=\max_{w\in\Delta_m}\Bigl\{\summ w_k\Lkh(\Theta)-\lambda D(w)\Bigr\}
=A_\lambda\bigl(\widehat L(\Theta)\bigr),
\end{equation}
with maximizer $\mathcal S(\widehat L(\Theta))$; in particular \eqref{eq:softmax_init} minimizes $\widehat F(\Theta)+\tau_n\|\Theta\|_*$, the weight initializer \eqref{eq:softmax_weight_init} is $w^0=\mathcal S(\widehat L(\Theta^0))$, and, at the population level, the inner maximizer at $\Theta^*$ is
\begin{equation}
\label{eq:appen:wstar_softmax}
w^*=\mathcal S\bigl(L(\Theta^*)\bigr).
\end{equation}

The analysis is carried out under the assumptions of Theorem~\ref{thm:softmax_init}: Assumptions~\ref{ass:rsc_rss}, \ref{ass:loss_psi2} and~\ref{ass:pop_foc}, the radius condition $\varrho\ge\sigma_{\min}/4$ in Assumption~\ref{ass:rsc_rss}, and the convexity of each $\Lkh$ on $\RR^{d_1\times d_2}$. Convexity is needed because the nuclear-norm solution and the associated error directions need not be of low rank; the restricted strong convexity of Assumption~\ref{ass:rsc_rss} enters along the nuclear-norm cone $\mathcal C$ defined in \eqref{eq:appen:cone} below, which is contained in the set $\mathcal A$ of \eqref{eq:ratio_cone} by Lemma~\ref{lem:tangent_rank}. Note also that, since $\|\nabla\Lkh(\Theta^*)\|_F\le\beta_L$ almost surely by Assumption~\ref{ass:rsc_rss}, Jensen's inequality yields the population bound $\|\nabla L_k(\Theta^*)\|_F\le\beta_L$ as well.

The next two lemmas develop the calculus of the profile \eqref{eq:appen:profile}.

\begin{lemma}[Properties of the soft-max profile]
\label{lem:appen:profile}
Suppose each $\Lkh$ is convex on $\RR^{d_1\times d_2}$. Then:
\begin{enumerate}
\item[(i)] $\widehat F$ is convex on $\RR^{d_1\times d_2}$ and differentiable, with
\[
\nabla\widehat F(\Theta)=\summ\mathcal S_k\bigl(\widehat L(\Theta)\bigr)\,\nabla\Lkh(\Theta);
\]
\item[(ii)] if, in addition, Assumption~\ref{ass:rsc_rss} holds, then for all $\Delta\in\mathcal A$,
\[
\widehat F(\Theta^*+\Delta)-\widehat F(\Theta^*)-\bigl\langle\nabla\widehat F(\Theta^*),\Delta\bigr\rangle\ \ge\ \frac{\mu}{2}\|\Delta\|_F^2 .
\]
\end{enumerate}
\end{lemma}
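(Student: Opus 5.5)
For part (i), I will write $\widehat F=A_\lambda\circ\widehat L$ and first establish that $A_\lambda$ is convex and coordinatewise nondecreasing on $\RR^m$. Convexity follows because $A_\lambda$ is a scaled log-sum-exp function, or equivalently because \eqref{eq:appen:profile} exhibits it as a supremum of affine functions $\ell\mapsto\langle w,\ell\rangle-\lambda D(w)$. Monotonicity holds because its gradient $\mathcal S(\ell)$ has nonnegative entries. The standard composition rule, a nondecreasing convex outer function applied to convex inner functions, then makes $\widehat F$ convex.

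Differentiability and the gradient formula come from the chain rule. $A_\lambda$ is $C^\infty$ with $\nabla A_\lambda=\mathcal S$, and each $\Lkh$ is differentiable, as Assumption~\ref{ass:rsc_rss} implicitly requires. Therefore $\nabla\widehat F(\Theta)=\summ \mathcal S_k(\widehat L(\Theta))\nabla\Lkh(\Theta)$. Alternatively, one can reach the same formula through Danskin's theorem, using the unique inner maximizer $\mathcal S(\widehat L(\Theta))$ from \eqref{eq:appen:profile}.

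For part (ii), fix $\Delta\in\mathcal A$ and set $\ell^*:=\widehat L(\Theta^*)$ and $\ell:=\widehat L(\Theta^*+\Delta)$. By convexity of $A_\lambda$ and $\nabla A_\lambda=\mathcal S$, we have
\[
A_\lambda(\ell)\ \ge\ A_\lambda(\ell^*)+\bigl\langle \mathcal S(\ell^*),\,\ell-\ell^*\bigr\rangle .
\]
Next, apply Assumption~\ref{ass:rsc_rss} with $\Theta_1=\Theta^*$ and $\Theta_2=\Theta^*+\Delta$; this is allowed because the difference $\Delta$ lies in $\mathcal A$. It gives, for each $k$,
\[
\ell_k-\ell^*_k\ \ge\ \langle\nabla\Lkh(\Theta^*),\Delta\rangle+\frac{\mu}{2}\|\Delta\|_F^2 .
\]
Since the weights $\mathcal S_k(\ell^*)$ are nonnegative and sum to one, averaging these inequalities against them yields
\[
\widehat F(\Theta^*+\Delta)-\widehat F(\Theta^*)\ \ge\ \Bigl\langle \summ\mathcal S_k(\ell^*)\nabla\Lkh(\Theta^*),\Delta\Bigr\rangle+\frac{\mu}{2}\|\Delta\|_F^2 .
\]
By part (i), the inner-product term is exactly $\langle\nabla\widehat F(\Theta^*),\Delta\rangle$, which gives the claim.

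The main obstacle is minor: part (ii) must rely on the lower RSC bound only at the base point $\Theta^*$, and the argument above does exactly that. The monotonicity of $A_\lambda$, through the nonnegativity of the soft-max weights, is what allows the per-source curvature inequalities to be aggregated in the correct direction.
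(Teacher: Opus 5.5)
Your proposal is correct and follows essentially the same argument as the paper. Part (i) uses convexity and coordinatewise monotonicity of $A_\lambda$ plus the chain rule, and part (ii) combines the per-source lower RSC bound at $(\Theta^*,\Theta^*+\Delta)$ with convexity of $A_\lambda$. The only difference is the order of steps in (ii). You linearize $A_\lambda$ at $\widehat L(\Theta^*)$ first and then aggregate the per-source bounds using nonnegativity and unit sum of the soft-max weights. The paper instead first applies monotonicity and the translation property $A_\lambda(\ell+c\mathbf 1)=A_\lambda(\ell)+c$, and linearizes afterward; these are the same two facts.
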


\begin{proof}
(i) $A_\lambda$ is convex and nondecreasing in each coordinate (its gradient $\mathcal S(\ell)$ lies in $\Delta_m$), and each $\Lkh$ is convex; hence the composition $\widehat F=A_\lambda\circ\widehat L$ is convex. Differentiability and the gradient formula follow from the chain rule.

(ii) Write $\ell:=\widehat L(\Theta^*)$, $\widehat w:=\mathcal S(\ell)$, and let $v\in\RR^m$ have coordinates $v_k:=\langle\nabla\Lkh(\Theta^*),\Delta\rangle$. By the lower bound in \eqref{eq:rsc_rss}, applied to the pair $(\Theta^*,\Theta^*+\Delta)$ whose difference $\Delta$ lies in $\mathcal A$,
\[
\widehat L_k(\Theta^*+\Delta)\ \ge\ \ell_k+v_k+\frac{\mu}{2}\|\Delta\|_F^2,
\qquad k\in[m].
\]
Since $A_\lambda$ is nondecreasing in each coordinate and satisfies the translation property $A_\lambda(\ell+c\mathbf 1)=A_\lambda(\ell)+c$ for $c\in\RR$,
\[
\widehat F(\Theta^*+\Delta)=A_\lambda\bigl(\widehat L(\Theta^*+\Delta)\bigr)
\ \ge\ A_\lambda\Bigl(\ell+v+\frac{\mu}{2}\|\Delta\|_F^2\,\mathbf 1\Bigr)
=A_\lambda(\ell+v)+\frac{\mu}{2}\|\Delta\|_F^2 .
\]
Finally, by convexity of $A_\lambda$, $A_\lambda(\ell+v)\ge A_\lambda(\ell)+\langle\mathcal S(\ell),v\rangle=\widehat F(\Theta^*)+\langle\nabla\widehat F(\Theta^*),\Delta\rangle$, where the last equality is the gradient formula from (i). Combining the two displays proves (ii).
\end{proof}

\begin{lemma}[Stability of the soft-max map]
\label{lem:appen:softmax_stability}
For all $a,b\in\RR^m$,
\begin{equation}
\label{eq:appen:softmax_KL}
D_{\mathrm{KL}}\bigl(\mathcal S(a)\,\big\|\,\mathcal S(b)\bigr)\ \le\ \frac{\|a-b\|_\infty^2}{2\lambda^2},
\qquad
\bigl\|\mathcal S(a)-\mathcal S(b)\bigr\|_1\ \le\ \frac{\|a-b\|_\infty}{\lambda}.
\end{equation}
\end{lemma}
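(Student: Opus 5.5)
The plan is to view the soft-max map as the gradient of the log-partition function $A_\lambda$ of \eqref{eq:appen:softmax_def}. This reduces both bounds to a second-order expansion of $A_\lambda$ along the segment from $b$ to $a$. The key identity is the exact Bregman representation
\[
D_{\mathrm{KL}}\bigl(\mathcal S(a)\,\big\|\,\mathcal S(b)\bigr)
=\frac1\lambda\Bigl[A_\lambda(b)-A_\lambda(a)-\bigl\langle\mathcal S(a),\,b-a\bigr\rangle\Bigr].
\]
It follows by direct substitution: $\log\bigl(\mathcal S_k(a)/\mathcal S_k(b)\bigr)=(a_k-b_k)/\lambda-\bigl(A_\lambda(a)-A_\lambda(b)\bigr)/\lambda$, and one then sums against $\mathcal S(a)$. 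So the KL divergence is $\lambda^{-1}$ times the Bregman divergence of the convex function $A_\lambda$ between $b$ and $a$.

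Next I would compute the Hessian of $A_\lambda$. At a point $\ell$ it is $\lambda^{-1}\bigl(\operatorname{diag}(s)-ss^\top\bigr)$ with $s=\mathcal S(\ell)$. Thus for a direction $v$, $\langle v,\nabla^2A_\lambda(\ell)v\rangle=\lambda^{-1}\operatorname{Var}_{s}(v)$, the variance of the coordinates of $v$ under the probability vector $s$. By Popoviciu's inequality (or simply $\operatorname{Var}\le\EE(v-c)^2$ with $c$ the mid-range), $\operatorname{Var}_s(v)\le\|v\|_\infty^2$, and even $\le\|v\|_\infty^2/4$ after centering. Applying Taylor's theorem with integral remainder to $A_\lambda$ along the segment from $a$ to $b$ bounds the Bregman term by $\tfrac12\sup_\ell\langle v,\nabla^2A_\lambda(\ell)v\rangle\le\|a-b\|_\infty^2/(2\lambda)$, where $v=b-a$. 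Dividing by $\lambda$ gives the first inequality in \eqref{eq:appen:softmax_KL}.

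For the $\ell_1$ bound, the first inequality combined with Pinsker would only give $\|\mathcal S(a)-\mathcal S(b)\|_1\le\|a-b\|_\infty/\lambda$. That is precisely the claim, so the second bound follows immediately: $\|\mathcal S(a)-\mathcal S(b)\|_1\le\sqrt{2D_{\mathrm{KL}}}\le\|a-b\|_\infty/\lambda$. As a sanity check, one could argue directly instead. Write $\mathcal S(a)-\mathcal S(b)=\int_0^1\nabla^2A_\lambda(b+t(a-b))(a-b)\,dt$ and note that $\|(\operatorname{diag}(s)-ss^\top)v\|_1=\sum_k s_k|v_k-\langle s,v\rangle|\le\|v\|_\infty$ after shifting $v$ by its mid-range, which is allowed because $\mathcal S$ is translation invariant. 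This direct route even gives a factor-two improvement.

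The only delicate point is getting the Bregman identity's orientation right (KL$(\mathcal S(a)\|\mathcal S(b))$ corresponds to the Bregman divergence of $A_\lambda$ at base point $a$ evaluated at $b$). One must also check that the variance bound for the Hessian holds uniformly along the segment. It does, because it depends only on $\|v\|_\infty$ and not on $s$. Everything else is routine.
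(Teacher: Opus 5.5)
Your proposal is correct and is essentially the paper's proof: the paper likewise writes $D_{\mathrm{KL}}(\mathcal S(a)\,\|\,\mathcal S(b))$ as the Bregman divergence of the log-partition function (base point $a/\lambda$, evaluated at $b/\lambda$). It bounds the Taylor remainder using the Hessian $\operatorname{diag}(p)-pp^\top$ and $\operatorname{Var}_p(v)\le\|v\|_\infty^2$ uniformly along the segment, and then obtains the $\ell_1$ bound from Pinsker. Your side remarks, which the proof does not use, are misstated: $\operatorname{Var}_s(v)\le\|v\|_\infty^2/4$ fails for $v=(1,-1)$, $s=(\tfrac12,\tfrac12)$ (Popoviciu gives $(\max_k v_k-\min_k v_k)^2/4$), and the direct $\ell_1$ route gives no factor-two gain over $\|a-b\|_\infty/\lambda$. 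Indeed that bound is sharp to first order for $m=2$, $w^{\mathrm{ref}}=(\tfrac12,\tfrac12)$, $a=(t,-t)$, $b=0$, where $\|\mathcal S(a)-\mathcal S(b)\|_1=\tanh(t/\lambda)$.
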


\begin{proof}
Let $A(x):=\log\bigl(\sum_kw_k^{\mathrm{ref}}e^{x_k}\bigr)$, so that $\mathcal S(a)=\nabla A(a/\lambda)$ and $A_\lambda(a)=\lambda A(a/\lambda)$. The family $\{p_x:=\nabla A(x)\}$ is an exponential family with log-partition $A$, for which the Kullback--Leibler divergence is the Bregman divergence of $A$:
\[
D_{\mathrm{KL}}(p_s\|p_t)=A(t)-A(s)-\langle\nabla A(s),t-s\rangle .
\]
By Taylor's theorem with integral remainder, the right-hand side equals $\tfrac12(t-s)^\top\nabla^2A(\xi)(t-s)$ for some $\xi$ on the segment $[s,t]$, and $\nabla^2A(\xi)=\operatorname{diag}(p_\xi)-p_\xi p_\xi^\top$, whence
\[
(t-s)^\top\nabla^2A(\xi)(t-s)=\operatorname{Var}_{p_\xi}\bigl(t-s\bigr)\le\max_k(t_k-s_k)^2=\|t-s\|_\infty^2 .
\]
Taking $s=a/\lambda$, $t=b/\lambda$ gives the first bound in \eqref{eq:appen:softmax_KL}. The second follows from the first by Pinsker's inequality, $\|p-q\|_1\le\sqrt{2D_{\mathrm{KL}}(p\|q)}$.
\end{proof}

\subsection{Nuclear-norm geometry}
\label{appen:subsec:nuclear_geometry}

Let $\Theta^*$ have compact singular value decomposition $\Theta^*=U_*\Sigma_*V_*^\top$ with $\rank(\Theta^*)=r$. Define the tangent space and cone
\begin{align}
T&:=\bigl\{U_*A^\top+BV_*^\top:\ A\in\RR^{d_2\times r},\ B\in\RR^{d_1\times r}\bigr\},
\label{eq:appen:tangent}\\
\mathcal C&:=\bigl\{\Delta\in\RR^{d_1\times d_2}:\ \|\mathcal P_{T^\perp}\Delta\|_*\le3\|\mathcal P_T\Delta\|_*\bigr\},
\label{eq:appen:cone}
\end{align}
where $\mathcal P_T$ and $\mathcal P_{T^\perp}$ are the orthogonal projections onto $T$ and $T^\perp$ with respect to the Frobenius inner product. The following three lemmas are standard \citep{negahban2012restricted}; we include their short proofs for completeness.

\begin{lemma}[Decomposability of the nuclear norm]
\label{lem:decomp}
For any $\Delta\in\RR^{d_1\times d_2}$,
\[
\|\Theta^*+\Delta\|_*-\|\Theta^*\|_*\ \ge\ \|\mathcal P_{T^\perp}\Delta\|_*-\|\mathcal P_T\Delta\|_* .
\]
\end{lemma}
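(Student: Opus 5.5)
The plan is to exhibit an explicit element of the subdifferential of the nuclear norm at $\Theta^*$ that is aligned with $\Delta$ on $T^\perp$, and then use the subgradient inequality. Recall that $\partial\|\Theta^*\|_*=\{U_*V_*^\top+W:\ W\in T^\perp,\ \|W\|_2\le1\}$. Equivalently, we can use the variational (dual-norm) characterization $\|X\|_*=\sup_{\|G\|_2\le1}\langle G,X\rangle$; this avoids invoking the full subdifferential formula and only needs a feasible $G$.

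Concretely, write $\mathcal P_{T^\perp}\Delta=(I-U_*U_*^\top)\Delta(I-V_*V_*^\top)$ and take its SVD $\mathcal P_{T^\perp}\Delta=\bar U\bar\Sigma\bar V^\top$. Set $W:=\bar U\bar V^\top$. The column space of $\bar U$ is orthogonal to that of $U_*$, and the row space of $\bar V$ is orthogonal to that of $V_*$. Define $G:=U_*V_*^\top+W$. The two summands act on orthogonal column and row spaces, so the singular values of $G$ are the union of those of each piece, all equal to $1$. Hence $\|G\|_2\le1$, and therefore $\|\Theta^*+\Delta\|_*\ge\langle G,\Theta^*+\Delta\rangle$.

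Next we evaluate the right-hand side term by term.
\begin{itemize}
\item[(a)] We have $\langle U_*V_*^\top,\Theta^*\rangle=\operatorname{tr}(\Sigma_*)=\|\Theta^*\|_*$, and $\langle W,\Theta^*\rangle=0$ because $W\in T^\perp$ while $\Theta^*\in T$.
\item[(b)] We have $\langle W,\Delta\rangle=\langle W,\mathcal P_{T^\perp}\Delta\rangle=\operatorname{tr}(\bar\Sigma)=\|\mathcal P_{T^\perp}\Delta\|_*$, since $W\in T^\perp$ and $\mathcal P_{T^\perp}$ is self-adjoint.
\item[(c)] Since $U_*V_*^\top\in T$, we have $\langle U_*V_*^\top,\Delta\rangle=\langle U_*V_*^\top,\mathcal P_T\Delta\rangle\ge-\|U_*V_*^\top\|_2\|\mathcal P_T\Delta\|_*=-\|\mathcal P_T\Delta\|_*$, by the trace duality $|\langle A,B\rangle|\le\|A\|_2\|B\|_*$.
\end{itemize}
Summing (a)–(c) gives $\|\Theta^*+\Delta\|_*\ge\|\Theta^*\|_*+\|\mathcal P_{T^\perp}\Delta\|_*-\|\mathcal P_T\Delta\|_*$, which is the claim.

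The only point needing care is verifying the membership facts: that $U_*V_*^\top\in T$, that $W\in T^\perp$ (it satisfies $U_*^\top W=0$ and $WV_*=0$, which characterize $T^\perp$), and the explicit form $\mathcal P_{T^\perp}X=(I-U_*U_*^\top)X(I-V_*V_*^\top)$. I expect this projection bookkeeping to be the main, though routine, step. The degenerate case $\mathcal P_{T^\perp}\Delta=0$ is handled by setting $W=0$.
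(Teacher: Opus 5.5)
Your proof is correct and is essentially the paper's argument: the same certificate $U_*V_*^\top+W$, with $W$ built from the singular vectors of $\mathcal P_{T^\perp}\Delta$, combined with the bound $\langle U_*V_*^\top,\mathcal P_T\Delta\rangle\ge-\|\mathcal P_T\Delta\|_*$. The only difference is that you check $\|G\|_2\le1$ and $\langle G,\Theta^*\rangle=\|\Theta^*\|_*$ directly through the dual-norm characterization, where the paper quotes the subdifferential formula for the nuclear norm.
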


\begin{proof}
The subdifferential of the nuclear norm at $\Theta^*$ is
$\partial\|\Theta^*\|_*=\{U_*V_*^\top+W:\ \mathcal P_T(W)=0,\ \|W\|_2\le1\}$.
Choose $W$ with $\mathcal P_T(W)=0$, $\|W\|_2\le1$ and $\langle W,\mathcal P_{T^\perp}\Delta\rangle=\|\mathcal P_{T^\perp}\Delta\|_*$ (take $W$ to be the product of the singular-vector factors of $\mathcal P_{T^\perp}\Delta$, which lies in $T^\perp$). By convexity,
\[
\|\Theta^*+\Delta\|_*\ \ge\ \|\Theta^*\|_*+\bigl\langle U_*V_*^\top+W,\Delta\bigr\rangle
=\|\Theta^*\|_*+\bigl\langle U_*V_*^\top,\mathcal P_T\Delta\bigr\rangle+\|\mathcal P_{T^\perp}\Delta\|_* ,
\]
and $\langle U_*V_*^\top,\mathcal P_T\Delta\rangle\ge-\|U_*V_*^\top\|_2\|\mathcal P_T\Delta\|_*\ge-\|\mathcal P_T\Delta\|_*$.
\end{proof}

\begin{lemma}[Cone condition]
\label{lem:cone_condition}
Suppose each $\Lkh$ is convex on $\RR^{d_1\times d_2}$ and $\tau_n\ge2\|\nabla\widehat F(\Theta^*)\|_2$. Then any solution $\Theta^0$ of \eqref{eq:softmax_init} satisfies $\widehat\Delta:=\Theta^0-\Theta^*\in\mathcal C$.
\end{lemma}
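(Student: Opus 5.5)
The plan is to use the standard Negahban--Wainwright basic-inequality argument, with $\widehat F$ the convex profiled loss of \eqref{eq:appen:profile}. Write $\widehat\Delta:=\Theta^0-\Theta^*$. Since $\Theta^0$ minimizes $\widehat F(\Theta)+\tau_n\|\Theta\|_*$, comparing its objective value with that of $\Theta^*$ gives
\[
\widehat F(\Theta^*+\widehat\Delta)-\widehat F(\Theta^*)\ \le\ \tau_n\bigl(\|\Theta^*\|_*-\|\Theta^*+\widehat\Delta\|_*\bigr).
\]

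\textbf{Bounding the left-hand side.} By convexity of $\widehat F$ (Lemma~\ref{lem:appen:profile}(i)), the left-hand side is at least $\langle\nabla\widehat F(\Theta^*),\widehat\Delta\rangle$. Duality of the spectral and nuclear norms then bounds this below by $-\|\nabla\widehat F(\Theta^*)\|_2\|\widehat\Delta\|_*$. The hypothesis $\tau_n\ge2\|\nabla\widehat F(\Theta^*)\|_2$ turns this into $-\tfrac{\tau_n}{2}\bigl(\|\mathcal P_T\widehat\Delta\|_*+\|\mathcal P_{T^\perp}\widehat\Delta\|_*\bigr)$, using the triangle inequality after splitting $\widehat\Delta=\mathcal P_T\widehat\Delta+\mathcal P_{T^\perp}\widehat\Delta$.

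\textbf{Bounding the right-hand side.} Lemma~\ref{lem:decomp} gives
\[
\|\Theta^*\|_*-\|\Theta^*+\widehat\Delta\|_*\ \le\ \|\mathcal P_T\widehat\Delta\|_*-\|\mathcal P_{T^\perp}\widehat\Delta\|_*.
\]

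\textbf{Combining.} Putting the two bounds together and dividing by $\tau_n>0$ yields
\[
-\tfrac12\bigl(\|\mathcal P_T\widehat\Delta\|_*+\|\mathcal P_{T^\perp}\widehat\Delta\|_*\bigr)\ \le\ \|\mathcal P_T\widehat\Delta\|_*-\|\mathcal P_{T^\perp}\widehat\Delta\|_*.
\]
Rearranging gives $\tfrac12\|\mathcal P_{T^\perp}\widehat\Delta\|_*\le\tfrac32\|\mathcal P_T\widehat\Delta\|_*$, which is exactly the cone condition $\widehat\Delta\in\mathcal C$.

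The only delicate points are bookkeeping. First, the minimizer exists and the first comparison is legitimate, since $\Theta^*$ is feasible for the unconstrained convex program. Second, the gradient formula from Lemma~\ref{lem:appen:profile}(i) is what makes $\nabla\widehat F(\Theta^*)$ the right object to control, though no bound on it is needed here because the lemma takes $\tau_n\ge2\|\nabla\widehat F(\Theta^*)\|_2$ as a hypothesis. Neither step is a real obstacle; this lemma is routine.
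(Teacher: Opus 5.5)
Your proof is correct and is essentially the paper's argument. Both compare the objective at $\Theta^0$ with that at $\Theta^*$, lower-bound $\widehat F(\Theta^*+\widehat\Delta)-\widehat F(\Theta^*)$ by $-\tfrac{\tau_n}{2}\|\widehat\Delta\|_*$ via convexity, spectral--nuclear duality and the hypothesis on $\tau_n$, then apply Lemma~\ref{lem:decomp} and rearrange to $\|\mathcal P_{T^\perp}\widehat\Delta\|_*\le 3\|\mathcal P_T\widehat\Delta\|_*$. The one implicit assumption, which the paper's proof shares, is $\tau_n>0$: if $\tau_n=0$ the combined inequality is vacuous and the cone condition does not follow.
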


\begin{proof}
Optimality of $\Theta^0$ and feasibility of $\Theta^*$ give
\[
\widehat F(\Theta^*+\widehat\Delta)+\tau_n\|\Theta^*+\widehat\Delta\|_*\ \le\ \widehat F(\Theta^*)+\tau_n\|\Theta^*\|_* .
\]
By convexity of $\widehat F$ (Lemma~\ref{lem:appen:profile}(i)), $\widehat F(\Theta^*+\widehat\Delta)-\widehat F(\Theta^*)\ge\langle\nabla\widehat F(\Theta^*),\widehat\Delta\rangle\ge-\|\nabla\widehat F(\Theta^*)\|_2\|\widehat\Delta\|_*\ge-\tfrac{\tau_n}{2}\|\widehat\Delta\|_*$, where we used the duality of the spectral and nuclear norms. Combining with Lemma~\ref{lem:decomp},
\[
0\ \ge\ -\frac{\tau_n}{2}\bigl(\|\mathcal P_T\widehat\Delta\|_*+\|\mathcal P_{T^\perp}\widehat\Delta\|_*\bigr)
+\tau_n\bigl(\|\mathcal P_{T^\perp}\widehat\Delta\|_*-\|\mathcal P_T\widehat\Delta\|_*\bigr),
\]
which rearranges to $\|\mathcal P_{T^\perp}\widehat\Delta\|_*\le3\|\mathcal P_T\widehat\Delta\|_*$.
\end{proof}

\begin{lemma}[Rank control on the tangent component]
\label{lem:tangent_rank}
Every $\Delta\in T$ satisfies $\rank(\Delta)\le2r$, and hence $\|\mathcal P_T\Delta\|_*\le\sqrt{2r}\,\|\Delta\|_F$ for every $\Delta$. If moreover $\Delta\in\mathcal C$, then $\|\Delta\|_*\le4\sqrt{2r}\,\|\Delta\|_F$.
\end{lemma}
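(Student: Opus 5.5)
The first claim is linear algebra. Every element of $T$ has the form $U_*A^\top+BV_*^\top$. The term $U_*A^\top$ has column space inside $\operatorname{col}(U_*)$, so its rank is at most $r$. The term $BV_*^\top$ has row space inside $\operatorname{col}(V_*)$, so its rank is also at most $r$. Subadditivity of rank then gives $\rank(\Delta)\le 2r$ for every $\Delta\in T$.

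For a general $\Delta$, the projection $\mathcal P_T\Delta$ lies in $T$, so it has rank at most $2r$. By Cauchy--Schwarz on its at most $2r$ nonzero singular values, $\|\mathcal P_T\Delta\|_*\le\sqrt{2r}\,\|\mathcal P_T\Delta\|_F$. Since $\mathcal P_T$ is an orthogonal projection in the Frobenius inner product, $\|\mathcal P_T\Delta\|_F\le\|\Delta\|_F$. This yields the second claim.

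For the cone statement, split $\Delta=\mathcal P_T\Delta+\mathcal P_{T^\perp}\Delta$ and apply the triangle inequality for the nuclear norm. The cone condition \eqref{eq:appen:cone} gives
\[
\|\Delta\|_*\le\|\mathcal P_T\Delta\|_*+\|\mathcal P_{T^\perp}\Delta\|_*\le 4\|\mathcal P_T\Delta\|_*\le 4\sqrt{2r}\,\|\Delta\|_F,
\]
where the last step uses the bound just proved. Hence $\mathcal C\subset\mathcal A$, which is the containment used in Lemma~\ref{lem:appen:profile}(ii).

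No step is a real obstacle. The only point to check is that $\mathcal P_T$ is Frobenius-orthogonal, so that the Frobenius norm does not increase under projection. An explicit form makes this transparent: with $P_U=U_*U_*^\top$ and $P_V=V_*V_*^\top$, one has $\mathcal P_{T^\perp}\Delta=(I-P_U)\Delta(I-P_V)$ and $\mathcal P_T=\mathrm{Id}-\mathcal P_{T^\perp}$. The same form also confirms directly that $\mathcal P_T\Delta=P_U\Delta+(I-P_U)\Delta P_V$ has rank at most $2r$.
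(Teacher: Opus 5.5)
Your proof is correct and matches the paper's argument step for step: rank subadditivity for elements of $T$, the bound $\|\cdot\|_*\le\sqrt{2r}\,\|\cdot\|_F$ on matrices of rank at most $2r$ combined with Frobenius non-expansiveness of the orthogonal projection $\mathcal P_T$, and the triangle inequality together with the cone condition for the final claim. Your explicit formula $\mathcal P_{T^\perp}\Delta=(I-P_U)\Delta(I-P_V)$ is also correct; the paper leaves it implicit.
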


\begin{proof}
Elements of $T$ are sums of two matrices of rank at most $r$. Therefore $\|\mathcal P_T\Delta\|_*\le\sqrt{2r}\,\|\mathcal P_T\Delta\|_F\le\sqrt{2r}\,\|\Delta\|_F$, since $\mathcal P_T$ is an orthogonal projection. For $\Delta\in\mathcal C$, $\|\Delta\|_*\le\|\mathcal P_T\Delta\|_*+\|\mathcal P_{T^\perp}\Delta\|_*\le4\|\mathcal P_T\Delta\|_*$.
\end{proof}

\subsection{A spectral-norm bound for the profiled gradient}
\label{appen:subsec:grad_bound}

The tuning parameter $\tau_n$ must dominate $\|\nabla\widehat F(\Theta^*)\|_2$. By Lemma~\ref{lem:appen:profile}(i) and Assumption~\ref{ass:pop_foc},
\begin{equation}
\label{eq:appen:gradF_decomp}
\nabla\widehat F(\Theta^*)
=\underbrace{\summ\widehat w_k\bigl\{\nabla\Lkh(\Theta^*)-\nabla L_k(\Theta^*)\bigr\}}_{(\mathrm I)}
+\underbrace{\summ(\widehat w_k-w^*_k)\,\nabla L_k(\Theta^*)}_{(\mathrm{II})},
\qquad
\widehat w:=\mathcal S\bigl(\widehat L(\Theta^*)\bigr),
\end{equation}
where we added and subtracted $\summ\widehat w_k\nabla L_k(\Theta^*)$ and used $\summ w^*_k\nabla L_k(\Theta^*)=0$. Term $(\mathrm I)$ is a $\widehat w$-weighted average of the centered gradient matrices; because $\widehat w$ is random, we bound it by the maximum over groups. Term $(\mathrm{II})$ reflects the sensitivity of the soft-max weights to the fluctuations of the group losses, and is controlled by Lemma~\ref{lem:appen:softmax_stability}. We use the following standard net bound; see \citet[Section~4.4]{vershynin2018high}.

\begin{lemma}[$\varepsilon$-net bound for the spectral norm]
\label{lem:enet_spectral}
Fix $\varepsilon\in(0,1/2)$ and let $\mathcal N_1\subset\mathbb S^{d_1-1}$, $\mathcal N_2\subset\mathbb S^{d_2-1}$ be $\varepsilon$-nets of the unit spheres, which may be chosen with $|\mathcal N_1|\le(1+2/\varepsilon)^{d_1}$ and $|\mathcal N_2|\le(1+2/\varepsilon)^{d_2}$. Then, for any $A\in\RR^{d_1\times d_2}$,
\[
\|A\|_2\ \le\ \frac{1}{1-2\varepsilon}\max_{u\in\mathcal N_1,\ v\in\mathcal N_2}\ \langle u,Av\rangle .
\]
\end{lemma}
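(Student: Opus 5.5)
This is the standard covering argument: reduce the supremum defining $\|A\|_2$ to a maximum over the nets by approximating an extremal pair of unit vectors and absorbing the approximation error. I work with the variational form
\[
\|A\|_2=\sup_{u\in\mathbb S^{d_1-1},\,v\in\mathbb S^{d_2-1}}\langle u,Av\rangle .
\]
The supremum is attained by compactness; I take a maximizing pair $(u^\star,v^\star)$, which can be chosen as the top left and right singular vectors of $A$.

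First, pick $\bar u\in\mathcal N_1$ and $\bar v\in\mathcal N_2$ with $\|u^\star-\bar u\|_2\le\varepsilon$ and $\|v^\star-\bar v\|_2\le\varepsilon$. Then split
\[
\langle u^\star,Av^\star\rangle=\langle\bar u,A\bar v\rangle+\langle u^\star-\bar u,Av^\star\rangle+\langle\bar u,A(v^\star-\bar v)\rangle .
\]
Each error term is a bilinear form evaluated at one vector of norm at most $\varepsilon$ and one of norm at most $1$. Bounding each by $\|A\|_2\cdot\varepsilon\cdot1$ gives
\[
\|A\|_2\le\max_{u\in\mathcal N_1,\,v\in\mathcal N_2}\langle u,Av\rangle+2\varepsilon\|A\|_2 .
\]
Since $\varepsilon<1/2$ and $\|A\|_2<\infty$, I can rearrange to obtain the claimed factor $(1-2\varepsilon)^{-1}$.

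The cardinality bounds $|\mathcal N_i|\le(1+2/\varepsilon)^{d_i}$ follow from the usual volumetric argument. Take a maximal $\varepsilon$-separated subset of the sphere; by maximality it is an $\varepsilon$-net. The balls of radius $\varepsilon/2$ around its points are disjoint and lie in the ball of radius $1+\varepsilon/2$. Comparing volumes gives $|\mathcal N|\le\bigl((1+\varepsilon/2)/(\varepsilon/2)\bigr)^{d}=(1+2/\varepsilon)^{d}$.

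I do not expect a real obstacle. The only point needing care is that the absorption step requires the bound on the two cross terms to be linear in $\|A\|_2$ with total coefficient $2\varepsilon<1$. This is exactly why the hypothesis is $\varepsilon\in(0,1/2)$ rather than $\varepsilon<1$.
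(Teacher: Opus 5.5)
Your proof is correct and is essentially the argument the paper relies on. The paper gives no proof of its own and cites \citet[Section~4.4]{vershynin2018high}, where the bound is obtained exactly as you do: approximate a maximizing pair of singular vectors by net points, bound the two cross terms by $\varepsilon\|A\|_2$ each, and absorb $2\varepsilon\|A\|_2$ into the left-hand side, with the cardinality bound coming from the same volumetric comparison.
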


\begin{lemma}[Gradient bound for the profiled loss]
\label{lem:appen:gradF_bound}
Suppose Assumptions~\ref{ass:rsc_rss}, \ref{ass:loss_psi2} and~\ref{ass:pop_foc} hold. There is a universal constant $C_A>0$ such that, for any $\delta\in(0,1)$, with probability at least $1-\delta/2$,
\begin{equation}
\label{eq:appen:gradF_bound}
\bigl\|\nabla\widehat F(\Theta^*)\bigr\|_2\ \le\ \frac{C_A\,A_\delta}{\sqrt{n_{\min}}},
\qquad\text{where}\quad
A_\delta=K_G\sqrt{d_1+d_2+\log(m/\delta)}+\frac{\beta K_L}{\lambda}\sqrt{\log(m/\delta)}
\end{equation}
as in Theorem~\ref{thm:softmax_init}, and moreover, on the same event,
\begin{equation}
\label{eq:appen:ew_event}
\bigl\|\widehat L(\Theta^*)-L(\Theta^*)\bigr\|_\infty\ \le\ C_A\,K_L\sqrt{\frac{\log(m/\delta)}{n_{\min}}} .
\end{equation}
\end{lemma}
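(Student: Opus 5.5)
The proof follows the decomposition \eqref{eq:appen:gradF_decomp}: bound the spectral norms of $(\mathrm I)$ and $(\mathrm{II})$ separately, each on an event of probability at least $1-\delta/4$, and intersect the two events. The event for $(\mathrm{II})$ is the one on which \eqref{eq:appen:ew_event} holds, so the second claim comes for free.

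\emph{Term $(\mathrm I)$.} Since $\widehat w\in\Delta_m$, the triangle inequality gives $\|(\mathrm I)\|_2\le\max_{k\in[m]}\|\nabla\Lkh(\Theta^*)-\nabla L_k(\Theta^*)\|_2$. This holds whatever the dependence between $\widehat w$ and the data, so no decoupling is needed. For a fixed $k$, write $\nabla\Lkh(\Theta^*)-\nabla L_k(\Theta^*)=n_k^{-1}\sum_i G_i\supk$. For unit vectors $u,v$ we have $\langle u,G_i\supk v\rangle=\langle G_i\supk,uv^\top\rangle$ with $\|uv^\top\|_F=1$, so by Assumption~\ref{ass:loss_psi2} this scalar has $\psi_2$-norm at most $K_G$. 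By \eqref{eq:appen:psi2_facts}, $\langle u,(n_k^{-1}\sum_iG_i\supk)v\rangle$ is then sub-Gaussian with squared $\psi_2$-norm at most $C_\psi K_G^2/n_k$. Apply Lemma~\ref{lem:enet_spectral} with $\varepsilon=1/4$, so the nets have size at most $9^{d_1+d_2}$, and take a union bound over the nets and over $k\in[m]$. This gives $\max_k\|\cdot\|_2\le CK_G\sqrt{(d_1+d_2+\log(m/\delta))/n_{\min}}$ with probability at least $1-\delta/4$, which is the first summand of $A_\delta$.

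\emph{Term $(\mathrm{II})$.} By Hölder's inequality and the bound $\|\nabla L_k(\Theta^*)\|_2\le\|\nabla L_k(\Theta^*)\|_F\le\beta_L\le\beta$ recorded after \eqref{eq:appen:wstar_softmax}, we get $\|(\mathrm{II})\|_2\le\beta\|\widehat w-w^*\|_1$. By \eqref{eq:appen:wstar_softmax}, $w^*=\mathcal S(L(\Theta^*))$ and $\widehat w=\mathcal S(\widehat L(\Theta^*))$. Lemma~\ref{lem:appen:softmax_stability} then gives $\|\widehat w-w^*\|_1\le\lambda^{-1}\|\widehat L(\Theta^*)-L(\Theta^*)\|_\infty$. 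The remaining $\ell_\infty$ fluctuation is handled exactly as in Part~II of Proposition~\ref{prop:appen:estat}, using $\delta_0=\delta/4$, and yields \eqref{eq:appen:ew_event}. Hence $\|(\mathrm{II})\|_2\le C\beta K_L\lambda^{-1}\sqrt{\log(m/\delta)/n_{\min}}$, the second summand of $A_\delta$.

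Adding the two bounds and adjusting constants proves \eqref{eq:appen:gradF_bound} with probability at least $1-\delta/2$. The only mildly delicate step is the first one: $\widehat w$ depends on the same samples as the gradients. Taking the crude maximum over groups before concentrating avoids any need to decouple them; it costs only the $\log m$ that already appears in $A_\delta$, and it is why the bound involves $n_{\min}$ rather than $n_{\mathrm{eff}}$.
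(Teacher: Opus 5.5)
Your proposal is correct and follows the paper's proof essentially step for step. It uses the same split into $(\mathrm I)$ and $(\mathrm{II})$ via \eqref{eq:appen:gradF_decomp}, the crude maximum over groups with a $1/4$-net and union bound for $(\mathrm I)$, and H\"older's inequality plus Lemma~\ref{lem:appen:softmax_stability} plus Part~II of Proposition~\ref{prop:appen:estat} for $(\mathrm{II})$, each on an event of probability at least $1-\delta/4$; the only cosmetic difference is that you bound $\|(\mathrm{II})\|_2$ through $\|\nabla L_k(\Theta^*)\|_2\le\beta_L$ directly, whereas the paper passes through $\|(\mathrm{II})\|_F$.
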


\begin{proof}
Write $\bar G_k:=\nabla\Lkh(\Theta^*)-\nabla L_k(\Theta^*)=n_k^{-1}\sum_{i=1}^{n_k}G_i\supk$ with the centered gradients $G_i\supk$ of Assumption~\ref{ass:loss_psi2}.

\emph{Term $(\mathrm I)$.} Since $\widehat w\in\Delta_m$, $\|(\mathrm I)\|_2\le\max_{k\in[m]}\|\bar G_k\|_2$. Fix $k$ and unit vectors $u,v$. Then $\langle u,\bar G_kv\rangle=n_k^{-1}\sum_i\langle G_i\supk,uv^\top\rangle$ with $\|uv^\top\|_F=1$, so by Assumption~\ref{ass:loss_psi2} and \eqref{eq:appen:psi2_facts}, $\|\langle u,\bar G_kv\rangle\|_{\psi_2}^2\le C_\psi K_G^2/n_k$ and
\[
\PP\bigl(|\langle u,\bar G_kv\rangle|\ge t\bigr)\le2\exp\Bigl(-\frac{n_kt^2}{C_\psi K_G^2}\Bigr).
\]
Take $\varepsilon=1/4$ in Lemma~\ref{lem:enet_spectral}, so $|\mathcal N_1||\mathcal N_2|\le9^{d_1+d_2}$, and apply a union bound over $\mathcal N_1\times\mathcal N_2$ and over $k\in[m]$: with probability at least $1-\delta/4$,
\[
\max_{k\in[m]}\|\bar G_k\|_2
\le2\max_{k}\max_{u\in\mathcal N_1,v\in\mathcal N_2}|\langle u,\bar G_kv\rangle|
\le C\,K_G\sqrt{\frac{d_1+d_2+\log(m/\delta)}{n_{\min}}} .
\]

\emph{Term $(\mathrm{II})$.} By $\|\nabla L_k(\Theta^*)\|_F\le\beta_L$ (Assumption~\ref{ass:rsc_rss} and Jensen's inequality), H\"older's inequality, \eqref{eq:appen:wstar_softmax} and Lemma~\ref{lem:appen:softmax_stability},
\[
\|(\mathrm{II})\|_2\le\|(\mathrm{II})\|_F
\le\beta_L\,\bigl\|\widehat w-w^*\bigr\|_1
=\beta_L\,\bigl\|\mathcal S(\widehat L(\Theta^*))-\mathcal S(L(\Theta^*))\bigr\|_1
\le\frac{\beta_L}{\lambda}\,\bigl\|\widehat L(\Theta^*)-L(\Theta^*)\bigr\|_\infty .
\]
Exactly as in Part~II of the proof of Proposition~\ref{prop:appen:estat}, with probability at least $1-\delta/4$ the bound \eqref{eq:appen:ew_event} holds. Intersecting the two events and using $\beta_L\le\beta$ completes the proof.
\end{proof}

\subsection{Proof of Theorem~\ref{thm:softmax_init}}
\label{appen:subsec:thm2_proof}

\begin{proof}[Proof of Theorem~\ref{thm:softmax_init}]
Fix $\delta\in(0,1)$ and work throughout on the event $\mathcal E_0$ of Lemma~\ref{lem:appen:gradF_bound}, which has probability at least $1-\delta/2$; all statements below hold on $\mathcal E_0$. Choose $C_\tau\ge2C_A$, so that
\[
\tau_n=\frac{C_\tau A_\delta}{\sqrt{n_{\min}}}\ \ge\ 2\,\bigl\|\nabla\widehat F(\Theta^*)\bigr\|_2 .
\]

\medskip\noindent\textbf{Step 1: Frobenius error of the convex solution.}
Let $\widehat\Delta:=\Theta^0-\Theta^*$. By Lemma~\ref{lem:cone_condition}, $\widehat\Delta\in\mathcal C$, and hence $\widehat\Delta\in\mathcal A$ by Lemma~\ref{lem:tangent_rank}. Optimality of $\Theta^0$ gives
\[
\widehat F(\Theta^*+\widehat\Delta)-\widehat F(\Theta^*)\ \le\ \tau_n\bigl(\|\Theta^*\|_*-\|\Theta^*+\widehat\Delta\|_*\bigr)
\ \le\ \tau_n\bigl(\|\mathcal P_T\widehat\Delta\|_*-\|\mathcal P_{T^\perp}\widehat\Delta\|_*\bigr),
\]
using Lemma~\ref{lem:decomp}. On the other hand, restricted strong convexity of the profile (Lemma~\ref{lem:appen:profile}(ii)) and the choice of $\tau_n$ give
\[
\widehat F(\Theta^*+\widehat\Delta)-\widehat F(\Theta^*)
\ \ge\ \bigl\langle\nabla\widehat F(\Theta^*),\widehat\Delta\bigr\rangle+\frac{\mu}{2}\|\widehat\Delta\|_F^2
\ \ge\ -\frac{\tau_n}{2}\bigl(\|\mathcal P_T\widehat\Delta\|_*+\|\mathcal P_{T^\perp}\widehat\Delta\|_*\bigr)+\frac{\mu}{2}\|\widehat\Delta\|_F^2 .
\]
Combining the two displays and dropping the resulting nonpositive multiple of $\|\mathcal P_{T^\perp}\widehat\Delta\|_*$,
\[
\frac{\mu}{2}\|\widehat\Delta\|_F^2\ \le\ \frac{3\tau_n}{2}\,\|\mathcal P_T\widehat\Delta\|_*
\ \le\ \frac{3\tau_n}{2}\,\sqrt{2r}\,\|\widehat\Delta\|_F ,
\]
by Lemma~\ref{lem:tangent_rank}. Hence
\begin{equation}
\label{eq:appen:theta0_bound}
\|\Theta^0-\Theta^*\|_F\ \le\ \frac{3\sqrt{2r}\,\tau_n}{\mu}
\ =\ \frac{3\sqrt2\,C_\tau}{\mu}\,\sqrt{\frac{r}{n_{\min}}}\,A_\delta .
\end{equation}
If $n_{\min}\ge C\,rA_\delta^2/\{\mu^2I_\Theta\sigma_{\min}^2\min(1,\sigma_{\min})\}$ with $C\ge18C_\tau^2$, then \eqref{eq:appen:theta0_bound} yields
\begin{equation}
\label{eq:appen:theta0_final}
\|\Theta^0-\Theta^*\|_F^2\ \le\ I_\Theta\,\sigma_{\min}^2\,\min\{1,\sigma_{\min}\},
\end{equation}
which is the first condition in \eqref{eq:init_assum}. Note for later use that, since $I_\Theta\le1/16$ by \eqref{eq:appen:ITheta_Iw}, \eqref{eq:appen:theta0_final} implies $\|\Theta^0-\Theta^*\|_F\le\sigma_{\min}/4$.

\medskip\noindent\textbf{Step 2: closeness of the initial weights.}
Recall $w^0=\mathcal S(\widehat L(\Theta^0))$ and $w^*=\mathcal S(L(\Theta^*))$, and that $B_\phi(w^*,w^0)=D_{\mathrm{KL}}(w^*\|w^0)$. By Lemma~\ref{lem:appen:softmax_stability},
\begin{equation}
\label{eq:appen:KL_quad}
\lambda\,B_\phi(w^*,w^0)\ \le\ \frac{1}{2\lambda}\,\bigl\|L(\Theta^*)-\widehat L(\Theta^0)\bigr\|_\infty^2 .
\end{equation}
Decompose, for each $k$,
\[
\bigl|L_k(\Theta^*)-\widehat L_k(\Theta^0)\bigr|
\ \le\ \bigl|L_k(\Theta^*)-\widehat L_k(\Theta^*)\bigr|
+\bigl|\widehat L_k(\Theta^*)-\widehat L_k(\Theta^0)\bigr| .
\]
The first term is at most $C_AK_L\sqrt{\log(m/\delta)/n_{\min}}$ on $\mathcal E_0$, by \eqref{eq:appen:ew_event}. For the second term, the segment joining $\Theta^*$ and $\Theta^0$ lies in the ball $\{\|\Theta-\Theta^*\|_F\le\sigma_{\min}/4\}$ by Step~1, and $\sigma_{\min}/4\le\varrho$ by hypothesis, so the mean value theorem and the gradient bound of Assumption~\ref{ass:rsc_rss} give
\[
\bigl|\widehat L_k(\Theta^*)-\widehat L_k(\Theta^0)\bigr|\ \le\ \beta_L\,\|\Theta^0-\Theta^*\|_F
\ \le\ \frac{3\sqrt2\,C_\tau\,\beta_L}{\mu}\,\sqrt{\frac{r}{n_{\min}}}\,A_\delta ,
\]
using \eqref{eq:appen:theta0_bound}. Substituting the two bounds into \eqref{eq:appen:KL_quad} and using $\beta_L\le\beta$,
\begin{align*}
\lambda\,B_\phi(w^*,w^0)
&\ \le\ \frac{1}{2\lambda\,n_{\min}}
\Bigl(3\sqrt2\,C_\tau\,\frac{\beta\sqrt r\,A_\delta}{\mu}+C_AK_L\sqrt{\log(m/\delta)}\Bigr)^{\!2}\\
&\ \le\ \frac{C'}{\lambda\,n_{\min}}
\Bigl(\frac{\beta\sqrt r\,A_\delta}{\mu}+K_L\sqrt{\log(m/\delta)}\Bigr)^{\!2}
\end{align*}
with $C':=\max\{9C_\tau^2,C_A^2\}$. Hence, if
$n_{\min}\ge C\bigl(\beta\sqrt rA_\delta/\mu+K_L\sqrt{\log(m/\delta)}\bigr)^2/\{\lambda I_w\sigma_{\min}^2\}$ with $C\ge C'$, then
\begin{equation}
\label{eq:appen:w0_final}
\lambda\,B_\phi(w^*,w^0)\ \le\ I_w\,\sigma_{\min}^2 ,
\end{equation}
which is the second condition in \eqref{eq:init_assum}.

\medskip
Taking $C\ge\max\{18C_\tau^2,C'\}$ in the sample-size condition of the theorem, both \eqref{eq:appen:theta0_final} and \eqref{eq:appen:w0_final} hold on $\mathcal E_0$, an event of probability at least $1-\delta/2\ge1-\delta$. This completes the proof.
\end{proof}

\begin{remark}
\label{rem:appen:thm2}
The proof clarifies the two roles of the sample-size condition in Theorem~\ref{thm:softmax_init}: the first branch makes the nuclear-norm error \eqref{eq:appen:theta0_bound} small relative to the basin radius of Theorem~\ref{thm:PILLAR_convergence}, and the second branch makes the soft-max weights, which respond to the group losses at scale $1/\lambda$, correspondingly accurate. The two terms of $A_\delta$ mirror the decomposition \eqref{eq:appen:gradF_decomp}: the first is the fluctuation of the gradient noise in spectral norm, the second the transmission of loss-level noise through the soft-max weighting.
\end{remark}

\section{Proof of Theorem \ref{thm:prior_benefit_gdro}}
\label{appen:thm3proof}

In this section we prove Theorem~\ref{thm:prior_benefit_gdro}, which quantifies the benefit of the PILLAR model over the vanilla group DRO model in terms of conservativeness. The argument compares different regularization levels, so, unlike the proofs of Theorems~\ref{thm:PILLAR_convergence} and~\ref{thm:softmax_init}, we retain the subscript throughout and write $(\Tsl,\wsl)$ for the PILLAR saddle point of Definition~\ref{def:pillar_target}, with $\Phi_\lambda$ the regularized worst-mixture risk \eqref{eq:Phi_lambda}. For a radius $\rho>0$, define the prior-centered worst-case risk and its low-rank optimum
\[
    \mathcal R_\rho(\Theta)
    :=
    \sup_{q\in\Delta_m:\,D(q)\le \rho}
    \sum_{k=1}^m q_k L_k(\Theta),
    \qquad
    V_\rho
    :=
    \inf_{\Theta\in\Xi(r)}
    \mathcal R_\rho(\Theta).
\]
Thus $V_\rho$ is the best achievable low-rank risk when the evaluation mixture is only required to lie in the $D$-ball of radius $\rho$ around the prior. Define the dual certificate and the certified gap
\[
    J_\rho(\lambda)
    :=
    \Phi_\lambda(\Tsl)+\lambda\rho,
    \qquad
    C_\rho(\lambda)
    :=
    J_\rho(\lambda)-\mathcal R_\rho(\Theta^*_{\mathrm{G}}),
\]
and recall from Theorem~\ref{thm:prior_benefit_gdro} the prior gap and the source-risk range function
\[
    \Delta_{\mathrm{ref}}
    :=
    \sum_{k=1}^m w_k^{\mathrm{ref}}
    \{L_k(\Theta^*_{\mathrm{G}})-L_k(\Theta^*_{\mathrm{ref}})\},
    \qquad
    H(\Theta)
    :=
    \max_{k\in[m]}L_k(\Theta)-\min_{k\in[m]}L_k(\Theta),
\]
so that the constant $H$ of \eqref{eq:source_hetero} equals $H(\Theta^*_{\mathrm{ref}})$. We also write, for each fixed $\Theta$ and $\eta>0$,
\begin{equation}
\label{eq:appen:inner_max}
    w_\eta(\Theta)
    :=
    \argmax{w\in\Delta_m}
    \Bigl\{\sum_{k=1}^m w_kL_k(\Theta)-\eta D(w)\Bigr\},
\end{equation}
which is well defined and unique because Assumption~\ref{ass:w_regularizer_mirror} makes $D$ strictly convex; in this notation $\wsl=w_\lambda(\Tsl)$. The following result is a complete version of Theorem~\ref{thm:prior_benefit_gdro}; its first display \eqref{eq:prior_gain_explicit} contains \eqref{eq:worst_prior_valid_gain} of the main text.

\begin{theorem}
\label{thm:prior_benefit_gdro_full}
Suppose that the $L_k$ are continuous, that $D$ is nonnegative, convex, differentiable and continuous on $\Delta_m$ with $D(w^{\mathrm{ref}})=0$ for some $w^{\mathrm{ref}}\in\operatorname{relint}(\Delta_m)$, and that Assumption~\ref{ass:w_regularizer_mirror} holds. Then, for any $0<\rho<\sup_{w\in\Delta_m}D(w)$ and any $\lambda>0$:
\begin{enumerate}
\item[(i)] \textup{(Certified gain.)}
\begin{equation}
    \mathcal R_\rho(\Tsl)-\mathcal R_\rho(\Theta^*_{\mathrm{G}})
    \le
    C_\rho(\lambda)
    \le
    -\Delta_{\mathrm{ref}}
    +
    \frac{H(\Theta^*_{\mathrm{ref}})^2}{8\lambda}
    +
    \lambda\rho .
    \label{eq:prior_gain_explicit}
\end{equation}
In particular, PILLAR has strictly smaller prior-centered worst-case risk than vanilla group DRO whenever
$\Delta_{\mathrm{ref}}>H(\Theta^*_{\mathrm{ref}})^2/(8\lambda)+\lambda\rho$.
\item[(ii)] \textup{(Radius-matched oracle.)}
\begin{equation}
\inf_{\eta\ge0}J_\rho(\eta)=V_\rho,
    \qquad
    \inf_{\eta\ge0}C_\rho(\eta)
    =
    V_\rho-\mathcal R_\rho(\Theta^*_{\mathrm{G}})
    \le 0 ;
    \label{eq:dual_radius_oracle}
\end{equation}
hence an optimally penalized PILLAR target is never worse than vanilla group DRO under the criterion $\mathcal R_\rho$.
\item[(iii)] \textup{(Radius matching.)} Suppose that, at a given $\lambda$, the minimizer $\Tsl$ and the inner maximizer $\wsl=w_\lambda(\Tsl)$ are unique, and that $\Theta^*_{\lambda'}\to\Tsl$ as $\lambda'\to\lambda$ (which holds, for instance, whenever the sublevel sets $\{\Theta\in\Xi(r):\Phi_\lambda(\Theta)\le c\}$ are compact). Then $J_\rho$ and $C_\rho$ are differentiable at $\lambda$, with
\begin{equation}
\label{eq:appen:radius_matching}
J_\rho'(\lambda)=C_\rho'(\lambda)=\rho-D(\wsl).
\end{equation}
Thus an interior minimizer of the certificate $C_\rho(\cdot)$ is radius-matched, in the sense that $D(\wsl)=\rho$. Moreover, for every $\lambda>0$,
\begin{equation}
\label{eq:appen:D_H_bound}
    D(\wsl)
    \le
    \frac{H(\Tsl)}{\lambda},
\end{equation}
so that whenever $\lambda\rho>H(\Tsl)$, the penalty is too large for the radius $\rho$: the adversary satisfies $D(\wsl)<\rho$, and increasing $\lambda$ locally worsens the robustness certificate.
\end{enumerate}
\end{theorem}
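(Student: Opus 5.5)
The plan is to prove the three parts in order. Part (i) rests on a weak-duality inequality and a quadratic lower bound on $D$ near $w^{\mathrm{ref}}$. Part (ii) is Lagrangian duality for the constrained inner problem. Part (iii) is an envelope (Danskin-type) argument for the min--max value function.

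\textbf{Part (i).} First, for any $\Theta$ and any $q\in\Delta_m$ with $D(q)\le\rho$,
\[
\textstyle\sum_k q_kL_k(\Theta)\le \sum_k q_kL_k(\Theta)-\lambda\{D(q)-\rho\}\le\Phi_\lambda(\Theta)+\lambda\rho .
\]
Taking the supremum over such $q$ at $\Theta=\Tsl$ gives $\mathcal R_\rho(\Tsl)\le J_\rho(\lambda)$, which is the first inequality of \eqref{eq:prior_gain_explicit}. Second, minimality of $\Tsl$ gives $\Phi_\lambda(\Tsl)\le\Phi_\lambda(\Theta^*_{\mathrm{ref}})$, so I need a soft-max bound.

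Write $\ell=L(\Theta^*_{\mathrm{ref}})$. Since $w^{\mathrm{ref}}\in\operatorname{relint}(\Delta_m)$ minimizes $D$ with $D(w^{\mathrm{ref}})=0$, the term $\langle\nabla D(w^{\mathrm{ref}}),w-w^{\mathrm{ref}}\rangle$ vanishes for $w\in\Delta_m$. Assumption~\ref{ass:w_regularizer_mirror} then gives
\[
D(w)\ge B_\phi(w,w^{\mathrm{ref}})\ge\tfrac12\|w-w^{\mathrm{ref}}\|_1^2 .
\]
Let $c$ be the mid-range of $\ell$. Because $\langle\mathbf 1,w-w^{\mathrm{ref}}\rangle=0$,
\[
\langle w-w^{\mathrm{ref}},\ell\rangle=\langle w-w^{\mathrm{ref}},\ell-c\mathbf 1\rangle\le\tfrac H2\|w-w^{\mathrm{ref}}\|_1 .
\]
Maximizing $\tfrac H2t-\tfrac\lambda2t^2$ over $t\ge0$ yields $\Phi_\lambda(\Theta^*_{\mathrm{ref}})\le\sum_kw^{\mathrm{ref}}_k\ell_k+H^2/(8\lambda)$. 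Finally, $w^{\mathrm{ref}}$ is feasible for the $\rho$-ball because $D(w^{\mathrm{ref}})=0$. Hence $\mathcal R_\rho(\Theta^*_{\mathrm G})\ge\sum_kw^{\mathrm{ref}}_kL_k(\Theta^*_{\mathrm G})$. Subtracting gives the second inequality with $-\Delta_{\mathrm{ref}}$.

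\textbf{Part (ii).} Note that $J_\rho(\eta)=\inf_{\Theta\in\Xi(r)}\{\Phi_\eta(\Theta)+\eta\rho\}$. I will swap the two infima over $\Theta$ and $\eta\ge0$. For fixed $\Theta$, the problem $\sup\{\langle q,L(\Theta)\rangle: q\in\Delta_m,\ D(q)\le\rho\}$ is a concave maximization over a compact convex set. Slater's condition holds, since $D(w^{\mathrm{ref}})=0<\rho$. Strong Lagrangian duality therefore gives $\inf_{\eta\ge0}\{\Phi_\eta(\Theta)+\eta\rho\}=\mathcal R_\rho(\Theta)$. Taking the infimum over $\Theta$ yields $\inf_\eta J_\rho(\eta)=V_\rho$. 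The inequality $V_\rho\le\mathcal R_\rho(\Theta^*_{\mathrm G})$ is immediate because $\Theta^*_{\mathrm G}\in\Xi(r)$.

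\textbf{Part (iii).} The bound \eqref{eq:appen:D_H_bound} comes from comparing the inner objective at $\wsl$ with its value at $w^{\mathrm{ref}}$:
\[
\lambda D(\wsl)\le\langle \wsl-w^{\mathrm{ref}},L(\Tsl)\rangle\le \tfrac{H(\Tsl)}2\|\wsl-w^{\mathrm{ref}}\|_1\le H(\Tsl).
\]
For differentiability, I will sandwich the difference quotient of $v(\eta):=\Phi_\eta(\Theta^*_\eta)$, writing $g(\Theta,w,\eta)=\langle w,L(\Theta)\rangle-\eta D(w)$.

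For the upper bound, fix $\Theta=\Tsl$, so that $v(\lambda')\le\Phi_{\lambda'}(\Tsl)$. Then evaluate $\Phi_{\lambda'}(\Tsl)=g(\Tsl,w_{\lambda'}(\Tsl),\lambda')$. Comparing with $\lambda$ shows $\Phi_{\lambda'}(\Tsl)-\Phi_\lambda(\Tsl)\le-(\lambda'-\lambda)D(w_{\lambda'}(\Tsl))$.

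For the lower bound, use $\Theta^*_{\lambda'}$ together with $w=w_\lambda(\Theta^*_{\lambda'})$. This gives
\[
v(\lambda')-v(\lambda)\ge g(\Theta^*_{\lambda'},w,\lambda')-g(\Theta^*_{\lambda'},w,\lambda)=-(\lambda'-\lambda)D(w_\lambda(\Theta^*_{\lambda'})).
\]
Dividing by $\lambda'-\lambda$ with either sign, I then need both $w_{\lambda'}(\Tsl)\to\wsl$ and $w_\lambda(\Theta^*_{\lambda'})\to\wsl$. These follow from continuity of the unique inner maximizer in $(\Theta,\eta)$, via Berge's maximum theorem and strict convexity of $D$, combined with the hypothesis $\Theta^*_{\lambda'}\to\Tsl$. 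This gives $v'(\lambda)=-D(\wsl)$, hence \eqref{eq:appen:radius_matching}. The case $\lambda\rho>H(\Tsl)$ then gives $J_\rho'(\lambda)>0$.

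I expect this two-sided envelope step to be the main obstacle, since it is the only place where continuity of the argmin path is genuinely used. The sign bookkeeping for $\lambda'<\lambda$ versus $\lambda'>\lambda$ also needs care there.
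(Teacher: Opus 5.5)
Your proposal is correct and follows the paper's proof essentially step for step. Part (i) uses the Lagrangian bound together with $D(w)\ge\frac12\|w-w^{\mathrm{ref}}\|_1^2$ and mid-range centering; part (ii) uses Slater-based strong duality and an exchange of infima; part (iii) uses the same two-sided envelope sandwich, with the upper bound taken at $\Tsl$ with $w_{\lambda'}(\Tsl)$ and the lower bound at $\Theta^*_{\lambda'}$ with $w_\lambda(\Theta^*_{\lambda'})$, which is equivalent to the paper's construction. The only difference is that you get continuity of the inner maximizer from Berge's maximum theorem, whereas the paper uses its quantitative stability bounds in Lemma~\ref{lem:appen:inner_stability}; either suffices here.
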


Theorem~\ref{thm:prior_benefit_gdro} is the special case of part (i): under $D(w^{\mathrm{ref}})=0$, the display \eqref{eq:prior_gain_explicit} is exactly \eqref{eq:worst_prior_valid_gain}. Parts (ii) and (iii) refine the picture by characterizing the $\lambda$ that optimizes the certificate.

Before the proof we isolate a stability property of the inner maximizer \eqref{eq:appen:inner_max}, stated for a general linear term; it is used to justify the envelope formula \eqref{eq:appen:radius_matching}, in Remark~\ref{rem:appen:lambda_order} below, and in the proof of Corollary~\ref{cor:fdiv}.

\begin{lemma}[Stability of the inner maximizer]
\label{lem:appen:inner_stability}
For $a\in\RR^m$ and $\eta>0$, let $M_\eta(a):=\argmax{w\in\Delta_m}\{\langle a,w\rangle-\eta D(w)\}$, so that $w_\eta(\Theta)=M_\eta(L(\Theta))$ in \eqref{eq:appen:inner_max}. Suppose Assumption~\ref{ass:w_regularizer_mirror} holds and $D$ is bounded on $\Delta_m$. Then, for all $a,b\in\RR^m$ and all $0<\eta\le\eta'$,
\begin{align}
\bigl\|M_\eta(a)-M_\eta(b)\bigr\|_1
&\ \le\ \frac{1}{\eta}\,\|a-b\|_\infty,
\label{eq:appen:inner_stab_theta}\\
\frac{\eta}{2}\,\bigl\|M_{\eta'}(a)-M_\eta(a)\bigr\|_1^2
&\ \le\ (\eta'-\eta)\,\bigl\{D\bigl(M_\eta(a)\bigr)-D\bigl(M_{\eta'}(a)\bigr)\bigr\}
\ \le\ (\eta'-\eta)\,\sup_{\Delta_m}D .
\label{eq:appen:inner_stab_eta}
\end{align}
In particular, for fixed $\Theta$ the map $\eta\mapsto w_\eta(\Theta)$ is continuous on $(0,\infty)$, and for fixed $\eta$ the map $\Theta\mapsto w_\eta(\Theta)$ is continuous wherever $L(\cdot)$ is. If, moreover, $D\ge0$ and $D(w^{\mathrm{ref}})=0$, then $M_\eta(c\mathbf 1)=w^{\mathrm{ref}}$ for every $c\in\RR$ and $\eta>0$.
\end{lemma}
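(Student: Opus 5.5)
The plan is to work entirely with the first-order optimality conditions of the concave maximization defining $M_\eta$. Write $w_a:=M_\eta(a)$. Since $w\mapsto\langle a,w\rangle-\eta D(w)$ is concave and differentiable on $\Delta_m$, exactly as in Lemma~\ref{lem:appen:dual_foc} the maximizer satisfies the variational inequality $\langle a-\eta\nabla D(w_a),u-w_a\rangle\le0$ for all $u\in\Delta_m$. Uniqueness of $w_a$ follows from the strict convexity of $D$, which in turn follows from relative $1$-convexity with respect to $\phi$ and the $1$-strong convexity of $\phi$ in $\|\cdot\|_1$ (Assumption~\ref{ass:w_regularizer_mirror}). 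Those two assumptions also give the key monotonicity inequality
\[
\langle\nabla D(u)-\nabla D(w),u-w\rangle\ \ge\ B_\phi(u,w)+B_\phi(w,u)\ \ge\ \|u-w\|_1^2 .
\]
It is obtained by adding the lower bound in \eqref{eq:assum:relative} at $(u,w)$ and at $(w,u)$.

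For \eqref{eq:appen:inner_stab_theta}, I would write the variational inequality for $w_a$ tested at $u=w_b$ and the one for $w_b$ tested at $u=w_a$, and add them. This gives $\eta\langle\nabla D(w_a)-\nabla D(w_b),w_a-w_b\rangle\le\langle a-b,w_a-w_b\rangle$. Bounding the left side below by $\eta\|w_a-w_b\|_1^2$ and the right side above by $\|a-b\|_\infty\|w_a-w_b\|_1$ (H\"older) yields the claim after dividing by $\|w_a-w_b\|_1$.

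For \eqref{eq:appen:inner_stab_eta}, I would set $w=M_\eta(a)$ and $w'=M_{\eta'}(a)$ and use value comparisons instead of the variational inequality. Optimality of $w'$ at level $\eta'$ gives $\langle a,w'\rangle-\eta'D(w')\ge\langle a,w\rangle-\eta'D(w)$. For the objective at level $\eta$, the maximizer $w$ comes with a quadratic growth bound. Combining the variational inequality at $w$ with the strong convexity $D(w')\ge D(w)+\langle\nabla D(w),w'-w\rangle+\tfrac12\|w'-w\|_1^2$ (again from \eqref{eq:assum:relative} and $B_\phi\ge\tfrac12\|\cdot\|_1^2$) gives
\[
\langle a,w\rangle-\eta D(w)\ \ge\ \langle a,w'\rangle-\eta D(w')+\tfrac{\eta}{2}\|w'-w\|_1^2 .
\]
Adding this to the first comparison cancels the linear terms and leaves $\tfrac{\eta}{2}\|w'-w\|_1^2\le(\eta'-\eta)\{D(w)-D(w')\}$. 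The second inequality in \eqref{eq:appen:inner_stab_eta} is immediate when $D\ge0$. For general bounded $D$, I would read $\sup_{\Delta_m}D$ as the oscillation, and note that the same display already forces $D(w)\ge D(w')$.

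The remaining claims are short. Continuity in $\eta$ follows from \eqref{eq:appen:inner_stab_eta}, since $\eta$ stays bounded below on compact subsets of $(0,\infty)$ and the right side vanishes as $\eta'\downarrow\eta$; the case $\eta'<\eta$ is symmetric. Continuity in $\Theta$ follows by composing \eqref{eq:appen:inner_stab_theta} with $L(\cdot)$. For $M_\eta(c\mathbf 1)=w^{\mathrm{ref}}$, note that $\langle c\mathbf 1,w\rangle=c$ is constant on $\Delta_m$. The problem therefore reduces to minimizing $D$, whose unique minimizer is $w^{\mathrm{ref}}$ because $D\ge0=D(w^{\mathrm{ref}})$ and $D$ is strictly convex.

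The main obstacle I expect is boundary behavior. For KL-type $D$, $\nabla D$ may blow up on the relative boundary, so the variational inequality has to be justified there. I would handle this either by noting that the maximizer lies in the relative interior when $\phi$ is of Legendre type, or by replacing gradients with subgradient and normal-cone conditions, using one-sided directional derivatives along segments in $\Delta_m$.
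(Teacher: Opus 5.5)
Your proposal is correct and follows the paper's proof essentially step for step. Both arguments use the strong monotonicity of $\nabla D$ obtained by adding the relative-convexity bound at $(u,w)$ and $(w,u)$, then the two summed first-order conditions plus H\"older for \eqref{eq:appen:inner_stab_theta}, and the quadratic growth at $M_\eta(a)$ combined with the value comparison at level $\eta'$ for \eqref{eq:appen:inner_stab_eta}. Your caveat that the final bound by $(\eta'-\eta)\sup_{\Delta_m}D$ needs $D\ge0$ (or $\sup_{\Delta_m}D$ read as the oscillation of $D$) is a correct sharpening: the paper's proof passes over that step, and the lemma is only applied where $D\ge0$ holds.
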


\begin{proof}
By Assumption~\ref{ass:w_regularizer_mirror}, adding the relative-convexity inequality at the pairs $(u,w)$ and $(w,u)$ and using $B_\phi(u,w)+B_\phi(w,u)\ge\|u-w\|_1^2$ gives the strong monotonicity property
\begin{equation}
\label{eq:appen:D_monotone}
\bigl\langle\nabla D(u)-\nabla D(w),\,u-w\bigr\rangle\ \ge\ \|u-w\|_1^2,
\qquad u,w\in\Delta_m .
\end{equation}
\emph{Proof of \eqref{eq:appen:inner_stab_theta}.} Write $w:=M_\eta(a)$ and $w':=M_\eta(b)$. The first-order optimality conditions for the two maximizations give
\[
\bigl\langle a-\eta\nabla D(w),\,w'-w\bigr\rangle\le0,
\qquad
\bigl\langle b-\eta\nabla D(w'),\,w-w'\bigr\rangle\le0 .
\]
Adding the two inequalities and rearranging,
\[
\eta\,\bigl\langle\nabla D(w)-\nabla D(w'),\,w-w'\bigr\rangle
\ \le\ \bigl\langle a-b,\,w-w'\bigr\rangle
\ \le\ \|a-b\|_\infty\,\|w-w'\|_1 ,
\]
and \eqref{eq:appen:inner_stab_theta} follows from \eqref{eq:appen:D_monotone}.

\emph{Proof of \eqref{eq:appen:inner_stab_eta}.} Write $\psi_\eta(w):=\langle w,a\rangle-\eta D(w)$, $w:=M_\eta(a)$ and $w'':=M_{\eta'}(a)$. Since $\psi_\eta$ is, by Assumption~\ref{ass:w_regularizer_mirror}, $\eta$-strongly concave relative to $\phi$ and maximized at $w$, the relative-concavity inequality at $(w'',w)$ together with the optimality of $w$ gives
\[
\psi_\eta(w'')\ \le\ \psi_\eta(w)-\eta B_\phi(w'',w)\ \le\ \psi_\eta(w)-\frac{\eta}{2}\|w''-w\|_1^2 .
\]
On the other hand, $\psi_\eta=\psi_{\eta'}+(\eta'-\eta)D$ and the optimality of $w''$ for $\psi_{\eta'}$ give
\[
\psi_\eta(w'')=\psi_{\eta'}(w'')+(\eta'-\eta)D(w'')\ \ge\ \psi_{\eta'}(w)+(\eta'-\eta)D(w'')
=\psi_\eta(w)+(\eta'-\eta)\bigl\{D(w'')-D(w)\bigr\}.
\]
Combining the two displays yields \eqref{eq:appen:inner_stab_eta}. The continuity claims are immediate consequences.
Finally, $M_\eta(c\mathbf 1)=\argmax{w\in\Delta_m}\{c-\eta D(w)\}=\argmin{w\in\Delta_m}D(w)=w^{\mathrm{ref}}$, the minimizer being unique because $B_D\ge B_\phi>0$ at distinct arguments.
\end{proof}

\begin{remark}[Order of the certificate-optimal $\lambda$]
\label{rem:appen:lambda_order}
Minimizing the right-hand side of \eqref{eq:prior_gain_explicit} over $\lambda>0$ gives, by the AM--GM inequality,
\[
\inf_{\lambda>0}\Bigl\{\frac{H(\Theta^*_{\mathrm{ref}})^2}{8\lambda}+\lambda\rho\Bigr\}
=H(\Theta^*_{\mathrm{ref}})\sqrt{\rho/2},
\qquad\text{attained at}\qquad
\lambda^*=\frac{H(\Theta^*_{\mathrm{ref}})}{\sqrt{8\rho}}\;\asymp\;\rho^{-1/2}.
\]
This inverse square-root scaling agrees with the radius-matching characterization of part (iii), and is not tied to any particular divergence. Indeed, let $c$ be the mid-range of $L(\Tsl)$, so that $\|L(\Tsl)-c\mathbf 1\|_\infty=H(\Tsl)/2$. Comparing the value of the inner problem at $\wsl=M_\lambda(L(\Tsl))$ with its value at $w^{\mathrm{ref}}$, where $D$ vanishes, gives $\lambda D(\wsl)\le\langle L(\Tsl)-c\mathbf 1,\,\wsl-w^{\mathrm{ref}}\rangle$, and combining H\"older's inequality with the stability bound $\|\wsl-w^{\mathrm{ref}}\|_1=\|M_\lambda(L(\Tsl))-M_\lambda(c\mathbf 1)\|_1\le H(\Tsl)/(2\lambda)$ of Lemma~\ref{lem:appen:inner_stability} sharpens the linear bound \eqref{eq:appen:D_H_bound} to the quadratic bound
\[
D(\wsl)\ \le\ \frac{H(\Tsl)^2}{4\lambda^2};
\]
for the KL regularizer, the soft-max identity $\wsl=\mathcal S(L(\Tsl))$ and Lemma~\ref{lem:appen:softmax_stability} improve the constant to $H(\Tsl)^2/(8\lambda^2)$. An interior certificate-optimal $\lambda$, which satisfies $D(\wsl)=\rho$ by \eqref{eq:appen:radius_matching}, must therefore obey $\lambda\le H(\Tsl)/(2\sqrt\rho)$, matching the order of $\lambda^*$ above. At the statistically calibrated radius $\rho\asymp1/n$, both give $\lambda\asymp\sqrt n$, in agreement with Corollary~\ref{cor:fdiv}.
\end{remark}

\begin{proof}[Proof of Theorem~\ref{thm:prior_benefit_gdro_full}]
\textbf{Part (i).}
Fix $\Theta\in\Xi(r)$. For any $q\in\Delta_m$ with $D(q)\le\rho$,
\[
    \sum_{k=1}^m q_k L_k(\Theta)
    =
    \left\{
        \sum_{k=1}^m q_k L_k(\Theta)-\lambda D(q)
    \right\}
    +
    \lambda D(q)
    \le
    \Phi_\lambda(\Theta)+\lambda\rho .
\]
Taking the supremum over all such $q$ gives the Lagrangian upper bound
\begin{equation}
        \mathcal R_\rho(\Theta)
    \le
    \Phi_\lambda(\Theta)+\lambda\rho
    \qquad\text{for every }\Theta .
    \label{eq:proof_lagrange_upper}
\end{equation}
Applying \eqref{eq:proof_lagrange_upper} at $\Theta=\Tsl$ and subtracting $\mathcal R_\rho(\Theta^*_{\mathrm{G}})$ yields the first inequality in \eqref{eq:prior_gain_explicit}:
\[
    \mathcal R_\rho(\Tsl)-\mathcal R_\rho(\Theta^*_{\mathrm{G}})
    \le
    J_\rho(\lambda)-\mathcal R_\rho(\Theta^*_{\mathrm{G}})
    =
    C_\rho(\lambda).
\]

We next bound $C_\rho(\lambda)$ from above by the more interpretable expression on the right-hand side of \eqref{eq:prior_gain_explicit}. Since $D\ge0$ and $D(w^{\mathrm{ref}})=0$, the point $w^{\mathrm{ref}}$ minimizes $D$ over $\Delta_m$; and since $w^{\mathrm{ref}}\in\operatorname{relint}(\Delta_m)$, for every $w\in\Delta_m$ the point $w^{\mathrm{ref}}+t(w-w^{\mathrm{ref}})$ lies in $\Delta_m$ for all $t$ in a neighborhood of $0$, so the directional derivative of $D$ at $w^{\mathrm{ref}}$ vanishes in every direction $w-w^{\mathrm{ref}}$:
\[
    \bigl\langle \nabla D(w^{\mathrm{ref}}),\,w-w^{\mathrm{ref}}\bigr\rangle=0,
    \qquad
    \forall w\in\Delta_m .
\]
Consequently, by Assumption~\ref{ass:w_regularizer_mirror},
\begin{equation}
    D(w)
    =
    D(w)-D(w^{\mathrm{ref}})
    -
    \bigl\langle \nabla D(w^{\mathrm{ref}}),\,w-w^{\mathrm{ref}}\bigr\rangle
    \ge
    B_\phi(w,w^{\mathrm{ref}})
    \ge
    \frac12\|w-w^{\mathrm{ref}}\|_1^2 .
    \label{eq:D_controls_l1}
\end{equation}
Next, because $w,w^{\mathrm{ref}}\in\Delta_m$, we may center the risk vector at the mid-range value $c_{\mathrm{ref}}:=\tfrac12\{\max_kL_k(\Theta^*_{\mathrm{ref}})+\min_kL_k(\Theta^*_{\mathrm{ref}})\}$ without changing the inner product:
\begin{equation}
\label{eq:appen:midrange}
\begin{aligned}
    (w-w^{\mathrm{ref}})^\top L(\Theta^*_{\mathrm{ref}})
    &=(w-w^{\mathrm{ref}})^\top\bigl\{L(\Theta^*_{\mathrm{ref}})-c_{\mathrm{ref}}\mathbf 1\bigr\}\\
    &\le
    \bigl\|L(\Theta^*_{\mathrm{ref}})-c_{\mathrm{ref}}\mathbf 1\bigr\|_\infty\|w-w^{\mathrm{ref}}\|_1
    =\frac{H(\Theta^*_{\mathrm{ref}})}{2}\,\|w-w^{\mathrm{ref}}\|_1 .
\end{aligned}
\end{equation}
Combining \eqref{eq:D_controls_l1} and \eqref{eq:appen:midrange},
\[
\begin{aligned}
    \Phi_\lambda(\Theta^*_{\mathrm{ref}})
    &=
    \sup_{w\in\Delta_m}
    \left\{
        (w^{\mathrm{ref}})^\top L(\Theta^*_{\mathrm{ref}})
        +
        (w-w^{\mathrm{ref}})^\top L(\Theta^*_{\mathrm{ref}})
        -
        \lambda D(w)
    \right\} \\
    &\le
    (w^{\mathrm{ref}})^\top L(\Theta^*_{\mathrm{ref}})
    +
    \sup_{t\ge0}
    \left\{
        \frac{H(\Theta^*_{\mathrm{ref}})}{2}\,t-\frac{\lambda}{2}\,t^2
    \right\}
    =
    (w^{\mathrm{ref}})^\top L(\Theta^*_{\mathrm{ref}})
    +
    \frac{H(\Theta^*_{\mathrm{ref}})^2}{8\lambda}.
\end{aligned}
\]
Since $\Tsl$ minimizes $\Phi_\lambda$ over $\Xi(r)$ and $\Theta^*_{\mathrm{ref}}\in\Xi(r)$,
\[
    J_\rho(\lambda)
    =
    \Phi_\lambda(\Tsl)+\lambda\rho
    \le
    \Phi_\lambda(\Theta^*_{\mathrm{ref}})+\lambda\rho
    \le
    (w^{\mathrm{ref}})^\top L(\Theta^*_{\mathrm{ref}})
    +
    \frac{H(\Theta^*_{\mathrm{ref}})^2}{8\lambda}
    +
    \lambda\rho .
\]
Moreover, $w^{\mathrm{ref}}$ is feasible for the supremum defining $\mathcal R_\rho$ because $D(w^{\mathrm{ref}})=0\le\rho$, so
$\mathcal R_\rho(\Theta^*_{\mathrm{G}})\ge(w^{\mathrm{ref}})^\top L(\Theta^*_{\mathrm{G}})$. Therefore
\[
    C_\rho(\lambda)
    =
    J_\rho(\lambda)-\mathcal R_\rho(\Theta^*_{\mathrm{G}})
    \le
    (w^{\mathrm{ref}})^\top\bigl\{L(\Theta^*_{\mathrm{ref}})-L(\Theta^*_{\mathrm{G}})\bigr\}
    +
    \frac{H(\Theta^*_{\mathrm{ref}})^2}{8\lambda}
    +
    \lambda\rho
    =
    -\Delta_{\mathrm{ref}}
    +
    \frac{H(\Theta^*_{\mathrm{ref}})^2}{8\lambda}
    +
    \lambda\rho ,
\]
which proves \eqref{eq:prior_gain_explicit}. The strict-improvement statement follows immediately.

\medskip\noindent\textbf{Part (ii).}
For fixed $\Theta$, the problem defining $\mathcal R_\rho(\Theta)$ maximizes a linear objective over the compact convex set $\{q\in\Delta_m:D(q)\le\rho\}$, whose Slater point is $w^{\mathrm{ref}}$ since $D(w^{\mathrm{ref}})=0<\rho$. Strong Lagrange duality for convex programs \citep[e.g.,][Section~7.1]{shapiro2021lectures} therefore gives
\[
    \mathcal R_\rho(\Theta)
    =
    \inf_{\eta\ge0}
    \left\{
        \sup_{w\in\Delta_m}
        \Bigl[
            \sum_{k=1}^m w_kL_k(\Theta)-\eta D(w)
        \Bigr]
        +
        \eta\rho
    \right\}
    =
    \inf_{\eta\ge0}
    \{\Phi_\eta(\Theta)+\eta\rho\}.
\]
Taking the infimum over $\Theta\in\Xi(r)$ and exchanging the order of the two infima,
\[
    V_\rho
    =
    \inf_{\Theta\in\Xi(r)}\inf_{\eta\ge0}
    \{\Phi_\eta(\Theta)+\eta\rho\}
    =
    \inf_{\eta\ge0}
    \inf_{\Theta\in\Xi(r)}
    \{\Phi_\eta(\Theta)+\eta\rho\}
    =
    \inf_{\eta\ge0}J_\rho(\eta).
\]
Subtracting the constant $\mathcal R_\rho(\Theta^*_{\mathrm{G}})$ yields the second identity in \eqref{eq:dual_radius_oracle}, and $V_\rho\le\mathcal R_\rho(\Theta^*_{\mathrm{G}})$ because $\Theta^*_{\mathrm{G}}\in\Xi(r)$ is feasible for the infimum defining $V_\rho$.

\medskip\noindent\textbf{Part (iii).}
Fix $0<\lambda<\lambda'$. We establish the two-sided envelope inequality
\begin{equation}
\label{eq:appen:envelope_sandwich}
(\lambda'-\lambda)\,\bigl\{\rho-D\bigl(w_\lambda(\Theta^*_{\lambda'})\bigr)\bigr\}
\ \le\ J_\rho(\lambda')-J_\rho(\lambda)
\ \le\ (\lambda'-\lambda)\,\bigl\{\rho-D\bigl(w_{\lambda'}(\Tsl)\bigr)\bigr\}.
\end{equation}
For the upper bound, evaluate $\Phi_{\lambda'}$ at its own maximizer: for any $\Theta$,
\[
\Phi_{\lambda'}(\Theta)
=\bigl\langle w_{\lambda'}(\Theta),L(\Theta)\bigr\rangle-\lambda D\bigl(w_{\lambda'}(\Theta)\bigr)-(\lambda'-\lambda)D\bigl(w_{\lambda'}(\Theta)\bigr)
\le\Phi_\lambda(\Theta)-(\lambda'-\lambda)D\bigl(w_{\lambda'}(\Theta)\bigr).
\]
Taking $\Theta=\Tsl$ and using $J_\rho(\lambda')\le\Phi_{\lambda'}(\Tsl)+\lambda'\rho$ together with $\Phi_\lambda(\Tsl)=J_\rho(\lambda)-\lambda\rho$ gives the upper bound in \eqref{eq:appen:envelope_sandwich}. Symmetrically, for any $\Theta$,
\[
\Phi_\lambda(\Theta)
=\bigl\langle w_{\lambda}(\Theta),L(\Theta)\bigr\rangle-\lambda' D\bigl(w_{\lambda}(\Theta)\bigr)+(\lambda'-\lambda)D\bigl(w_{\lambda}(\Theta)\bigr)
\le\Phi_{\lambda'}(\Theta)+(\lambda'-\lambda)D\bigl(w_{\lambda}(\Theta)\bigr);
\]
taking $\Theta=\Theta^*_{\lambda'}$ and using $J_\rho(\lambda)\le\Phi_\lambda(\Theta^*_{\lambda'})+\lambda\rho$ and $\Phi_{\lambda'}(\Theta^*_{\lambda'})=J_\rho(\lambda')-\lambda'\rho$ gives the lower bound.

Now let $\lambda'\downarrow\lambda$ (the case $\lambda'\uparrow\lambda$ is symmetric). By Lemma~\ref{lem:appen:inner_stability}, $w_{\lambda'}(\Tsl)\to w_\lambda(\Tsl)=\wsl$ by \eqref{eq:appen:inner_stab_eta}; and by the assumed convergence $\Theta^*_{\lambda'}\to\Tsl$, the continuity of $L(\cdot)$, and \eqref{eq:appen:inner_stab_theta}, also $w_{\lambda}(\Theta^*_{\lambda'})\to \wsl$. Since $D$ is continuous, dividing \eqref{eq:appen:envelope_sandwich} by $\lambda'-\lambda$ and letting $\lambda'\to\lambda$ shows that $J_\rho$ is differentiable at $\lambda$ with $J_\rho'(\lambda)=\rho-D(\wsl)$; the same holds for $C_\rho$, which differs from $J_\rho$ by a constant. This proves \eqref{eq:appen:radius_matching}, and at an interior minimizer of $C_\rho$ the derivative vanishes, forcing $D(\wsl)=\rho$.

It remains to prove \eqref{eq:appen:D_H_bound}. By optimality of $\wsl$ in the inner problem and $D(w^{\mathrm{ref}})=0$,
\[
    \sum_{k=1}^m w^*_{\lambda,k}L_k(\Tsl)
    -
    \lambda D(\wsl)
    \ge
    \sum_{k=1}^m w_k^{\mathrm{ref}}L_k(\Tsl)-\lambda D(w^{\mathrm{ref}})
    =
    \sum_{k=1}^m w_k^{\mathrm{ref}}L_k(\Tsl) .
\]
Rearranging, and using that the difference between the expectations of the vector $L(\Tsl)$ under two probability weights is at most its range,
\[
    \lambda D(\wsl)
    \le
    \sum_{k=1}^m
    (w^*_{\lambda,k}-w_k^{\mathrm{ref}})L_k(\Tsl)
    \le
    H(\Tsl).
\]
Dividing by $\lambda$ proves \eqref{eq:appen:D_H_bound}. Finally, if $\lambda\rho>H(\Tsl)$, then $D(\wsl)\le H(\Tsl)/\lambda<\rho$, so \eqref{eq:appen:radius_matching} gives $C_\rho'(\lambda)=\rho-D(\wsl)>0$: increasing $\lambda$ locally increases the certified worst-case risk and moves the penalized problem away from the radius-$\rho$ target. This completes the proof.
\end{proof}

\section{Details of the simulation studies}
\label{appen:sim_details}

This appendix records the full specification of the experiments of
Section~\ref{sec:sim}. Table~\ref{tab:sim_design} collects every design parameter;
the text below gives the generating constants, the population target, the
initialization and tuning, and the three evaluation protocols.

\subsection{Data-generating mechanism}
\label{appen:subsec:sim_dgp}

All experiments use \eqref{eq:sim_dgp}. The shared latent center is drawn once, as
$U_0\in\RR^{d_1\times r^*}$ and $V_0\in\RR^{d_2\times r^*}$ with independent
$\mathcal N(0,0.3^2)$ entries; the source-specific perturbations
$\Delta^U_k,\Delta^V_k$ have independent $\mathcal N(0,0.15^2)$ entries. Every source
is therefore driven by the same $r^*$-dimensional latent representation while
differing from the others. The resulting heterogeneity is substantial: in the design
of Section~\ref{subsec:sim_pred} the relative separation
$\|\Theta_j-\Theta_k\|_F/\|\Theta_k\|_F$ lies between $0.78$ and $0.86$, so that
pooling across sources is informative yet no single $\Theta$ is simultaneously optimal
for all of them. The covariate means $\mu_k$, covariances $\Sigma_k$ and noise levels
$\sigma_k$ are listed in Table~\ref{tab:sim_design}; in the sweep over $m$, the first
three sources are fixed templates and each further source draws its own
$(\mu_k,\Sigma_k,\sigma_k)$ from the stated ranges, together with its own $\Theta_k$
from the same perturbation model.

Two conventions for the source loss appear in Section~\ref{sec:sim}. Writing
\begin{equation}
\label{eq:sim_loss}
    \Lkh(\Theta)
    =\frac{1}{2n_kd_2}
    \bigl\{\|Y\uk-X\uk\Theta\|_F^2-b_k\bigr\},
    \qquad
    b_k\in\{0,\ \|Y\uk\|_F^2\},
\end{equation}
the offset $b_k$ is a source-specific constant that leaves $\nabla_\Theta\Lkh$
unchanged but shifts the levels of the loss vector, and hence moves both $\wsl$
and $\Tsl$. The population illustration and the predictive comparisons take
$b_k=\|Y\uk\|_F^2$, giving \eqref{eq:sim_pop_risk_main} in the main text, in which the
noise variance cancels; the two estimation-error sweeps take $b_k=0$, so that $\Lkh$
is the mean squared error and the population risk carries the additive constant
$d_2\sigma_k^2$ of \eqref{eq:sim_pop_risk} below. Each experiment computes its own
population target under its own convention, so comparisons within an experiment are
internally consistent.

\subsection{Population target}
\label{appen:subsec:sim_pop}

Because \eqref{eq:sim_dgp} is Gaussian, the population risks are available in closed
form. With $A_k=\Sigma_k+\mu_k\mu_k^\top$,
\begin{equation}
\label{eq:sim_pop_risk}
    L_k(\Theta)
    =\frac{1}{2d_2}
    \Bigl\{
    \operatorname{tr}\bigl[(\Theta_k-\Theta)^\top A_k(\Theta_k-\Theta)\bigr]+c_k
    \Bigr\},
    \qquad
    c_k=
    \begin{cases}
    d_2\sigma_k^2, & b_k=0,\\[2pt]
    -\operatorname{tr}\bigl[\Theta_k^\top A_k\Theta_k\bigr], & b_k=\|Y\uk\|_F^2 .
    \end{cases}
\end{equation}
We obtain the population saddle point $(\Tsl,\wsl)$ by running the BGDMA
updates on the exact gradients of \eqref{eq:sim_pop_risk}, with a small constant step
size and a tight stopping tolerance, and use the resulting $\Tsl$ as the estimation
target in Section~\ref{subsec:sim_rate}. For Figure~\ref{fig:sim_population}(a) the
reference saddle is computed once at $\lambda=1$, to tolerance $10^{-11}$, and shared
across all four step sizes, so that the four trajectories are directed at a common
limit and their slopes are comparable. For Figure~\ref{fig:sim_population}(b) the
population problem is re-solved at each of the sixteen values of $\lambda$.

\subsection{Initialization, step sizes and benchmark tuning}
\label{appen:subsec:sim_tuning}

Following Section~\ref{subsec:init}, BGDMA is warm-started at the pooled ridge
estimate $\Theta^0=(X^\top X+0.01I_{d_1})^{-1}X^\top Y$, followed by the balanced
rank-$r$ singular value decomposition $U^0=\tilde U\tilde\Sigma^{1/2}$,
$V^0=\tilde V\tilde\Sigma^{1/2}$, together with $w^0=w^{\mathrm{ref}}$. This is a
cheap surrogate for the nuclear-norm warm start analysed in
Theorem~\ref{thm:softmax_init}: it costs one $d_1\times d_1$ solve rather than a
sequence of singular-value thresholdings, and it sufficed to enter the local basin in
every configuration we ran. Step sizes are held constant within an experiment at the
values in Table~\ref{tab:sim_design}, with no decay schedule, line search or restart,
so that the reported behaviour is that of the plain algorithm analysed in
Theorem~\ref{thm:PILLAR_convergence}.

The benchmarks are tuned as follows. \textsc{Per-task} and \textsc{SVD} use ridge
parameter $0.01$; \textsc{L21} and \textsc{Nuclear} are solved by accelerated
proximal gradient descent with penalties $0.01$ and $0.08$ respectively and step size
$1/L$ for the pooled smooth part; \textsc{BGD} and \textsc{Maximin} use gradient
descent, the latter with the same exponentiated-gradient dual update as BGDMA at
$\lambda=0$, each with its own step size and iteration budget chosen so that its
iterates stabilise. All estimators receive the same training data and, where
applicable, the same rank and the same initialization.

\subsection{Evaluation protocols}
\label{appen:subsec:sim_eval}

\emph{In-distribution.} Test data of size $n_{\mathrm{test}}=2000$ are generated from
each of the $m$ training sources through \eqref{eq:sim_dgp}, and the reward
\eqref{eq:sim_reward} is evaluated per source.

\emph{Out-of-distribution.} We generate $100$ independent environments. Each has its
own signal matrix, drawn from the perturbation model of
Appendix~\ref{appen:subsec:sim_dgp} about a center itself displaced from $(U_0,V_0)$,
an isotropic covariance $\sigma_xI_{d_1}$ with $\sigma_x\sim\mathcal U(0.8,1.5)$, a
mean vector with independent $\mathcal U(0,0.5)$ coordinates, and a noise level
$\sigma\sim\mathcal U(1.5,2.5)$; each estimator is evaluated on
$n_{\mathrm{test}}=2000$ observations per environment. Because the covariate mean and
the noise range lie outside the training configuration, none of these environments is
a mixture of the observed sources.

\emph{Continuum of mixtures.} We use 20 logarithmically spaced Dirichlet
concentrations $\alpha\in[0.1,100]$. At each level we draw $500$ weight vectors
$v\sim\mathrm{Dirichlet}(\alpha\mathbf 1_m)$, generate $n_{\mathrm{test}}=2000$
observations from $\QQ_v=\summ v_k\PP\uk$ for each, and record $\min_v R_v(\Theta)$
over the $500$ draws. The horizontal axis of Figure~\ref{fig:sim_continuum} is the
grid index rescaled to $[0,1]$, equivalently the rescaled $\log\alpha$. The entire
pipeline, including data generation and refitting of every estimator, is repeated over
$20$ independent training draws and averaged.

\begin{table}[htbp]
\centering
\footnotesize
\setlength{\tabcolsep}{3pt}
\caption{Design of the simulation experiments. All use \eqref{eq:sim_dgp} and the
loss \eqref{eq:sim_loss}; $\mathbf 1\in\RR^{d_1}$ and $\mathcal U$ denotes the uniform
distribution. In the sweep over $m$ the first three sources are the fixed templates
listed and each further source draws its own parameters from the stated ranges. Paired
entries in the last column refer to Figure~\ref{fig:sim_pred} and
Figure~\ref{fig:sim_continuum} respectively.}
\label{tab:sim_design}
\begin{tabular}{@{}llll l@{}}
\toprule
 & \textbf{Population} & \textbf{Sweep in $n$} & \textbf{Sweep in $m$} & \textbf{Prediction} \\
\midrule
Reported in
  & Fig.~\ref{fig:sim_population}
  & Fig.~\ref{fig:sim_rate}(a)
  & Fig.~\ref{fig:sim_rate}(b)
  & Figs.~\ref{fig:sim_pred},~\ref{fig:sim_continuum} \\
$(d_1,d_2)$ & $(100,20)$ & $(100,20)$ & $(100,20)$ & $(200,40)$ \\
$r^*$ & $5$ & $5$ & $5$ & $5$ \\
$m$ & $3$ & $3$ & $4,6,11,20,34,59$ & $3$ \\
$n_k$ & population risks & $600$--$3600$ & $400$ & $150$; $200$ \\
$(\sigma_k)$
  & $(1,0.5,0.8)$ & $(1,0.5,0.75)$
  & $(3,2,1)$, then $\mathcal U(0.8,1.2)$ & $(2,1,1.5)$ \\
$\mu_k$
  & $(0,0.3,-0.5)\mathbf 1$ & $(0,0.3,-0.5)\mathbf 1$
  & $(0,0.5,-0.5)\mathbf 1$, then $\mathcal U(-0.3,0.3)\mathbf 1$
  & $(0,0.5,-0.5)\mathbf 1$ \\
$\Sigma_k/I_{d_1}$
  & $(1,0.8,1.2)$ & $(1,0.75,1.2)$
  & $(0.03,0.75,1)$, then $\mathcal U(0.8,1.2)$ & $(1,0.75,1.2)$ \\
Working rank $r$ & $5$ & $3,5,10$ & $3,5,10$ & $5$ \\
$\lambda$ & $1$; grid on $[0.1,100]$ & $0.1,0.5,1$ & $0.1,0.5,1$ & $0,0.1,0.5,1,5,10$ \\
$\eta_\Theta=\eta_w$ & $0.005$--$0.04$; $0.01$ & $0.008$ & $0.008$ & $0.02$ \\
Iterations & $\le 3\!\times\!10^{4}$ & $3000$ & $3000$ & $3000$ \\
Offset $b_k$ & $\|Y\uk\|_F^2$ & $0$ & $0$ & $\|Y\uk\|_F^2$ \\
\bottomrule
\end{tabular}
\end{table}

\end{document}